\newtheorem{theorem}{Theorem}[section]
\newtheorem*{theorem*}{Theorem}
\newtheorem{Claim}[theorem]{Claim}
\newtheorem*{claim*}{Claim}
\newtheorem{proposition}[theorem]{Proposition}
\newtheorem*{proposition*}{Proposition}
\newtheorem{lemma}[theorem]{Lemma}
\newtheorem*{lemma*}{Lemma}
\newtheorem*{conjecture*}{Conjecture}
\newtheorem*{fact*}{Fact}
\newtheorem*{hypothesis*}{Hypothesis}
\theoremstyle{definition}
\newtheorem{definition}[theorem]{Definition}
\newtheorem{remark}[theorem]{Remark}
\newcommand{\savehyperref}[2]{\texorpdfstring{\hyperref[#1]{#2}}{#2}}
\newcommand{\Sref}[1]{\hyperref[#1]{\S\ref*{#1}}}
\renewcommand{\mathbb}{\varmathbb} 
\renewcommand{\leq}{\leqslant}
\renewcommand{\geq}{\geqslant}
\renewcommand{\ge}{\geqslant}
\newcommand{\mper}{\,.}
\newcommand{\mcom}{\,,}
\newcommand{\paren}[1]{\left(#1 \right )}
\newcommand{\brac}[1]{[#1 ]}
\newcommand{\Brac}[1]{\left[#1\right]}
\newcommand{\set}[1]{\left\{#1\right\}}
\newcommand{\Set}[1]{\left\{#1\right\}}
\newcommand{\abs}[1]{\left\lvert#1\right\rvert}
\newcommand{\Abs}[1]{\left\lvert#1\right\rvert}
\newcommand{\norm}[1]{\left\lVert#1\right\rVert}
\newcommand{\defeq}{\stackrel{\textup{def}}{=}}
\newcommand{\inprod}[1]{\left\langle #1\right\rangle}
\newcommand{\Z}{{\mathbb Z}}
\newcommand{\N}{{\mathbb Z}_{\geq 0}}
\newcommand{\R}{\mathbb R}
\newcommand{\Q}{\mathbb Q}
\newcommand{\Esymb}{\mathbb{E}}
\newcommand{\Psymb}{\mathbb{P}}
\DeclareMathOperator*{\E}{\Esymb}
\DeclareMathOperator*{\ProbOp}{\Psymb}
\newcommand{\given}{\mathrel{}\middle|\mathrel{}}
\renewcommand{\Pr}[1]{\ProbOp\Brac{#1}}
\newcommand{\e}{\epsilon}
\definecolor{DSgray}{cmyk}{0,0,0,0.7}
\let\e\varepsilon
\newcommand{\cB}{\mathcal B}
\newcommand{\cF}{\mathcal F}
\newcommand{\cI}{\mathcal I}
\newcommand{\cL}{\mathcal L}
\newcommand{\cN}{\mathcal N}
\newcommand{\cR}{\mathcal R}
\newcommand{\cT}{\mathcal T}
\newcommand{\cV}{\mathcal V}
\newcommand{\bbC}{\mathbb C}
\newcommand{\bigO}{\mathcal{O}}
\newcommand{\bigo}[1]{\bigO\left(#1\right)}
\newcommand{\tbigO}{\tilde{\mathcal{O}}}
\newcommand{\tbigo}[1]{\tbigO\left(#1\right)}
\newcommand{\poly}{{\sf poly}}
\newcommand{\diam}[1]{~{\sf diam}\paren{#1}}
\newcommand{\radm}[1]{~{\sf rad_{min}}\paren{#1}}
\newcommand{\tmu}{\tilde{\mu}}
\newcommand{\tmuu}[2]{\tmu_{#1}^{\paren{#2}}}
\newcommand{\rij}{r_{i,j}}
\newcommand{\wmin}{\omega}
\newcommand{\simplex}{\mathcal V'}
\newcommand{\Opt}{C}
\title{On Euclidean $k$-Means Clustering with $\alpha$-Center Proximity }
\author{Amit Deshpande \\ Microsoft Research, India \\ \href{mailto:amitdesh@microsoft.com}{amitdesh@microsoft.com}
\and
Anand Louis \\ Indian Institute of Science \\ \href{mailto:anandl@iisc.ac.in}{anandl@iisc.ac.in}
\and
Apoorv Singh \\ Indian Institute of Science \\ \href{mailto:apoorvsingh@iisc.ac.in}{apoorvsingh@iisc.ac.in} }
\date{}
\begin{document}
\begin{titlepage}

\maketitle
\thispagestyle{empty}

\begin{abstract}
  $k$-means clustering is NP-hard in the worst case but previous work has shown efficient algorithms assuming the optimal $k$-means clusters are \emph{stable} under additive or multiplicative perturbation of data. This has two caveats. First, we do not know how to efficiently verify this property of optimal solutions that are NP-hard to compute in the first place. Second, the stability assumptions required for polynomial time $k$-means algorithms are often unreasonable when compared to the ground-truth clusters in real-world data. A consequence of multiplicative perturbation resilience is \emph{center proximity}, that is, every point is closer to the center of its own cluster than the center of any other cluster, by some multiplicative factor $\alpha > 1$.


  We study the problem of minimizing the Euclidean $k$-means objective only over clusterings that satisfy $\alpha$-center proximity. We give a simple algorithm to find the optimal $\alpha$-center-proximal $k$-means clustering in running time exponential in $k$ and $1/(\alpha - 1)$ but linear in the number of points and the dimension. We define an analogous $\alpha$-center proximity condition for outliers, and give similar algorithmic guarantees for $k$-means with outliers and $\alpha$-center proximity. On the hardness side we show that for any $\alpha' > 1$, there exists an $\alpha \leq \alpha'$, $(\alpha >1)$, and an $\e_0 > 0$ such that minimizing the $k$-means objective over clusterings that satisfy $\alpha$-center proximity is NP-hard to approximate within a multiplicative $(1+\e_0)$ factor.

\end{abstract}
\end{titlepage}

\newcommand{\inr}{\cR_1}
\newcommand{\outr}{\cR_2}
\newcommand{\rhoi}{\tilde{R}}
\newcommand{\suchthat}{\;:\;}

\section{Introduction}
Clustering is an important tool in any data science toolkit. Most popular clustering algorithms partition the given data into disjoint clusters by optimizing a certain global objective such as the $k$-means. The implicit assumption in doing so is that an optimal solution for this objective would recover the underlying \emph{ground truth} clustering. However, many such objectives are NP-hard to optimize in the worst case, e.g., $k$-center, $k$-median, $k$-means. Moreover, an optimal solution need not satisfy certain properties desired from the ground truth clustering, e.g., balance, stability. We highlight this problem using the example of $k$-means.

Given a set of $n$ points $X = \{x_{1}, x_{2}, \dotsc, x_{n}\}$ in a metric space with the underlying metric $\text{dist}(\cdot, \cdot)$, and a positive integer $k$, the $k$-means objective is to find centers $\mu_{1}, \mu_{2}, \dotsc, \mu_{k}$ in the given metric space so as to minimize the sum of squared distances of all the points to their nearest centers, respectively, i.e.,
minimize $\sum_{i=1}^{n} \min_{1 \leq j \leq k} \text{dist}(x_{i}, \mu_{j})^{2}$. This results in clusters $C_{1}, C_{2}, \dotsc, C_{k}$, where the cluster $C_{j}$ consists of all the points $x_{i}$ whose nearest center is $\mu_{j}$.
Optimization of the $k$-means objective is NP-hard in the worst case, even when for Euclidean $k$-means with $k=2$ \cite{AloiseDHP2009,DasguptaF2009} or $d=2$ \cite{MahajanNV2012}. There are known worst-case instances where the popular Lloyd's algorithm for $k$-means takes exponentially many iterations for convergence to a local optimum \cite{Vattani2011}.
On the algorithmic side, $k$-means++ initialization gives $O(\log k)$-approximation, and there are known $O(1)$-approximations in time $\text{poly}(n, k, d)$ \cite{JainVazirani} and $(1+\e)$-approximations for the Euclidean $k$-means in time $O(2^{\text{poly}(k/\e)} \cdot nd)$ \cite{KumarSS04,Chen09}.
Euclidean $k$-means is known to be NP-hard to approximate within some fixed constant $c > 1$ \cite{AwasthiCKS15}, hence, the exponential dependence on $k$ is necessary for $(1+\e)$-approximation.

In practice, however, Lloyd's algorithm, $k$-means++, and their variants perform well on most real-world data sets. This dichotomy between theoretical intractability and empirically observed efficiency has lead to the CDNM thesis \cite{Ben-David15}: Clustering is difficult only when it does not matter! In most real-world data sets, the underlying ground-truth clustering is unambiguous and stable under small perturbations of data. As a consequence, the ground-truth clustering satisfies \emph{center proximity}, that is, every point is closer to the center of its own cluster than the center of any other cluster, by some multiplicative factor $\alpha > 1$.
Thus, center proximity is a desirable property for the output of any clustering algorithm used in practice. Balanced clusters of size $\Omega(n/k)$ is another desirable property to avoid small, meaningless clusters. Motivated by this, we study the problem of minimizing the $k$-means objective, where the minimization is only over clusterings that satisfy $\alpha$-center proximity and are balanced.

\begin{definition}[$\alpha$-Center Proximity]\label{def:center-prox}
Let $C_{1}, C_{2}, \dotsc, C_{k}$ be a clustering of $X$ with the centers $\mu_{1}, \mu_{2}, \dotsc, \mu_{k}$ and the underlying metric $\text{dist}(\cdot, \cdot)$. We say that the clustering $C_{1}, C_{2}, \dotsc, C_{k}$ of $X$  satisfies $\alpha$-center proximity if for all $i \neq j$ and $x \in C_i$, we have $\text{dist}(x, \mu_{j}) > \alpha~ \text{dist}(x, \mu_{i})$.
We say that a $k$-clustering is {\em $\alpha$-center proximal} if it satisfies the $\alpha$-center proximity property.
\end{definition}

Unlike previous work, we do not assume that an optimal solution for the $k$-means objective on the given input satisfies $\alpha$-center proximity. There is no easy way to algorithmically verify this promise. In fact, as we show later in this paper, there are instances where the optimal $k$-means solutions satisfy $\alpha$-center proximity, for a small constant $\alpha > 1$, but the $k$-means problem still remains NP-hard. Therefore, we define our problem as finding a clustering of the smallest $k$-means cost among all the clusterings that satisfy $\alpha$-center proximity.

Given any set of cluster centers, the outliers are the last few points when we order all the points in non-decreasing order of their distances to the nearest centers, respectively. For the set of outliers to be unambiguous and stable under small perturbations to the input data, intuitively one needs a multiplicative gap between the distances of inliers and outliers to their respective centers. So we define an analogous center proximity property for clustering with outliers as follows.

\begin{definition}[$\alpha$-Center Proximity with Outliers]\label{def:alpha_ocp}
Consider a $k$-means instance on the set of points $X$ with underlying metric $\text{dist}(\cdot, \cdot)$, and an integer parameter $z$. We define the distance of a point $x$ to a set or a tuple $(\mu_1,\ldots,\mu_n)$ as $\min_{i} \norm{x - \mu_i}$.
Given any centers $\mu_{1}, \mu_{2}, \dotsc, \mu_{k}$, let $Z \subseteq X$ be the subset of the farthest $z$ points in $X$ based on their distances to the nearest center, and let $C_{1}, C_{2}, \dotsc, C_{k}$ be the clustering of $X \setminus Z$ where $C_{j}$ consists of the points in $X \setminus Z$ that have $\mu_{j}$ as their nearest center.
Such a clustering $C_{1}, C_{2}, \dotsc, C_{k}$ of $X \setminus Z$ with $Z$ as outliers satisfies $\alpha$-center proximity, if for all $i \neq j$ and $x \in C_i$, we have $\text{dist}(x, \mu_{j}) > \alpha~ \text{dist}(x, \mu_{i})$, and moreover, for all $i, j \in [k]$, $x \in C_{i}$ and $y \in Z$ we have $\text{dist}(y, \mu_{j}) > \alpha~ \text{dist}(x, \mu_{i})$.
\end{definition}

\subsection{Our results}

We show that there exists a constant $c > 1$ such that $k$-means remains NP-hard to approximate within a multiplicative factor $c$, even on the instances where an optimal $k$-means solution satisfies $\alpha$-center proximity and is balanced, i.e., each optimal cluster has size $\Omega(n/k)$. Moreover, for $\alpha$ close to $1$, there may not be a unique optimal $k$-means solution that satisfies $\alpha$-center proximity and is balanced. In fact, given any $\alpha > 1$, there exists an instance with $2^{\Omega\left(k/(\alpha - 1)\right)}$ such optimal $k$-means solutions that satisfy $\alpha$-center proximity and are balanced.

For any $\alpha > 1$, we show an interesting geometric property (\prettyref{prop:geom}) for clusterings that satisfy $\alpha$-center proximity, namely, any pair of disjoint clusters must lie inside two disjoint balls. The centers of these balls need not be at the means of the clusters, allowing the clusters to be arbitrarily large (see \prettyref{fig:amps}). The degenerate case for $\alpha = 1$ is two balls of infinite radii touching at their separating hyperplane.

We show the following algorithmic result for minimizing the $k$-means objective over the clusterings that are balanced and satisfy $\alpha$-center proximity.
\begin{theorem}[]
\label{thm:mpr}
For any $\alpha > 1$, a balance parameter $\omega > 0$, and given any set of $n$ points in $\R^{d}$ , we can \emph{exactly} find a clustering of the least $k$-means cost among all solutions that satisfy $\alpha$-center proximity and are balanced, i.e., each cluster has size at least $\omega n/k$. Our algorithm finds such an optimal clustering in time $O(2^{\text{poly}(k/\omega(\alpha - 1))}~ nd)$, with constant probability.
\end{theorem}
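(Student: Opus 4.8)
The plan is: (1) observe that the optimal feasible clustering is pinned down by its $k$ centroids, (2) estimate those centroids accurately by uniform random sampling, and (3) recover the whole clustering by a single nearest-centroid assignment. Fix an optimal clustering $C_1^\ast,\dots,C_k^\ast$ that is balanced and $\alpha$-center proximal, with centroids $\mu_1^\ast,\dots,\mu_k^\ast$ and $k$-means cost $\OPT$. The structural point we use is that $\alpha$-center proximity bounds the radius of each cluster around its centroid in terms of the inter-centroid distances: for $x\in C_i^\ast$ and $j\neq i$, combining $\norm{x-\mu_j^\ast}>\alpha\norm{x-\mu_i^\ast}$ with $\norm{x-\mu_j^\ast}\le\norm{x-\mu_i^\ast}+\norm{\mu_i^\ast-\mu_j^\ast}$ gives $(\alpha-1)\norm{x-\mu_i^\ast}<\norm{\mu_i^\ast-\mu_j^\ast}$. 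Hence, writing $\delta_i:=\min_{j\neq i}\norm{\mu_i^\ast-\mu_j^\ast}$, every point of $C_i^\ast$ lies within $\delta_i/(\alpha-1)$ of $\mu_i^\ast$ (cf.\ \prettyref{prop:geom}), so the mean squared radius $\sigma_i^2:=\frac1{|C_i^\ast|}\sum_{x\in C_i^\ast}\norm{x-\mu_i^\ast}^2$ is at most $\delta_i^2/(\alpha-1)^2$.

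To estimate the centroids, set $t=\Theta(k/(\alpha-1)^4)$ and draw a uniform random sample $S\subseteq X$ of size $m=\Theta(kt/\omega)=\Theta(k^2/(\omega(\alpha-1)^4))$. Since each optimal cluster has at least $\omega n/k$ points, a Chernoff bound and a union bound over $i\in[k]$ show that with constant probability $S$ contains at least $t$ points of every $C_i^\ast$. Conditioned on this, consider the labeling of $S$ that agrees with $C_1^\ast,\dots,C_k^\ast$; its $i$-th part is a set of at least $t$ uniform samples from $C_i^\ast$, so by a standard random-sampling bound for centroids (Inaba--Katoh--Imai), Markov's inequality, and a union bound, with constant probability the part-centroids $\hat\mu_1,\dots,\hat\mu_k$ satisfy $\norm{\hat\mu_i-\mu_i^\ast}^2\le O(k)\,\sigma_i^2/t\le O(k)\,\delta_i^2/(t(\alpha-1)^2)$ for all $i$. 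Adjusting constants, both events hold together with probability at least $1/2$; putting $\beta:=\sqrt{O(k)/t}/(\alpha-1)$, this gives $\norm{\hat\mu_i-\mu_i^\ast}\le\beta\delta_i\le\beta\norm{\mu_i^\ast-\mu_j^\ast}$ for every $j\neq i$.

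The crux is to show these estimates are good enough that the nearest-centroid (Voronoi) partition of all of $X$ with respect to $\hat\mu_1,\dots,\hat\mu_k$ is exactly $C_1^\ast,\dots,C_k^\ast$. Fix $x\in C_i^\ast$, $j\neq i$, and write $D:=\norm{x-\mu_i^\ast}$ and $M:=\norm{\mu_i^\ast-\mu_j^\ast}$, so $D<M/(\alpha-1)$ and $\norm{\hat\mu_i-\mu_i^\ast},\norm{\hat\mu_j-\mu_j^\ast}\le\beta M$. Using $\norm{x-\hat\mu_i}\le D+\beta M$ together with the two lower bounds $\norm{x-\hat\mu_j}>\alpha D-\beta M$ (from $\alpha$-center proximity and the triangle inequality) and $\norm{x-\hat\mu_j}\ge\norm{\hat\mu_i-\hat\mu_j}-\norm{x-\hat\mu_i}>(1-3\beta)M-D$ (since the estimated centroids are still nearly as far apart as the true ones), the first bound yields $\norm{x-\hat\mu_i}<\norm{x-\hat\mu_j}$ whenever $D>2\beta M/(\alpha-1)$ and the second whenever $D<(1-4\beta)M/2$. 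These two ranges cover all $D\ge 0$ as long as $\beta<(\alpha-1)/(4\alpha)$, which holds once $t=\Omega(k/(\alpha-1)^4)$. I expect this step to be the main obstacle: one must check that a single choice of $t$ depending only on $k$ and $\alpha$ makes the two cases overlap, and this works precisely because the sampling error $\norm{\hat\mu_i-\mu_i^\ast}$ scales with $\sigma_i$, which by the structural bound scales with $\delta_i$, so the problematic ``margin'' near each centroid shrinks in proportion to the inter-centroid scale and all instance-dependent lengths cancel --- at the price of a bound on $t$ that degrades like $(\alpha-1)^{-4}$ as $\alpha\to1$.

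The algorithm and its analysis then follow. Draw $S$ as above; for each of the at most $k^m=2^{\poly(k/\omega(\alpha-1))}$ labelings of $S$ by $[k]$, compute the $k$ part-centroids, form the induced nearest-centroid partition $\mathcal{P}$ of $X$, test whether $\mathcal{P}$ is balanced and $\alpha$-center proximal with respect to its own centroids, and if so record its $k$-means cost; output a recorded partition of minimum cost. Every recorded partition is a feasible solution, hence has cost at least $\OPT$; and on the probability-at-least-$1/2$ event above, the labeling of $S$ agreeing with the optimum produces $\mathcal{P}=(C_1^\ast,\dots,C_k^\ast)$ by the classification claim, which is feasible with cost exactly $\OPT$. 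So the output is an optimal feasible clustering, with constant probability. Each iteration costs $O(nkd)$ --- for the $k$ centroids, the Voronoi assignment, the $O(nk)$ center-proximity checks, and the cost --- so the total running time is $O(2^{\poly(k/\omega(\alpha-1))}\,nd)$ after absorbing the factor $k$; in the corner case $m>n$ one enumerates labelings of $X$ itself, still within the bound since then $n=\poly(k/\omega(\alpha-1))$. This proves \prettyref{thm:mpr}.
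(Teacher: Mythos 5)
Your proposal is correct and follows essentially the same route as the paper: draw a uniform sample of size $\mathrm{poly}(k/(\omega(\alpha-1)))$, enumerate all $k$-labelings of the sample to obtain candidate mean tuples whose error scales with the inter-mean distances (via the $\alpha$-center-proximity radius bound), show that nearest-approximate-center assignment then recovers the optimal balanced $\alpha$-center-proximal clustering exactly, and output the cheapest feasible candidate. Your minor deviations (Inaba's variance bound in place of the diameter-based sampling lemma, a two-case classification argument instead of the paper's single chain via \prettyref{prop:geom}, and comparing candidates by the true induced $k$-means cost) do not change the approach.
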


\begin{remark}
  We remark that our algorithm requires $\alpha$ as an input. However, in practice, the value of $\alpha$ might not be available in general. For an input $\alpha$, our algorithm can also be used to check whether an instance has an $\alpha$-center proximal clustering. On invoking our algorithm with a certain value of $\alpha$, if the instance has an $\alpha$-center proximal clustering, then our algorithm will output the optimal $\alpha$-center proximal clustering with constant probability. Therefore, the user can invoke our algorithm with sequence of decreasing values of $\alpha$ till a ``satisfactory" clustering is found.
\end{remark}

Since $k$-means is hard to approximate within some fixed constant $c > 1$, even on instances where the optimal solutions are balanced and satisfy $\alpha$-center proximity, the exponential running time in our algorithm is unavoidable. We show the following hardness result:

\begin{theorem}\label{thm:hard}
For any $2 >\alpha' >1$ there exists an $\alpha \leq \alpha'$, $(\alpha>1)$, constants $\e > 0$, and $\wmin > 0$, such that it is NP-hard to approximate the optimal $\alpha$-center proximal Euclidean $k$-means, where the size of each cluster is at least $\omega n / k$, to a factor better than $(1+\e)$.
\end{theorem}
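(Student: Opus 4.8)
The plan is to show that the APX-hardness of Euclidean $k$-means of \cite{AwasthiCKS15} already produces instances on which the natural optimal clustering is \emph{balanced} and \emph{$\alpha$-center proximal} for a fixed $\alpha>1$, so that the hard gap survives the restriction to balanced $\alpha$-center-proximal clusterings. Concretely I would reduce from \problemmacro{Minimum Vertex Cover} on triangle-free $\Delta$-regular graphs, which is APX-hard already for $\Delta=3$: there is a constant $\gamma>0$ such that, given such a graph $G=(V,E)$ with $n_{0}=\abs V$, $m_{0}=\Delta n_{0}/2$, and a threshold $\tau=\Theta(n_{0})$, it is NP-hard to distinguish the case that $G$ has a vertex cover of size at most $\tau$ from the case that every vertex cover of $G$ has size more than $(1+\gamma)\tau$. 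Following the embedding of \cite{AwasthiCKS15}, map each edge $\{u,v\}$ to the point $e_{u}+e_{v}\in\R^{n_{0}}$, where $e_{1},\dots,e_{n_{0}}$ is the standard basis, let $X$ be the resulting set of $m_{0}$ points, and set $k=\tau$. I will (i) exhibit, in the completeness case, a balanced $\alpha$-center-proximal clustering of cost exactly $m_{0}-\tau$, and (ii) invoke the soundness analysis of \cite{AwasthiCKS15} in the soundness case.

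\textbf{Completeness.} Given a vertex cover $S$ with $\abs S\le\tau$, orient every edge into one of its endpoints in $S$. The induced ``stars'' $C_{v}=\{\,e_{v}+e_{w}\suchthat vw\text{ oriented into }v\,\}$ for $v\in S$ partition $X$; dropping empty stars and then splitting stars one point at a time --- which \emph{decreases} the total cost by $1$ per split --- I can reach exactly $\tau$ nonempty clusters, each still of the form $\{\,e_{v}+e_{w}\suchthat w\in T\,\}$ for some $T$ of size $d\ge1$, with centroid $\mu=e_{v}+\tfrac1d\sum_{w\in T}e_{w}$ and cost $d-1$, for a total of $m_{0}-\tau$. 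I then verify: (a) \emph{Balance.} Every cluster has at least one point, which is $\ge\omega n/k$ as soon as $\omega\le\tau/m_{0}=2\tau/(\Delta n_{0})$, a fixed constant. (b) \emph{$\alpha$-center proximity.} For a point $x=e_{v}+e_{w}$ in such a cluster one computes $\norm{x-\mu}^{2}=1-\tfrac1d\le1-\tfrac1\Delta$, while a short calculation using orthogonality of the basis (and triangle-freeness, as in \cite{AwasthiCKS15}, to keep the relevant coordinates disjoint) gives $\norm{x-\mu'}^{2}\ge1+\tfrac1\Delta$ for the centroid $\mu'$ of every other cluster. Hence every point is a factor at least $\sqrt{(\Delta+1)/(\Delta-1)}$ closer to its own centroid than to any other, so the clustering is $\alpha$-center proximal for any $\alpha<\sqrt{(\Delta+1)/(\Delta-1)}$. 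With $\Delta=3$ this is any $\alpha<\sqrt2$, and since any prescribed $\alpha'\in(1,2)$ has $(1,\min\{\alpha',\sqrt2\})$ nonempty, some $\alpha\le\alpha'$ with $\alpha>1$ can be chosen.

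\textbf{Soundness.} In the soundness case every vertex cover of $G$ has size more than $(1+\gamma)\tau$, so $X$ cannot be partitioned into $\tau$ stars, and the soundness analysis of \cite{AwasthiCKS15} shows that \emph{every} $\tau$-means clustering of $X$ has cost at least $m_{0}-\tau+\Omega(\gamma m_{0})$. This lower bound applies in particular to every balanced $\alpha$-center-proximal clustering, and since the cost with the cluster centroids lower bounds the cost with any centers, it applies to every feasible $\alpha$-center-proximal solution. Using $m_{0}-\tau=\Theta(m_{0})$, we conclude: in the completeness case the optimal balanced $\alpha$-center-proximal $k$-means cost is at most $m_{0}-\tau$, whereas in the soundness case it exceeds $(1+\e)(m_{0}-\tau)$ for a fixed $\e=\Omega(\gamma)>0$. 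A polynomial-time $(1+\e)$-approximation for the restricted problem would therefore decide the NP-hard vertex cover gap, proving \prettyref{thm:hard}.

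\textbf{Main obstacle.} The crux is step (b): one must control the distance from \emph{every} edge-point to \emph{every} other cluster centroid --- not just to the two ``incident'' stars --- splitting into the cases where the other cluster's vertex is adjacent to $v$, to $w$, to both (excluded by triangle-freeness), or to neither, and checking in each case that the ratio stays bounded below by $\sqrt{(\Delta+1)/(\Delta-1)}$ even after the star-splitting used to fix the number of clusters. This is precisely what forces the use of a \emph{bounded-degree} vertex cover hardness, so that $\alpha$ is a genuine constant rather than $1+o(1)$, and is the reason only $\alpha<2$ is attained. A secondary point is verifying that the soundness bound of \cite{AwasthiCKS15}, usually quoted for the parameters giving its best hardness factor, still holds at the fixed small degree $\Delta$ we use (it does, being a reduction from bounded-degree vertex cover), and that star-splitting preserves balance and proximity, which it does since all relevant cluster sizes remain in $\{1,\dots,\Delta\}$.
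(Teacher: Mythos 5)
Your proposal is correct and follows essentially the same route as the paper's proof (\prettyref{thm:reduction} together with \prettyref{lem:vertex-cover-hard}): the same edge-to-$e_u+e_v$ embedding of triangle-free bounded-degree Vertex Cover from \cite{AwasthiCKS15}, the same completeness computation showing the star clustering is balanced and $\alpha$-center proximal with $\alpha\approx\sqrt{(\Delta+1)/(\Delta-1)}$ and cost $m-k$, and the same soundness step of invoking the \cite{AwasthiCKS15} lower bound for \emph{all} $k$-means clusterings and hence for the restricted class. The only differences are cosmetic (insisting on $3$-regular instances and splitting stars to reach exactly $k$ nonempty clusters, a detail the paper glosses over), and they do not change the argument.
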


The running time of our algorithm is exponential only in the number of clusters $k$, the balance parameter $\omega$ and the center proximity parameter $\alpha$ but it is linear in the number of points $n$ and the dimension $d$.

We show a similar exact algorithm for minimizing the $k$-means objective with $z$ outliers, where the minimization is only over clusterings that satisfy the $\alpha$-center proximity with outliers and are balanced.
\begin{theorem}[]
\label{thm:mpr-out}
For any $2> \alpha > 1$, a balance parameter $\omega > 0$, given any set of $n$ points in $\R^{d}$ and an outlier parameter $z \in [n]$, we can \emph{exactly} find a clustering of the least $k$-means cost among all solutions that satisfy $\alpha$-center proximity with $z$ outliers and are balanced, i.e., each cluster has size at least $\omega n/k$. Our algorithm finds such an optimal clustering in time $O(2^{\text{poly}(k/\omega(\alpha - 1))}~ nd)$, with constant probability.
\end{theorem}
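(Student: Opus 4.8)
The plan is to run the sampling-based procedure underlying \prettyref{thm:mpr}, augmented by a guess of which sampled points are outliers. Fix an (unknown) optimal solution: a balanced clustering $C_1^\star,\dots,C_k^\star$ of $X\setminus Z^\star$ with outlier set $Z^\star$, cluster means $\mu_1^\star,\dots,\mu_k^\star$, satisfying \prettyref{def:alpha_ocp}. Since each $C_i^\star$ has at least $\omega n/k$ points, a uniform sample $S\subseteq X$ of size $M=\poly(k/\omega(\alpha-1))$ contains, with constant probability, $\Omega\big(k/(\alpha-1)^2\big)$ points from every $C_i^\star$ (Chernoff, then a union bound over the $k$ clusters). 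Now consider the labeling $\ell\colon S\to\{1,\dots,k\}\cup\{\bot\}$ that sends each sampled point to the index of the optimal cluster containing it, or to $\bot$ if it lies in $Z^\star$. Under $\ell$ the points labeled $i$ are a uniform sample from $C_i^\star$, so (conditioning on the class sizes) their empirical mean $\tilde\mu_i$ satisfies the standard second-moment bound $\Ex{\|\tilde\mu_i-\mu_i^\star\|^2}\le \tfrac{1}{|\ell^{-1}(i)|}\cdot\tfrac{1}{|C_i^\star|}\sum_{x\in C_i^\star}\|x-\mu_i^\star\|^2\le r_i^2/|\ell^{-1}(i)|$, where $r_i=\max_{x\in C_i^\star}\text{dist}(x,\mu_i^\star)$ and we used $r_i^2\ge\Var(C_i^\star)$. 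By Markov and a union bound, for a large enough sample size we get $\|\tilde\mu_i-\mu_i^\star\|<\tfrac{\alpha-1}{2}\,r$ for all $i$ simultaneously, with constant probability, where $r=\max_i r_i$.

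Given a center tuple $(\tilde\mu_1,\dots,\tilde\mu_k)$ this close to the true means, I reconstruct a candidate solution in the natural Lloyd-consistent way: let $\hat Z$ be the $z$ points of $X$ farthest from the set $\{\tilde\mu_1,\dots,\tilde\mu_k\}$, assign every point of $X\setminus\hat Z$ to its nearest $\tilde\mu_i$, and use the means of the resulting parts as the final centers. The crux is that, for the labeling $\ell$ above, this reproduces $(C_1^\star,\dots,C_k^\star;Z^\star)$ exactly. The inlier/outlier split is recovered because \prettyref{def:alpha_ocp} forces $\text{dist}(y,\mu_j^\star)>\alpha r$ for every $y\in Z^\star$ and every $j$ (apply the defining inequality with the inlier achieving $r$), while every inlier lies within $r$ of its own optimal center; with $\|\tilde\mu_i-\mu_i^\star\|<\tfrac{\alpha-1}{2}r$ and the triangle inequality, every $y\in Z^\star$ is then strictly farther from $\{\tilde\mu_i\}$ than every inlier, so the $z$ farthest points are exactly $Z^\star$. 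The partition of $X\setminus Z^\star$ is recovered by the geometric argument already used for \prettyref{thm:mpr}: by \prettyref{prop:geom} the optimal clusters lie in disjoint balls, so centers within an $O(\alpha-1)$ fraction of the relevant pairwise scale put each inlier back into its own cluster; this is also where the hypothesis $\alpha<2$ in the statement is used. Since the reconstructed clustering equals the optimal one, recomputing the means returns $\mu_i^\star$ exactly, so $\hat Z=Z^\star$ is genuinely the set of $z$ farthest points from the final centers and the whole configuration satisfies \prettyref{def:alpha_ocp}.

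The algorithm is therefore: draw $S$; for each of the $(k+1)^{M}$ labelings of $S$, form $(\tilde\mu_i)$ from the non-$\bot$ classes, build the clustering-with-outliers as above, discard it unless it is balanced and satisfies \prettyref{def:alpha_ocp}, and output a surviving candidate of least $k$-means cost. Every surviving candidate is feasible, so its cost is at least the optimum; and with constant probability the labeling $\ell$ survives and achieves exactly the optimum, so the output is optimal. Each labeling costs $O(nd)$ time to process (means, nearest-center assignment, feasibility check, cost), with $\poly(k)$ factors absorbed into the exponent, for a total running time of $O\big(2^{\poly(k/\omega(\alpha-1))}\,nd\big)$. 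The main obstacle is the exact-recovery claim: one must quantify how small $\|\tilde\mu_i-\mu_i^\star\|$ must be relative to both the inlier/outlier gap $(\alpha-1)r$ and the inter-cluster separation of \prettyref{prop:geom}, verify that a single sample size of order $\poly(k/(\alpha-1))$ meets both demands (which works precisely because $r_i$ upper-bounds the standard deviation of $C_i^\star$, so the cheap second-moment bound already controls the error at the right scale), and carry the $\alpha<2$ constraint through the geometric step. The remaining pieces — the concentration and union bounds and the running-time bookkeeping — are routine.
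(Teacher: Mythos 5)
Your proposal is essentially the paper's own route: draw a uniform sample of size $\poly\paren{k/\omega(\alpha-1)}$, enumerate all $(k+1)$-way labelings of the sample (the extra class playing the role of the outliers $Z$), take empirical means of the $k$ classes (this is \prettyref{alg:stable-km} and \prettyref{prop:sampling}), then for each candidate tuple remove the $z$ farthest points, assign the rest to the nearest candidate center, keep only balanced $\alpha$-center-proximal candidates, and return the cheapest feasible one (this is \prettyref{alg:balance-means-known-out}, whose correctness rests on \prettyref{lem:outlier_dist} and \prettyref{lem:cip}). Your inlier/outlier separation argument is also the paper's: \prettyref{def:alpha_ocp} forces $\norm{y-\mu_j} > \alpha r$ for every outlier $y$ while every inlier is within $r \defeq \max_i r_i$ of its own mean, so center error below $(\alpha-1)r/2$ makes the $z$ farthest points w.r.t.\ the approximate centers exactly $Z^\star$.

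The one genuine gap is the accuracy target you impose for recovering the partition of the inliers, namely $\norm{\tmu_i-\mu_i^\star} < \tfrac{\alpha-1}{2}r$ with the \emph{global} scale $r$, a step you yourself flag as ``the main obstacle'' and do not carry out. That tolerance suffices for the outlier split but not for nearest-center assignment among the clusters. First, $r$ is the wrong scale: two tight, nearby clusters coexisting with one distant, large-radius cluster satisfy $\alpha$-center proximity, yet $(\alpha-1)r/2$ can exceed their mutual separation, so nearest-$\tmu$ assignment can swap them. Second, even the per-cluster linear bound $\tfrac{\alpha-1}{2}r_i$ (which your second-moment computation actually gives) is insufficient: by \prettyref{prop:amps} a cluster may fill its ball of radius $r_{i,j}=\tfrac{\alpha}{\alpha^2-1}\norm{\mu_i-\mu_j}$, while the gap between the two balls is only $d_{i,j}=\tfrac{(\alpha-1)^2}{\alpha^2-1}\norm{\mu_i-\mu_j}$ (part (c) of \prettyref{prop:geom}), so for $\alpha$ close to $1$ an error of order $(\alpha-1)r_i$ can dwarf the inter-ball gap and misplace boundary points. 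What is needed, and what the paper proves in \prettyref{lem:muprox}, is per-cluster error at most $\delta\,\diam{C_i}$ with $\delta=\tfrac{(\alpha-1)^2}{8\alpha}$ (\prettyref{eq:alphadelta}), combined with $\diam{C_i}\le 2r_{i,j}$ (\prettyref{prop:amps}) and $\norm{x-\mu_j}\ge(\alpha-1)r_{i,j}$ (part (h) of \prettyref{prop:geom}). This is a constant-fixing rather than a structural problem: the same Markov/second-moment bound meets the $(\alpha-1)^2$ target with per-class sample size $O\paren{\alpha^2/((\alpha-1)^4\beta)}$, leaving the sample size and the $2^{\poly(k/\omega(\alpha-1))}$ enumeration unchanged. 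Finally, the hypothesis $\alpha<2$ is not what makes the geometric step work (the condition $\delta\le(\alpha-1)/4$ used in \prettyref{lem:outlier_dist} holds for every $\alpha>1$ with the above $\delta$); it is not essential to the argument in the way you suggest.
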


In the case when most points satisfy $\alpha$-center proximity and form balanced clusters, we show an algorithm which outputs a list of clusterings, such that one of the clusterings corresponds to the case where the points which satisfy $\alpha$-perturbation resilience are correctly clustered.
\begin{theorem}\label{thm:beta-mpr}
For any $\alpha > 1$, a balance parameter $\omega > 0$, given any set of $n$ points in $\R^{d}$ and a parameter $\beta \in [n]$, we can output a list, of $k$-means clustering, of size $\bigo{2^{\text{poly}(k/\omega(\alpha - 1))}}$ such that one of them is the minimum cost clustering among all solutions that satisfy $\alpha$-center proximity without $\beta$ points and are balanced, i.e., each cluster has size at least $\omega n/k$. Our algorithm finds such an optimal clustering in time $O(2^{\text{poly}(k/\omega(\alpha - 1))}~ nd)$, with constant probability.
\end{theorem}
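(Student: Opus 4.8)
The plan is to reuse the construction behind \prettyref{thm:mpr} (and its outlier variant \prettyref{thm:mpr-out}) almost verbatim, but to return the whole list of candidate clusterings it produces rather than selecting one. Fix the minimum-cost clustering $\mathcal C^\star=(C_1^\star,\dots,C_k^\star)$ with centers $\mu_1^\star,\dots,\mu_k^\star$ that becomes $\alpha$-center proximal and $\omega$-balanced after deleting some set $B^\star$ of at most $\beta$ points; call the points in $G=X\setminus B^\star$ \emph{resilient}. I would read the balance hypothesis as saying that each resilient cluster $C_j^\star\cap G$ has at least $\omega n/k$ points (so that a uniform sample of $X$ hits each of them); under that reading $(C_j^\star\cap G)_{j\in[k]}$ is precisely an $\alpha$-center-proximal, $\omega$-balanced clustering of the kind \prettyref{thm:mpr} is designed to recover.

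Recall that the algorithm of \prettyref{thm:mpr}, as I would use it, draws a uniform sample $S\subseteq X$ of size $s=\poly(k/\omega(\alpha-1))$, and for each way of splitting a subset of $S$ into $k$ groups it forms candidate centers as the group means; by the balance assumption the sample contains $\poly(1/(\alpha-1))$ points of each target cluster with constant probability, so for the ``right'' split the candidate centers are accurate enough that the ball separation of \prettyref{prop:geom} forces the induced (Voronoi-type) clustering to be the desired optimum. I would run the same procedure on $X$, and, exactly as \prettyref{thm:mpr-out} guesses which sampled points are outliers, additionally guess which points of $S$ are non-resilient and discard them before averaging. This multiplies the number of candidates by at most $2^{s}$, so the total number of candidate clusterings is still $2^{\poly(k/\omega(\alpha-1))}$; the list is exactly these candidates, and the running time is $O(2^{\poly(k/\omega(\alpha-1))}\,nd)$ as in \prettyref{thm:mpr}.

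It then remains to show that one list member equals $\mathcal C^\star$. Condition on the constant-probability event that $S$ contains $\poly(1/(\alpha-1))$ resilient points from each cluster. Consider the candidate obtained from the guess that correctly marks the non-resilient points of $S$ together with the split that groups the surviving sampled points by their $\mathcal C^\star$-cluster: by the standard sample-mean bound the resulting centers $\nu_1,\dots,\nu_k$ lie within the error tolerance of \prettyref{thm:mpr} of $\mu_1^\star,\dots,\mu_k^\star$. Applying the analysis of \prettyref{thm:mpr} to the resilient sub-instance $G$, the induced clustering agrees with $\mathcal C^\star$ on every resilient point, and, as in \prettyref{thm:mpr-out}, each non-resilient point is placed with its nearest candidate center, which is how it is assigned in the minimum-cost solution $\mathcal C^\star$; hence this candidate is $\mathcal C^\star$ itself. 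Finally, in contrast with \prettyref{thm:mpr} and \prettyref{thm:mpr-out}, the non-resilient set is not canonically recoverable from a clustering --- a point can violate $\alpha$-center proximity without being among the farthest from the centers --- so we cannot efficiently certify which list element is valid, which is why a list rather than a single clustering is unavoidable.

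The step I expect to be the main obstacle is pinning down the \emph{non-resilient} points in the last argument. Resilient points are handled robustly via the disjoint-ball structure of \prettyref{prop:geom}, but a non-resilient point may lie arbitrarily close to a bisector of two near-optimal candidate centers, so reproducing $\mathcal C^\star$ exactly on those points requires leaning on the precise cost/optimality definition of $\mathcal C^\star$ (mirroring the outlier analysis); absent that, the clean statement one gets for free is that some list member correctly clusters all resilient points, and upgrading this to exact equality with the optimal solution is where the real work lies.
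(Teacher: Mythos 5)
Your proposal follows essentially the same route as the paper: its own proof is a two-line remark that one treats the $\beta$ non-resilient points as outliers in the sampling procedure of \prettyref{alg:stable-km} (via \prettyref{prop:sampling}), so that with constant probability one enumerated $(k+1)$-partition of the sample yields sufficiently accurate means of the resilient clusters, and the whole list of induced nearest-center clusterings is output, exactly as you describe. The obstacle you flag at the end --- that a non-resilient point near a bisector need not be assigned as in the minimum-cost solution, so one only gets for free that some list member clusters the $(n-\beta)$ resilient points correctly --- is a genuine subtlety, but the paper does not resolve it either: its proof (and its informal statement in the introduction) claims only that one clustering in the list corresponds to the resilient points being correctly clustered.
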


In fact, \prettyref{thm:mpr} and \prettyref{thm:mpr-out} hold for any clustering objective as long as the centers used to define $\alpha$-center proximity are the \emph{means} or \emph{centroids} of the clusters.

We also show exact algorithm for minimizing the $k$-means objective over clustering that satisfy $\alpha$-center proximity but no balance requirement. However, the running time of our algorithm depends exponentially on the ratio of the distances between the farthest and the closest pair of means.
\[
\gamma^* = \frac{\max_{i,j} \|\mu_{i} - \mu_{j}\|}{\min_{i \neq j} \|\mu_{i} - \mu_{j}\|}.
\]

\begin{theorem}[]
\label{thm:mpr-gamma}
For any $\alpha > 1$ and given any set of $n$ points in $\R^{d}$, and a parameter $\gamma$, where
\[
\gamma \geq \frac{\max_{i,j} \|\mu_{i} - \mu_{j}\|}{\min_{i \neq j} \|\mu_{i} - \mu_{j}\|} \mcom
\]
for the means $\mu_{1}, \mu_{2}, \dotsc, \mu_{k}$ of the optimal solution,
we can \emph{exactly} find a clustering of the least $k$-means cost among all solutions that satisfy $\alpha$-center proximity. Our algorithm finds such an optimal clustering in time $O(2^{\text{poly}(k \gamma/(\alpha - 1))}~ nd)$, with constant probability.
\end{theorem}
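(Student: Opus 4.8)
The plan is to run the algorithm behind \prettyref{thm:mpr}, using the extra parameter $\gamma$ to make up for the loss of the balance guarantee. Two ingredients from the earlier development are used as given facts. First, \prettyref{prop:geom}: in any $\alpha$-center-proximal clustering each cluster $C_i$ sits inside a ball $B_i=\ball{c_i,r_i}$, these balls are pairwise disjoint, and hence $C_i = X\cap B_i$, so the clustering is \emph{determined} by the tuple $(B_1,\dots,B_k)$. Second, these are Apollonius balls: writing $\mu_i$ for the mean of $C_i$ and $\mu_{j^*}$ for its nearest other mean, $c_i$ lies within distance $\bigo{\norm{\mu_i-\mu_{j^*}}/(\alpha-1)}$ of $\mu_i$ and $r_i = \bigo{\max_j\norm{\mu_i-\mu_j}/(\alpha-1)}$. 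Consequently, if we are handed estimates of all the means accurate to an additive $\Theta((\alpha-1)\min_{i\ne j}\norm{\mu_i-\mu_j})$, we can reconstruct each $B_i$ up to a perturbation smaller than the separation between the $B_i$'s, read off the exact partition $C_i = X\cap B_i$, recompute the exact means, check $\alpha$-center proximity, and evaluate the $k$-means cost. So it suffices to output, with constant probability, a list of $2^{\poly(k\gamma/(\alpha-1))}$ candidate mean-tuples, one of which is accurate in this sense.

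In the balanced case this list is produced by uniform sampling, which meets every cluster often enough. Without balance a small cluster can be missed entirely, so I would peel clusters off one at a time, exploiting that the enclosing balls are disjoint. There is always a cluster of size at least $n/k$; uniform sampling of $\poly(k/(\alpha-1))$ points, together with enumerating which sampled points land in it ($2^{\poly(k/(\alpha-1))}$ guesses), yields an accurate estimate $\tilde\mu$ of its mean. After rescaling so that $\min_{i\ne j}\norm{\mu_i-\mu_j}=1$, all means of the optimal solution lie in $\ball{\tilde\mu,\bigo{\gamma}}$, and the heavy cluster's enclosing ball is specified by $\tilde\mu$, a direction, and a radius in $[0,\bigo{\gamma/(\alpha-1)}]$; enumerating a net of resolution $\Theta(\alpha-1)$ over these parameters — of size $2^{\poly(k\gamma/(\alpha-1))}$ once it is confined to the $\bigo{k}$-dimensional affine hull of the means discovered so far plus a bounded number of auxiliary directions — lets us carve out that cluster, and we recurse on the remaining points, which still form an $\alpha$-center-proximal instance with fewer clusters and aspect ratio at most $\gamma$. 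The recursion has depth at most $k$, so the candidate list has size $2^{\poly(k\gamma/(\alpha-1))}$; each candidate is checked in $\bigo{nd}$ time, giving total running time $\bigo{2^{\poly(k\gamma/(\alpha-1))}\,nd}$.

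The hard part is keeping the net small — in particular, independent of $d$ — while still pinning down the enclosing balls accurately enough to carve out clusters exactly; this is exactly where $\gamma$ and the quantitative Apollonius estimates behind \prettyref{prop:geom} are needed. Boundedness of $\gamma$ means that, after normalization, all ball centers and radii live in a region of bounded relative size, so resolution $\Theta(\alpha-1)$ suffices; the dependence on $d$ is eliminated by restricting the net to the low-dimensional affine span of the means that the peeling has already recovered, together with the finitely many extra directions needed to locate the next Apollonius center. The remaining pieces — the concentration bound for the empirical mean of a heavy cluster, the fact that deleting one cluster leaves an $\alpha$-center-proximal instance, and the verification that a $\Theta(\alpha-1)$-accurate mean-tuple reconstructs the exact partition through \prettyref{prop:geom} — are routine, the first two being essentially identical to the balanced case.
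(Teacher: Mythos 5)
Your overall skeleton matches the paper's: produce a list of $2^{\poly(k\gamma/(\alpha-1))}$ candidate mean tuples one of which is accurate to $O((\alpha-1)\min_{i\neq j}\norm{\mu_i-\mu_j})$, recover the partition by nearest-approximate-mean assignment, and select the cheapest candidate that verifies $\alpha$-center proximity (this is exactly \prettyref{prop:combined-sampling} together with \prettyref{alg:gamma-means-known}). The gap is in how you generate the candidates without the balance assumption. You propose to \emph{exactly} carve out the heavy cluster with a single enumerated ball and recurse on the rest, but such a ball need not exist: \prettyref{prop:amps} only gives, for each \emph{pair} $(i,j)$, a ball $C_{i,j}$ containing $C_i$ and disjoint from $C_{j,i}$; it does not give one ball containing $C_i$ and excluding all other clusters simultaneously, and in general there is none. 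Concretely, take $\mu_1=(0,0)$, $\mu_2=(1,0)$, $\mu_3=(-1,0)$ with $\alpha$ close to $1$: the points $(0,\pm T)$ satisfy $\alpha$-center proximity for $C_1$ whenever $T<1/\sqrt{\alpha^2-1}$, so $C_1$ may contain both with $T>1$, and then any Euclidean ball containing both must contain $(1,0)$ or $(-1,0)$ and hence points of $C_2$ or $C_3$; here $\gamma=2$, so bounded aspect ratio does not rescue the step. Since your recursion (and the uniform-sampling estimate of the next heavy cluster) relies on this exact removal, the argument breaks at its first peeling step. A second, independent gap is the claim that the net can be confined to the affine hull of the recovered means "plus a bounded number of auxiliary directions": the Apollonius center $\hat\mu_{i,j}$ lies on the line through $\mu_i$ and the \emph{unknown} $\mu_j$, whose direction is an arbitrary direction of $\R^d$, and you give no mechanism to obtain it; a net over directions in $\R^d$ has size exponential in $d$, which is exactly the dependence you must avoid.

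This is precisely where the paper's proof does real work, and it never identifies any cluster during the search. In \prettyref{prop:sampling-mean} (Algorithms \ref{alg:centers} and \ref{alg:tree}, following Ding and Xu), one peels spheres of enumerated radii (the scale being $\radm{X}$, which is where $\gamma$ enters, via \prettyref{claim:radius}) around the approximate means found so far, takes a random sample from the points left \emph{outside} the peeled region and guesses the subset coming from the next cluster to get a data-driven anchor $\pi$, and then grids inside the low-dimensional simplex spanned by the already-found approximate means and $\pi$; the simplex lemma (\prettyref{lem:simplex}) guarantees a grid point close to the true mean $\mu_{j+1}$, and the case where the outside portion of the next cluster is tiny is handled by \prettyref{lem:ding15} using the simplex on the old means alone. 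Because all grids live in simplices spanned by already-computed points, no enumeration over ambient directions is ever needed, which is how $d$-independence of the candidate list is achieved. To repair your argument you would have to replace the "enumerate a separating ball/direction" step with this sample-and-simplex-grid device (or an equivalent that only grids inside spans of computed points), at which point you essentially recover the paper's proof.
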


We also show that the optimal $\alpha$-center proximal clustering with balanced cluster need not be unique. In fact, the number of possible optimal $\alpha$-center proximal clustering with balanced cluster can be exponential in $k$ and $1/(\alpha-1)$. We show the following result:

\begin{proposition}\label{prop:size-alpha-opt}
  For any $2 >\alpha' >1$, and any $k \in \Z$, there exists $\alpha \leq \alpha'$ (and $\alpha > 1$), $n, d$ and a set of points $X \in \R^d$ such that  such that the number of possible optimal $\alpha$-center proximal clusterings, where the size of each cluster is $n/k$ ($\omega = 1$)  is $ 2^{ \tilde{\Omega}\paren{k / \paren{ \alpha^2-1}}} $.
\end{proposition}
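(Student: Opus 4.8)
The plan is to use one maximally symmetric instance: let $X$ be the vertex set of a regular simplex. Given $\alpha' \in (1,2)$, fix $\alpha \in \bigl(1,\min\{\alpha',\sqrt{3}\}\bigr)$, let $m$ be the largest integer strictly smaller than $1 + \tfrac{2}{\alpha^2-1}$ — so $m \ge 2$ and $m = \Theta(1/(\alpha^2-1))$ — put $n = mk$, and take $X = \{x_1,\dots,x_n\} \subseteq \R^{n}$ with $x_i = e_i$ (rescaled so that all pairwise distances equal $1$). The decisive property of a regular simplex is that the $k$-means cost of a cluster depends only on its size: a set $S$ of $t$ equidistant points has centroid $\mu_S$ at its barycenter with $\sum_{x \in S}\|x-\mu_S\|^{2} = t\cdot\tfrac{t-1}{2t} = \tfrac{t-1}{2}$. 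Hence \emph{every} partition of $X$ into $k$ nonempty parts of sizes $t_1,\dots,t_k$ has $k$-means cost $\sum_{i=1}^{k}\tfrac{t_i-1}{2} = \tfrac{n-k}{2}$, independent of the partition; in particular every balanced $k$-clustering of $X$ (all clusters of size exactly $n/k = m$) is an optimal $k$-means clustering, and hence also optimal among $\alpha$-center proximal balanced clusterings.

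Next I would verify that \emph{every} balanced $k$-clustering of $X$ satisfies $\alpha$-center proximity (\prettyref{def:center-prox}). Fix such a clustering $C_1,\dots,C_k$ with centroids $\mu_1,\dots,\mu_k$ and a point $x \in C_i$. Since $|C_j| = m$ for all $j$ and $x \notin C_j$ for $j \ne i$, a direct computation in the regular simplex gives $\text{dist}(x,\mu_i)^{2} = \tfrac{m-1}{2m}$ and $\text{dist}(x,\mu_j)^{2} = \tfrac{m+1}{2m}$ for every $j \ne i$. So $\alpha$-center proximity reduces to the single inequality $\tfrac{m+1}{2m} > \alpha^{2}\,\tfrac{m-1}{2m}$, i.e.\ $m < 1 + \tfrac{2}{\alpha^2-1}$, which holds by the choice of $m$. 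Consequently every balanced $k$-clustering of $X$ is an optimal $\alpha$-center proximal clustering with $\omega = 1$.

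It then remains to count these clusterings. The number of partitions of the $n = mk$ labeled points into $k$ unordered groups of size $m$ is $\dfrac{(mk)!}{(m!)^{k}\, k!}$, and an elementary bound (e.g.\ $\binom{mj}{m} \ge j^{m}$ for each $j$, which yields $\tfrac{(mk)!}{(m!)^{k} k!} \ge (k!)^{m-1} \ge (k/e)^{(m-1)k}$) shows that this is $2^{\Omega(mk\log k)} = 2^{\Omega\left(k\log k/(\alpha^2-1)\right)} = 2^{\tilde\Omega\left(k/(\alpha^2-1)\right)}$, with $n, d = \Theta(k/(\alpha^2-1))$.

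The only genuine obstacle is making the two requirements cooperate: one needs exponentially many balanced clusterings that are all of the \emph{same, optimal} cost \emph{and} all $\alpha$-center proximal. The regular simplex dissolves the tension — optimality is automatic because every $k$-clustering of the simplex has the identical cost $\tfrac{n-k}{2}$, and proximity collapses to $m < 1 + \tfrac{2}{\alpha^2-1}$, which is exactly what pins $m$ at $\Theta(1/(\alpha^2-1))$ and hence produces the claimed count. (Alternatively one may tile $k/2$ far-separated copies of a $2m$-point simplex, each contributing a factor $\binom{2m}{m} = 2^{\Theta(m)}$, for a clean $2^{\Omega(k/(\alpha^2-1))}$; the single-simplex construction is marginally stronger and accounts for the $\tilde\Omega$.) Everything else — the selection of $\alpha$ (with the harmless integrality adjustment for $m$), the two simplex distance computations, and the counting estimate — is routine.
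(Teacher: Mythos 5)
Your proposal is correct and is essentially the paper's own construction: the paper's point set is the standard basis vectors $e_1,\dots,e_{km}$ of $\R^{km}$, i.e., the vertices of a regular simplex (side $\sqrt{2}$), and its proof consists of the same two distance computations giving the ratio $\sqrt{(m+1)/(m-1)}$ and the same count $(km)!/(m!)^k$ of balanced partitions. Your extra touches --- noting that every $k$-partition of a simplex has identical cost $(n-k)/2$, which makes optimality explicit, and taking $m$ strictly below $1+2/(\alpha^2-1)$ so the strict inequality of \prettyref{def:center-prox} holds (the paper sets $\alpha$ exactly at the ratio) --- are minor refinements of the same route rather than a different argument.
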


\subsection{Related Work}

\paragraph{Metric Perturbation Resilience}
Bilu and Linial \cite{BiluL12} introduced the notion of multiplicative perturbation resilience for discrete optimization problems. They showed results on Max-Cut problems, that if the instance is roughly $\bigo{n}$ stable then they can retrieve the optimal Max-Cut in polynomial time. Later Makarychev et al. \cite{MakarychevMV14} gave an algorithm which required only $(c \sqrt{\log n} \log \log n)$- multiplicative perturbation resilience for some constant $c>0$ to retrieve the optimal Max-Cut in polynomial time.
Bilu and Linial \cite{BiluL12} had conjectured that there is some constant $\gamma^*$ such that, $\gamma^*$-perturbation resilient instances can be solved in polynomial time. They had also asked the question if it can be proved that Max-Cut is NP-hard even for $\gamma$-perturbation resilient instances, for some constant $\gamma$.
Awasthi et al. \cite{AwasthiBS12} showed that their conjecture in the case of ``well-behaved" center based objective functions like $k$-means, and $k$-median is true, by giving a polynomial time algorithm for $3$-perturbation resilient instances. They proposed the definition of center-based clustering objective and $\alpha$- center proximity (weaker notion than $\alpha$-multiplicative perturbation resilience). They showed that solving $k$-median instances with Steiner points over general metrics that satisfy $\alpha$-center proximity is NP-hard for $\alpha < 3$.
Ben-David and Reyzin \cite{Ben-DavidR14} showed that for every $\e > 0$, $k$-median instances with no Steiner points that satisfy $(2-\e)$-center proximity are NP-hard.
Balcan et al. \cite{BalcanHW16} showed that symmetric $k$-center under $(2-\e)$-multiplicative perturbation resilience is hard unless NP=RP. They also show an algorithm for solving symmetric and asymmetric $k$-center for $\alpha \geq 2$.
Balcan and Liang \cite{BalcanL16} improved upon the results of \cite{AwasthiBS12} and showed that center based objective can be optimally clustered for $\alpha \geq 1+\sqrt{2}$ factor perturbation resilience. They showed in their Lemma 3.3 that for a pair of cluster optimal clusters $C_i, C_j$, and $\alpha \geq 1+\sqrt{2}$, the clusters are contained in disjoint balls around their centers $\mu_i$ and $\mu_j$ respectively. We show in \prettyref{prop:amps} that for any $\alpha > 1$, and for a pair of optimal clusters $C_i, C_j$, the points in the clusters are contained in disjoint balls, centered around a point different from the mean of the clusters.
We note that a similar geometric property was observed by Telgarsky and Vattani \cite{Telgarsky10}. Telgarsky and Vattani \cite{Telgarsky10} defined a version of Harting's method, which does updates based on the value $\alpha$. The value of $\alpha$ in their case is defined in terms of the size of the clusters. Although in a different context, the geometric insight they achieved is the same as in \prettyref{prop:amps}.
Recently Angelidakis et al. \cite{Angelidakis17} gave a more general definition of center based objective functions and of metric  perturbation resilience compared to the one given by \cite{AwasthiBS12}. They improved the previous results \cite{BalcanL16}, giving an algorithm for center based clustering under $2$-multiplicative perturbation resilience.

Independent and concurrent to this work, Friggstad et al. \cite{Friggstad18} showed that for any fixed $d \geq 1$ and $\alpha >1$, $\alpha$-multiplicative perturbation resilient instances of discrete $k$-means and $k$-median (where the centers must be from the data-points themselves) in metrics with doubling dimension $d$ can be solved in time $\bigo{n^{d^{\bigo{d}}(\alpha-1)^{-\bigo{d/(\alpha-1)}}}k}$.
They also showed that when the dimension $d$ is a part of the input, there is a fixed $\e_0 > 0$ such there is not even a PTAS
for $(1 + \e_0)$-multiplicative perturbation resilient instances of $k$-means in $\R^d$ unless NP=RP. We note that our hardness result does not subsume their hardness of approximation result as multiplicative perturbation resilient instances are a subset of center-proximal instances (that is, the optimal $k$-means solution satisfies center-proximity \cite{Angelidakis17}). Our problem is different from theirs; we look for center proximal instances which is a more general class of instances than perturbation resilient instances \cite{Angelidakis17}. Moreover, we do not assume that the optimal solution $k$-means solution is center proximal, unlike Friggstad et al. \cite{Friggstad18}, where they assume that the optimal $k$-means instance is perturbation resilient.
Note that it is not known (to the best of our knowledge) whether there is an efficient algorithm to check whether an instance satisfies $\alpha$-metric perturbation resilience.

\paragraph{Approximation Algorithms for Euclidean $k$-means} Kanungo et al. \cite{Kanungo04} proposed a $(9+\e)$-approximation algorithm with running time $\bigo{n^3 \e^{-d}}$. Arthur and Vassilvitskii \cite{ArthurV04} showed an approximation ratio of $\bigo{\log k}$, with running time $\bigo{ndk}$, much superior to \cite{Kanungo04}.
Recently Ahmadian et al. \cite{Ahmadian17} improved the approximation ratio given by \cite{Kanungo04} to $(6.357+\e)$. There have been works to get a PTAS for Euclidean $k$-means objective.
In order to obtain the PTAS, many have focused on cases where $k$ or $d$ or both are assumed to be fixed. Inaba et al. \cite{Inaba94} gave a PTAS when both $k$ and $d$ is fixed. There have been a series of work in the case when only $k$ is assumed to be fixed \cite{Vega03, Peled04,Peled05,FeldmanMS07, KumarSS04,Chen09}.
Recently there has been works which give a PTAS for Euclidean $k$-means where only $d$ is assumed to be a constant \cite{Addad16,Friggstad16}.

\paragraph{Sampling Based Methods} As mentioned before that our results use sampling techniques often used in $(1+\e)$- approximation for $k$-means \cite{KumarSS04, Chen09, AckermannBS10}. The first ever linear (in $n$ and $d$) running time for obtaining PTAS (assuming $k$ to be a constant) given by \cite{KumarSS04} is $\bigo{nd 2^{poly(k/ \e)}} $. Feldman et al. \cite{FeldmanMS07} gave a new algorithm (using efficient coreset construction) with a better running time than that of \cite{KumarSS04} from $\bigo{nd 2^{poly(k/ \e)}} $ to $\bigo{nkd + d.poly(k/ \e) + 2^{\tbigo{k/\e}}}$.
There have been other works which also show similar results using $D^2$ sampling method \cite{JaiswalKS14,BhattacharyaJK18}. Ding and Xu \cite{Ding15} gave a sampling based procedure to cluster other variants of $k$-means objective, which they called the constrained $k$-means clustering. These clustering objectives need not satisfy the locality property in Euclidean space. Their algorithm is based on uniform sampling and some stand alone geometric technique which they call the `simplex lemma'.
The running time of their algorithm is $\bigo{2^{\poly(k/\e)} n (\log n)^{k+1}d}$. Bhattacharya et al. \cite{BhattacharyaJK18} later gave a more efficient algorithm based on $D^2$ sampling for the same class of constrained $k$-means problem and gave an algorithm with running time $\bigo{2^{\tbigo{k/\e}} nd}$. Deshpande et al.\cite{me} also use the $D^2$ sampling method to solve the min-max $k$-means problem, wherein, one has to find a clustering such that the maximum cost of the cluster is minimized. All the work related to sampling based methods above estimate the means/centers of the clusters, and use them to recover a clustering.

\paragraph{Clustering with Outliers}
For the problem of $k$-median with outliers, Charikar et al. \cite{Charikar01} gave an algorithm that removes $(1+\e)$ times the number of outliers, and has a cost at most $4(1+1/\e)$ times the optimal cost. Later, Chen \cite{Chen08} gave a $\bigo{1}$ approximation for the $k$-median with outliers problem. However, the unspecified approximation factor of their algorithm is large. For the problem of $k$-means with outliers, Gupta et al. \cite{Gupta17} gave a $\bigo{1}$ approximation by removing $\bigo{z~k \log n}$ number of outliers (where $z$ is the number of outliers).
Friggstad et al. \cite{FriggstadKRS18} gave an algorithms that open $k(1 + \e)$ facilities, and have
an approximation factor of $(1 + \e)$ on Euclidean and doubling metrics, and $25 + \e$ on general metrics. Recently, Krishnaswamy et al. \cite{Krishnaswamy18} gave a $(53.002 + \e)$ factor approximation for the $k$-means with outliers problem, and a $(7.081 + \e)$ approximation factor for the $k$-median with outliers problem, without violating any constraint.
Chekuri et al. \cite{Chekuri18} defined a new model of perturbation resilience for clustering with outliers, and exactly solve the clustering with outliers problem for several common center based objectives (like $k$-center, $k$-means, $k$-median) when the instances is $2$- multiplicative perturbation resilient. We note that our definition of outliers is inspired by Chekuri et al. \cite{Chekuri18}.

\paragraph{Other Notions of Stability} Ackerman and Ben-David \cite{AckermanB-D2009} studied various deterministic assumptions to obtain solutions with small objective costs, one of them being additive perturbation resilience. We refer the reader to \cite{AckermanB-D2009} for a more detailed survey of various notions of stability and their implications. Vijayraghavan et al. \cite{Dutta17} gave a more general definition of $\e$-additive perturbation resilience, where the points in the instance even after being moved by at
most $\e \cdot \max_{i,j} \norm{\mu_i-\mu_j}$ (where $\e \in (0,1)$ is a parameter) reamin in the same optimal clusters. Vijayaraghavan et al. \cite{Dutta17} showed a geometric property of $\e$-additive perturbation resilient instances implies an \emph{angular separation} between points from any pair of clusters. Using this observation, they showed that a modification of the perceptron algorithm (from supervised learning) can optimally solve any $\e$-additive perturbation resilient instance of $2$-means in time $dn^{\text{poly}(1/\e)}$. We make an observation (\prettyref{prop:geom}) that $\alpha$-metric perturbation resilient instances of $k$-means in the Euclidean space also satisfy \emph{angular separation} similar to \cite{Dutta17}. For $k$-means their running time is $dn^{k^{2}/\e^{8}}$.
To get a faster algorithm, they defined a stronger notion of stability than $\e$-additive perturbation resilience called $(\rho, \Delta, \e)$-separation, a natural strengthening of additive perturbation stability where there is an additional margin of $\rho$ between any pair of clusters.
They give an algorithm based on the $k$-largest connected components in a graph that can optimal solve any $(\rho, \Delta, \e)$-separated instance of $k$-means with $\beta$-balanced clusters in time $\tilde{O}(n^{2} kd)$ whenever $\rho = \Omega(\Delta/\e^{2} + \beta \Delta/\e)$. They also showed that their algorithm is robust to outliers as long as the fraction $\eta$ of outliers satisfies the following equation
\[
\rho = \Omega\left(\frac{\Delta}{\e^{2}} \left(\frac{w_{\max} + \eta}{w_{\min} - \eta}\right)\right) \mcom
\]
where $w_{\max}$ and $w_{\min}$ are the fraction of points in the largest and the smallest optimal cluster, respectively. Balcan et al. \cite{BalcanBG13} explored the $(c,\e)$-approximation stability which assumes that every $c$-approximation to the cost is $\e$-close (in normalized set difference) to the target clustering . Balcan and Liang \cite{BalcanL16} showed that when the target clustering is the optimal clustering then the $(c,\e)$-approximation stability implies $(c,\e)$-perturbation resilience, that is, the optimum after perturbation of up to a multiplicative factor $c$ is $\e$-close (in normalized set difference) to the original clustering. Kumar and Kannan \cite{KumarK10} also gave deterministic conditions (\textit{read stability})
under which their algorithm finds the optimal clusters. The analysis of Kumar and Kannan \cite{KumarK10} was tightened by \cite{AwasthiS12} which basically needed that the cluster centers must be pair-wise separated by a margin of $\Omega(\sqrt{k} \sigma)$ along the line joining the mean of the clusters, where $\sigma$ denotes the ``spectral radius" of the data-set.

\subsection{Notation}
We use $d \in \Z$ to denote the dimension of the ambient space.
For a vector $v \in \R^d$, we use $\norm{v}$ to denote its Euclidean norm.
We use $x_1, \ldots, x_n \in \R^d$ to denote the points in the instance,
and we use $X$ to denote the set of these points.
For any set of points $S$, we define its {\em diameter} as
$\diam{S} \defeq \max_{u,v \in S} \norm{u - v}$.

\section{Geometric Properties of $\alpha$-Center Proximal Instances}
We will assume Euclidean metric throughout the paper.

\begin{definition}\label{def:hatmu}
Suppose there exists a clustering $C_1,...,C_k$ of $X$ that is $\alpha$-center proximal. Let $C_i$ and $C_j$ be two clusters with $\mu_i$ and $\mu_j$ as their respective means. We define the vector $\Hat\mu_{i,j} \defeq \frac{\alpha^2 \mu_i- \mu_j}{\alpha^2-1}$ and $r_{i,j} \defeq \frac{\alpha}{\alpha^2-1} \norm{\mu_i-\mu_j}$.
Let $C_{i,j}$ denote the ball centered at $\Hat\mu_{i,j}$ with radius $r_{i,j}$. Let $p_{i,j} \defeq (\Hat\mu_{i,j}+\Hat\mu_{j,i})/2$ and $D_{i,j} \defeq \norm{\Hat\mu_{i,j}-\Hat\mu_{j,i}}$.
Let $u \defeq \frac{(\Hat\mu_{i,j}-\Hat\mu_{j,i})}{\norm{\Hat\mu_{i,j}-\Hat\mu_{j,i}}}$ denote the unit vector along the line joining $\Hat\mu_{i,j}$ and $\Hat\mu_{j,i}$, and let $d_{i,j}$ denote the distance between the closest points in $C_{i,j}$ and $C_{j,i}$.
\end{definition}

We note that Balcan and Liang \cite{BalcanL16} show in their Lemma 3.3 that for a pair of clusters $C_i, C_j$ in a $\alpha$-center proximal clustering, and $\alpha \geq 1+\sqrt{2}$, the clusters are contained in disjoint balls around their centers $\mu_i$ and $\mu_j$ respectively. We show in the following proposition that for any $\alpha > 1$, and for a pair of clusters $C_i, C_j$ in a $\alpha$-center proximal clustering, the points in the clusters are contained in disjoint balls, centered around $\Hat\mu_{i,j}$ and $\Hat\mu_{j,i}$.

\begin{proposition}[Geometric implication of $\alpha$-center proximity property]\label{prop:amps}
Let $X$ satisfy the $\alpha$-center proximity property (for any $\alpha > 1$) and let $C_i$ and $C_j$ be two clusters in its optimal solution. Any point $x \in C_i$, satisfies
  \begin{equation}\label{amps}
  \forall x \in C_i, \norm{ x - \Hat\mu_{i,j}} < r_{i,j} \mper
  \end{equation}
\end{proposition}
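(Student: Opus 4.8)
The plan is to convert the defining inequality of $\alpha$-center proximity directly into a statement of ball membership by completing the square. Fix a point $x \in C_i$. Since the optimal clustering satisfies $\alpha$-center proximity (\prettyref{def:center-prox}) and $x \in C_i$ with $j \neq i$, we have $\norm{x - \mu_j} > \alpha \norm{x - \mu_i}$; as both sides are nonnegative we may square to obtain $\alpha^2 \norm{x - \mu_i}^2 - \norm{x - \mu_j}^2 < 0$. This is the only hypothesis we will use.

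Next I would expand both squared norms via $\norm{y}^2 = \norm{x}^2 - 2\inprod{x, \cdot} + \norm{\cdot}^2$ and collect terms. The coefficient of $\norm{x}^2$ is $\alpha^2 - 1$, which is strictly positive since $\alpha > 1$, so dividing through by $\alpha^2 - 1$ preserves the inequality and yields $\norm{x}^2 - 2\inprod{x, (\alpha^2\mu_i - \mu_j)/(\alpha^2-1)} + (\alpha^2\norm{\mu_i}^2 - \norm{\mu_j}^2)/(\alpha^2-1) < 0$. By \prettyref{def:hatmu} the vector appearing in the inner product is exactly $\Hat\mu_{i,j}$, so after adding $\norm{\Hat\mu_{i,j}}^2$ to both sides and recognizing the first three terms as $\norm{x - \Hat\mu_{i,j}}^2$, we get $\norm{x - \Hat\mu_{i,j}}^2 < \norm{\Hat\mu_{i,j}}^2 - (\alpha^2\norm{\mu_i}^2 - \norm{\mu_j}^2)/(\alpha^2-1)$.

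The one step that needs care — really just bookkeeping rather than a genuine obstacle — is verifying the algebraic identity that the right-hand side above equals $r_{i,j}^2$. Writing $\norm{\Hat\mu_{i,j}}^2 = \norm{\alpha^2\mu_i - \mu_j}^2/(\alpha^2-1)^2$ and putting the subtracted term over the common denominator $(\alpha^2-1)^2$, the $\alpha^4\norm{\mu_i}^2$ and the $\norm{\mu_j}^2$ contributions cancel, leaving $\alpha^2\bigl(\norm{\mu_i}^2 - 2\inprod{\mu_i,\mu_j} + \norm{\mu_j}^2\bigr)/(\alpha^2-1)^2 = \alpha^2\norm{\mu_i - \mu_j}^2/(\alpha^2-1)^2$, which is precisely $r_{i,j}^2$ by \prettyref{def:hatmu}. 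Combining this with the previous inequality gives $\norm{x - \Hat\mu_{i,j}}^2 < r_{i,j}^2$, and taking square roots establishes \eqref{amps}.

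Finally I would emphasize why the argument works for every $\alpha > 1$: the positivity of $\alpha^2 - 1$ is exactly what makes $\Hat\mu_{i,j}$ and $r_{i,j}$ well-defined and lets us divide the inequality without reversing it; as $\alpha \to 1^+$ both the center $\Hat\mu_{i,j}$ and the radius $r_{i,j}$ diverge, recovering the degenerate half-space picture. I would also note in passing that applying the identical computation with the roles of $i$ and $j$ interchanged shows every point of $C_j$ lies in the ball $C_{j,i}$; the disjointness of $C_{i,j}$ and $C_{j,i}$ (and hence of the clusters) is a short separate estimate comparing $D_{i,j}$ against $r_{i,j} + r_{j,i}$, but that belongs to the statements that follow rather than to this proposition.
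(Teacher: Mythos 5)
Your proposal is correct and follows essentially the same route as the paper's proof: square the $\alpha$-center proximity inequality, complete the square in $x$, and verify that the resulting radius term equals $r_{i,j}^2$ as in \prettyref{def:hatmu}; your explicit check of the cancellation giving $\alpha^2\norm{\mu_i-\mu_j}^2/(\alpha^2-1)^2$ is just the bookkeeping the paper leaves implicit. No gaps.
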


\begin{proof}
The proof proceeds simply by using the $\alpha$-center proximity property for Euclidean metric and squaring both the sides, we get
\[ \alpha^2 \norm{x-\mu_i}^2 < \norm{x -\mu_j}^2 \mcom \]

\[ \alpha^2\norm{x}^2 + \alpha^2\norm{\mu_i}^2 -2\alpha^2\inprod{x,\mu_i} < \norm{x}^2 + \norm{\mu_j}^2 - 2\inprod{x,\mu_j} \mcom \]

\[ \norm{x}^2  - 2\inprod{x,\frac{\alpha^2\mu_i -\mu_j}{(\alpha^2-1)}} < \frac{\norm{\mu_j}^2 - \alpha^2\norm{\mu_i}^2}{(\alpha^2-1)}  \mper \]

Completing the square on LHS and adding the appropriate term on RHS
\[ \norm{x}^2  - 2\inprod{x,\frac{\alpha^2\mu_i -\mu_j}{(\alpha^2-1)}} + \norm{\frac{\alpha^2\mu_i -\mu_j}{(\alpha^2-1)}}^2 < \frac{\norm{\mu_j}^2 - \alpha^2\norm{\mu_i}^2}{(\alpha^2-1)} + \norm{ \frac{\alpha^2\mu_i -\mu_j}{(\alpha^2-1)}}^2 \mcom \]

\[ \norm{x- \frac{\alpha^2\mu_i -\mu_j}{(\alpha^2-1)}}^2 < \frac{\alpha^2\norm{\mu_i-\mu_j}^2}{(\alpha^2-1)^2} \mper \]

Therefore using the terms in \prettyref{def:hatmu} we get the proposition.

\end{proof}

\definecolor{uuuuuu}{rgb}{0.26666666666666666,0.26666666666666666,0.26666666666666666}
\definecolor{xdxdff}{rgb}{0.49019607843137253,0.49019607843137253,1}
\begin{figure*}[!htb]
\centering
\begin{tikzpicture}[line cap=round,line join=round,>=triangle 45,x=1cm,y=1cm]

\draw  (-4,0) circle (2cm);
\draw  (4,0) circle (2cm);
\draw  (-4,0)-- (4,0);
\draw  (0,0)-- (-2.963591896316167,-1.7105140287698555);
\draw [->] (-4,0) -- (-2.963591896316167,-1.7105140287698555);
\draw [shift={(0,0)}]  plot[domain=3.141592653589793:3.665060695266103,variable=\t]({1*0.8054424827460394*cos(\t r)+0*0.8054424827460394*sin(\t r)},{0*0.8054424827460394*cos(\t r)+1*0.8054424827460394*sin(\t r)});
\draw [dashed] (-4,0) -- (-4,2);
\draw [dashed] (4,0) -- (4,2);
\draw [dashed] (-4,2) -- (-2,2);
\draw [dashed] (4,2) -- (2,2);
\draw (-1.4910148482484508,2.282279926033007) node[anchor=north west] {$D_{i,j} = \left(\frac{\alpha^2+1}{\alpha^2-1}\right) \norm{\mu_i - \mu_j}$};
\draw (-4.653679006500432,0.25507812831737947) node[anchor=north west] {$\hat \mu_{i,j}$};
\draw (-2.947291924891407,0.06972577274710827) node[anchor=north west] {$\mu_i$};
\draw (4.157881705302359,0.25507812831737947) node[anchor=north west] {$\hat \mu_{j,i}$};
\draw (2.6132787422167585,0.06972577274710827) node[anchor=north west] {$\mu_j$};
\draw[decoration={brace,raise=5pt},decorate]
  (-4,0) -- node[above=6pt] {$\frac{D_{i,j}}{\alpha^2+1}$} (-2.8,0);
\draw (-5.389141367036774,-0.46573658778923077) node[anchor=north west] {$ \left( \frac{\alpha D_{i,j}}{\alpha^2+1} \right) = r_{i,j}$};
\draw (-1.361499549456856,-0.09032047892158285) node[anchor=north west] {$\theta$};
\draw (-1.1967419000610584,-0.6245471754402816) node[anchor=north west] {$\theta = \tan^{-1}\left({\frac{2 \alpha}{\alpha^2-1}}\right)$};
\draw [->] (0,0) -- (-1.3,0);
\draw (-1,0) node[above=2pt] {$u$};
\draw [dashed] (-2,0) -- (-2,-2);
\draw [dashed] (2,0) -- (2,-2);
\draw [dashed] (-2,-2) -- (-1.108416013618413,-2.010339550874824);
\draw [dashed] (2,-2) -- (1.1245128359098342,-2.010339550874824);
\draw [red,dashed] (0,0) -- (-5.980269880825548,3.452710425521241);
\draw [red,dashed] (0,0) -- (-5.980269880825548,-3.452710425521241);
\draw [red,dashed] (0,0) -- (5.980269880825548,3.452710425521241);
\draw [red,dashed] (0,0) -- (5.980269880825548,-3.452710425521241);
\draw (-1.1642530703970901,-1.5924983656405867) node[anchor=north west] {$d_{i,j}=\frac{(\alpha-1)^2}{\alpha^2+1} D_{i,j}$};
\draw (-0.8672266012694635,0.9917299012026677) node[anchor=north west] {$p_{i,j} = \frac{\mu_i + \mu_j}{2}$};
\draw (-5.562819609049692,0.9758928444239897) node[anchor=north west] {$C_{i,j}$};
\draw (4.967022307851619,0.9758928444239897) node[anchor=north west] {$C_{j,i}$};
\begin{scriptsize}
\draw [fill=xdxdff] (-4,0) circle (1pt);
\draw [fill=xdxdff] (4,0) circle (1pt);
\draw [fill=xdxdff] (-2.803128981670084,0) circle (1pt);
\draw [fill=xdxdff] (2.819225803961505,0) circle (1pt);
\draw [fill=xdxdff] (0,0) circle (1pt);
\draw [fill=uuuuuu] (-2,0) circle (1pt);
\draw [fill=uuuuuu] (2,0) circle (1pt);
\end{scriptsize}
\end{tikzpicture}
\caption{Geometric implication of $\alpha$-center proximity property.}
\label{fig:amps}
\end{figure*}
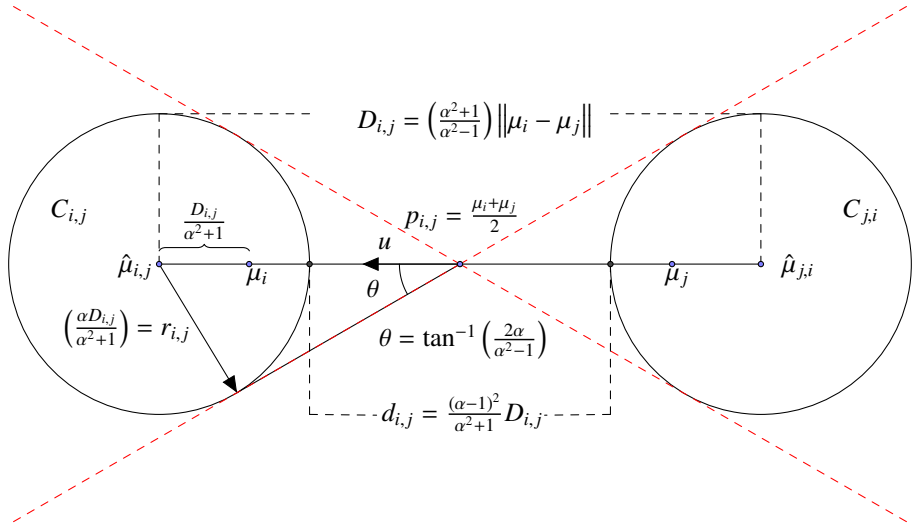

\begin{proposition}\label{prop:geom}
Suppose there exists a clustering $C_1,...,C_k$ of $X$ that is $\alpha$-center proximal (for any $\alpha > 1$). From \prettyref{prop:amps} we know that for any two clusters $C_i$ and $C_j$, the clusters $(i \neq j)$ lie inside disjoint balls centered at $\Hat\mu_{i,j}$ and $\Hat\mu_{j,i}$ with radius $r_{ij}$ and $r_{ji}$, respectively. The following structural properties for $\alpha$-center proximal clustering holds:
\begin{enumerate}
  \item[(a)] Distance between the respective centers of the balls is $D_{i,j} = \norm{\Hat \mu_{i,j} - \Hat\mu_{j,i}} = \frac{\alpha^2+1}{\alpha^2-1}\norm{\mu_i-\mu_j}$.
  \item[(b)] The mid-point of the line joining the respective centers of the two balls is $p_{i,j}= \frac{\Hat \mu_{i,j} + \Hat\mu_{j,i}}{2} = \frac{\mu_i+\mu_j}{2}$.
  \item[(c)] The distance between the closest points in the two balls is greater than $d_{i,j} = \frac{\alpha-1}{\alpha+1}\norm{\mu_i-\mu_j} =\frac{(\alpha-1)^2}{\alpha^2+1} D_{i,j}$.
  \item[(d)]The radius of the ball $C_{i,j}$ is $r_{i,j}=\frac{\alpha}{\alpha^2-1}\norm{\mu_i-\mu_j}= \frac{\alpha}{(\alpha-1)^2}d_{i,j} = \frac{\alpha }{\alpha^2+1}D_{i,j}$.
  \item[(e)] Distance between the mean of the cluster and the center of the ball corresponding to the cluster is $\norm{\Hat \mu_{i,j} - \mu_i}=\frac{1}{\alpha^2-1}\norm{\mu_i-\mu_j}=\frac{1}{(\alpha-1)^2}d_{i,j} = \frac{1}{\alpha^2+1}D_{i,j}$.
  \item[(f)] The two clusters lie inside a cone with apex at $p_{i,j}$ whose axis is along the line joining $(\Hat\mu_{i,j}-\Hat\mu_{j,i})$ with half-angle $\tan^{-1}\left(\frac{2\alpha}{\alpha^2-1}\right)$.
  \item[(g)] The diameter of the cluster $C_i$ can be bounded by the radius of the ball $C_{i,j}$, $\diam{C_i} \leq \frac{2\alpha}{\alpha^2-1}\norm{\mu_i-\mu_j} = \frac{2\alpha }{\alpha^2+1}D_{i,j}$.
  \item[(h)] For any $x \in C_i, \norm{x-\mu_j} > \frac{\alpha}{\alpha+1} \norm{\mu_i-\mu_j}$.
\end{enumerate}

\end{proposition}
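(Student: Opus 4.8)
The plan is to derive (a)--(h) from \prettyref{prop:amps} together with direct computation from \prettyref{def:hatmu}; once \prettyref{prop:amps} is in hand, every claim about the clusters reduces to a statement about the two balls $C_{i,j}$ and $C_{j,i}$. First I would compute, straight from the definitions, $\Hat\mu_{i,j}-\Hat\mu_{j,i}=\tfrac{\alpha^2+1}{\alpha^2-1}(\mu_i-\mu_j)$, $\Hat\mu_{i,j}+\Hat\mu_{j,i}=\mu_i+\mu_j$, and $\Hat\mu_{i,j}-\mu_i=\tfrac{1}{\alpha^2-1}(\mu_i-\mu_j)$; taking norms immediately gives (a), (b), and the first equality of (e). The remaining equalities in (c), (d), (e) are pure rewriting: using (a) to substitute $\norm{\mu_i-\mu_j}=\tfrac{\alpha^2-1}{\alpha^2+1}D_{i,j}$, and the value $d_{i,j}=\tfrac{(\alpha-1)^2}{\alpha^2-1}\norm{\mu_i-\mu_j}$ (established below) to substitute $\norm{\mu_i-\mu_j}=\tfrac{\alpha^2-1}{(\alpha-1)^2}d_{i,j}$, one turns each quantity into its three stated forms.

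For (c) and (d) themselves: since $r_{i,j}=r_{j,i}=\tfrac{\alpha}{\alpha^2-1}\norm{\mu_i-\mu_j}$ and both balls are centred on the line through $p_{i,j}$, the distance between them equals $D_{i,j}-2r_{i,j}=\tfrac{\alpha^2-2\alpha+1}{\alpha^2-1}\norm{\mu_i-\mu_j}=\tfrac{(\alpha-1)^2}{\alpha^2-1}\norm{\mu_i-\mu_j}=\tfrac{\alpha-1}{\alpha+1}\norm{\mu_i-\mu_j}$, and this is a \emph{strict} lower bound on the distance between $C_i$ and $C_j$ because \prettyref{prop:amps} places them inside the open balls. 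Part (g) is immediate from \prettyref{prop:amps}: $C_i$ lies in a ball of radius $r_{i,j}$, so $\diam{C_i}\le 2r_{i,j}$. Part (h) uses only $\alpha$-center proximity and the triangle inequality: $\norm{\mu_i-\mu_j}\le\norm{x-\mu_i}+\norm{x-\mu_j}<\tfrac1\alpha\norm{x-\mu_j}+\norm{x-\mu_j}$, which rearranges to the claimed bound.

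The only genuinely geometric point is (f). Here I would first check that $p_{i,j}$ is exterior to $C_{i,j}$: by (a)--(b), $\norm{p_{i,j}-\Hat\mu_{i,j}}=D_{i,j}/2$, and $D_{i,j}/2>r_{i,j}$ for every $\alpha>1$ since this is equivalent to $(\alpha-1)^2>0$. Hence the tangent cone from $p_{i,j}$ to $C_{i,j}$ is well defined and contains all of $C_{i,j}$, and its half-angle $\phi$ satisfies $\sin\phi=r_{i,j}/(D_{i,j}/2)=2r_{i,j}/D_{i,j}=\tfrac{2\alpha}{\alpha^2+1}$, giving $\cos\phi=\tfrac{\alpha^2-1}{\alpha^2+1}$ and $\tan\phi=\tfrac{2\alpha}{\alpha^2-1}$. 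Running the symmetric argument for $C_{j,i}$ and using (b) to see that both cones share the apex $p_{i,j}$ with common axis $u$, both clusters lie in the resulting (double) cone of half-angle $\tan^{-1}\!\big(\tfrac{2\alpha}{\alpha^2-1}\big)$. I expect the tangent-cone bookkeeping in (f) --- verifying that the apex is outside the ball and that the cone contains the whole ball rather than just a spherical cap --- to be the one place needing care; the rest is routine algebra shuffling between the three interchangeable expressions for $\norm{\mu_i-\mu_j}$.
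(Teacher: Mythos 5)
Your proposal is correct and follows essentially the same route as the paper: direct computation of $\Hat\mu_{i,j}\pm\Hat\mu_{j,i}$ and $\Hat\mu_{i,j}-\mu_i$ for (a), (b), (e); the identity $d_{i,j}=D_{i,j}-2r_{i,j}$ for (c), (d); containment in the ball for (g); and the tangent-line computation $\sin\theta=2r_{i,j}/D_{i,j}=\tfrac{2\alpha}{\alpha^2+1}$, hence $\tan\theta=\tfrac{2\alpha}{\alpha^2-1}$, for (f), where your extra checks (apex exterior to the ball, cone containing the whole ball) are sound points the paper leaves implicit. The one place you genuinely deviate is (h): the paper bounds $\norm{x-\mu_j}$ via the ball geometry as $r_{i,j}+d_{i,j}-\norm{\Hat\mu_{j,i}-\mu_j}$, whereas you apply the triangle inequality $\norm{\mu_i-\mu_j}\le\norm{x-\mu_i}+\norm{x-\mu_j}<\paren{1+\tfrac1\alpha}\norm{x-\mu_j}$ directly to the center-proximity condition; both yield $\norm{x-\mu_j}>\tfrac{\alpha}{\alpha+1}\norm{\mu_i-\mu_j}$, and your version is slightly more elementary since it bypasses \prettyref{prop:amps} entirely for that part.
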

\begin{proof}
\begin{enumerate}
  \item[(a)] $D_{i,j} = \norm{\Hat\mu_{i,j} - \Hat\mu_{j,i}}$.\
  \[ D_{i,j} = \norm{\frac{\alpha^2 \mu_i- \mu_j}{\alpha^2-1} - \frac{\alpha^2 \mu_j- \mu_i}{\alpha^2-1}} = \left( \frac{\alpha^2+1}{\alpha^2-1} \right) \norm{\mu_i-\mu_j} \mper \]
  \item[(b)]
  \[ p_{i,j} = \frac{\Hat\mu_{i,j} + \Hat\mu_{j,i}}{2} = \frac{1}{2}\left(\frac{\alpha^2 \mu_i- \mu_j}{\alpha^2-1} + \frac{\alpha^2 \mu_j- \mu_i}{\alpha^2-1}\right) = \frac{\mu_i+\mu_j}{2} \mper \]
   Note that $p_{i,j}$ is also the mid-point of the line joining the means of the two clusters.
  \item[(c)]
  The distance between the closest points in the two balls is greater than $d_{i,j} = D_{i,j}-2r_{i,j}$. Putting in the value of $D_{i,j}$ from part (a) of the proposition, and $r_{i,j}$ from the previous proposition, we get that
  \[ d_{i,j} > \paren{ \frac{\alpha^2+1}{\alpha^2-1}   - \frac{2\alpha}{\alpha^2-1}}\norm{\mu_i-\mu_j}  = \frac{(\alpha-1)^2}{\alpha^2-1}\norm{\mu_i-\mu_j} = \frac{(\alpha-1)^2}{\alpha^2+1} D_{i,j} \mper \]
  \item[(d)] The proof follows from the definition of $r_{i,j}$ and part (c) of the proposition.
  \item[(e)] \[ \norm{\Hat \mu_{i,j} - \mu_i} = \norm{\frac{\alpha^2 \mu_i-\mu_j}{\alpha^2-1} - \mu_i}\ = \frac{1}{\alpha^2-1}\norm{\mu_i-\mu_j} \mper \]
  The proof follows using the part (c) of the proposition.

  \item[(f)] Let the apex of the cone (with its axis along $u$) be at $p_{i,j}$. Draw the tangent from $p_{i,j}$ to $C_{i,j}$. Using the fact that the tangent is perpendicular to radius drawn on the point of contact of the ball and the tangent line, we get for the half-angle $\theta$ that $\sin{\theta} = \frac{r_{i,j}}{D_{i,j}/2} = \frac{\alpha / (\alpha^2+1)}{1 / 2}$. Therefore $\tan{\theta} = \frac{2\alpha}{\alpha^2-1}$.
  \item[(g)] The equation \ref{amps} implies that all the points belonging to the cluster $C_{i}$ must lie inside the ball $C_{i,j}$ of radius $r_{i,j}$. Therefore the proof follows using part (d) of the proposition.
  \item[(h)] The smallest distance between any point in $C_i$ and $\mu_j$ is greater than $r_{i,j} + d_{i,j} - \norm{\Hat \mu_{j,i} - \mu_j} $. Using part (d) and (e) of the proposition, we get
  \[ r_{i,j} + d_{i,j} - \norm{\Hat \mu_{j,i} - \mu_j} = \paren{\frac{\alpha}{\alpha^2-1} + \frac{(\alpha-1)^2}{\alpha^2-1} - \frac{1}{\alpha^2-1}} \norm{\mu_i-\mu_j} = \frac{\alpha}{\alpha+1} \norm{\mu_i-\mu_j} \mper \]
\end{enumerate}
\end{proof}

\begin{remark}\label{rmk:amps}
The line joining the respective centers of the ball passes through the mean of the clusters as well. Refer to the \prettyref{fig:amps} for getting an insight into the geometric structure defined by the equation \ref{amps}. We refer the reader to figure 1a of \cite{Dutta17} for insight into the geometry of $\e$-additive perturbation resilient instance, and to notice the similarity between the geometric structures of instances satisfying the two stability properties.
\end{remark}

\subsection{Estimating Means Suffices for Cluster Recovery}

\RestyleAlgo{boxruled}
\begin{algorithm}
\caption{$\alpha$-center proximal $k$-means clustering with balanced clusters}
\label{alg:balance-means-known}
\begin{algorithmic}[1]
\REQUIRE a list $\cL$ of $k$-tuples, where there exists at least one tuple $\paren{\tmu_1,\ldots,\tmu_k}$ such that for any $i \in \brac{k}$, \\  $\norm{\tmu_i - \mu_i} \leq 2 \delta \rij $ for all $j \in \brac{k}$, a number $\alpha >1$, $\omega > 0$
\ENSURE An $\alpha$-center proximal clustering of minimum cost where the size of each cluster is at least $\omega n / k$.
\STATE Initialize $(\tilde{\mu}_{1}^{(p)}, \tilde{\mu}_{2}^{(p)}, \dotsc, \tilde{\mu}_{k}^{(p)}) \leftarrow (\bar{0}, \bar{0}, \dotsc, \bar{0})$
\STATE $\text{cost} \leftarrow \sum_{i=1}^{n} \min_{1 \leq j \leq k} \norm{x_{i} - \tilde{\mu}_{j}^{(0)}}^{2}$
\FOR{ $\text{a tuple } \tilde \mu^{(q)} \in {{\cL}}$}
\FOR{$j=1 \text{ to } k$}
\STATE $C_{j}^{(q)} = \{x_{i} \suchthat \tilde{\mu}_{j}^{(q)}~ \text{is the nearest center to}~ x_{i}\}$
\ENDFOR
\IF{$\sum_{i=1}^{n} \min_{1 \leq j \leq k} \norm{x_{i} - \tilde{\mu}_{j}^{(q)}}^{2} < \text{cost}$}
\IF{$C_{1}^{(q)},\ldots,C_{k}^{(q)} \text{ all have size at least } \frac{\omega n}{k} \text{ and satisfy } \alpha\text{-center proximity}$}
\STATE $(\tilde{\mu}_{1}^{(p)}, \tilde{\mu}_{2}^{(p)}, \dotsc, \tilde{\mu}_{k}^{(p)}) \leftarrow (\tilde{\mu}_{1}^{(q)}, \tilde{\mu}_{2}^{(q)}, \dotsc, \tilde{\mu}_{k}^{(q)}) $
\STATE cost $= \sum_{i=1}^{n} \min_{1 \leq j \leq k} \norm{x_{i} - \tilde{\mu}_{j}^{(q)}}^{2}$
\ENDIF
\ENDIF
\ENDFOR
\FOR{$i=1$ to $n$}
\STATE $\text{Label}(i) \leftarrow \underset{1 \leq j \leq k}{\text{argmin}} \norm{x_{i} - \tilde{\mu}_{j}^{(p)}}$
\ENDFOR
\end{algorithmic}
\end{algorithm}

\begin{algorithm}
\caption{$\alpha$-center proximal $k$-means clustering where the distance between the means is bounded}
\label{alg:gamma-means-known}
\begin{algorithmic}[1]
\REQUIRE a list $\cL$ of $k$-tuples, where there exists at least one tuple $\paren{\tmu_1,\ldots,\tmu_k}$ such that for any $i \in \brac{k}$, \\  $\norm{\tmu_i - \mu_i} \leq 2 \delta \rij $ for all $j \in \brac{k}$, a number $\alpha >1$, $\gamma \geq 1$
\ENSURE An $\alpha$-center proximal clustering of minimum cost where $ \gamma \leq \dfrac{\max_{i,j} \norm{\mu_i-\mu_j}}{\min_{i \neq j} \norm{\mu_i - \mu_j}}$
\STATE Initialize $(\tilde{\mu}_{1}^{(p)}, \tilde{\mu}_{2}^{(p)}, \dotsc, \tilde{\mu}_{k}^{(p)}) \leftarrow (\bar{0}, \bar{0}, \dotsc, \bar{0})$
\STATE $\text{cost} \leftarrow \sum_{i=1}^{n} \min_{1 \leq j \leq k} \norm{x_{i} - \tilde{\mu}_{j}^{(0)}}^{2}$
\FOR{ $\text{a tuple } \tilde \mu^{(q)} \in {{\cL}}$}
\FOR{$j=1 \text{ to } k$}
\STATE $C_{j}^{(q)} = \{x_{i} \suchthat \tilde{\mu}_{j}^{(q)}~ \text{is the nearest center to}~ x_{i}\}$
\ENDFOR
\STATE Find the actual means $(\mu_1^{(q)},\ldots,\mu_k^{(q)})$ of $C_{1}^{(q)},\ldots,C_{k}^{(q)}$
\IF{$\sum_{i=1}^{n} \min_{1 \leq j \leq k} \norm{x_{i} - \tilde{\mu}_{j}^{(q)}}^{2} < \text{cost}$}
\IF{$C_{1}^{(q)},\ldots,C_{k}^{(q)} \text{ satisfy } \alpha\text{-center proximity and } \gamma \geq \dfrac{\max_{i,j} \norm{\mu_i^{(p)}-\mu_j^{(p)}}}{\min_{i \neq j} \norm{\mu_i^{(p)} - \mu_j^{(p)}}}$}
\STATE $(\tilde{\mu}_{1}^{(p)}, \tilde{\mu}_{2}^{(p)}, \dotsc, \tilde{\mu}_{k}^{(p)}) \leftarrow (\tilde{\mu}_{1}^{(q)}, \tilde{\mu}_{2}^{(q)}, \dotsc, \tilde{\mu}_{k}^{(q)}) $
\STATE cost $= \sum_{i=1}^{n} \min_{1 \leq j \leq k} \norm{x_{i} - \tilde{\mu}_{j}^{(q)}}^{2}$
\ENDIF
\ENDIF
\ENDFOR
\FOR{$i=1$ to $n$}
\STATE $\text{Label}(i) \leftarrow \underset{1 \leq j \leq k}{\text{argmin}} \norm{x_{i} - \tilde{\mu}_{j}^{(p)}}$
\ENDFOR
\end{algorithmic}
\end{algorithm}

In this section we assume that we can get access to a set of points $\set{\tmu_1,\ldots,\tmu_k}$ such that, $\norm{\tmu_i - \mu_i} \leq 2 \delta \rij $ for all $j \in \brac{k}$ and for each $i \in \brac{k}$, with probability some constant probability. We will show how to obtain such points in \prettyref{sec:main-sampling}.

The following proposition shows that to cluster a point to their desired cluster, an additive approximation to the center also works (deciding based on the proximity of the data points to the approximate center).
\begin{proposition}\label{prop:combined-sampling}
	Suppose we have a set of points $\set{\tmu_1,\ldots,\tmu_k}$ such that for any $i \in \brac{k}$,  $\norm{\tmu_i - \mu_i} \leq 2 \delta \rij $ for all $j \in \brac{k}$, then we can find the optimal clustering $C_1,\ldots, C_k$.
\end{proposition}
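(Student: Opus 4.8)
The plan is to show that the obvious rule --- assign each data point to the nearest $\tmu_j$ --- already reproduces the optimal clustering $C_1,\ldots,C_k$. This assignment is computed in $\bigo{nkd}$ time, so all the content is in the correctness statement: for every $i\neq j$ and every $x\in C_i$ one must have $\norm{x-\tmu_i}<\norm{x-\tmu_j}$. So I would fix a pair $i\neq j$ and a point $x\in C_i$ and argue this single inequality; running it over all pairs and all points then shows the nearest-$\tmu$ partition equals $\{C_1,\ldots,C_k\}$.

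First I would record two elementary facts. From \prettyref{def:hatmu}, $r_{i,j}=r_{j,i}=\tfrac{\alpha}{\alpha^2-1}\norm{\mu_i-\mu_j}$, equivalently $\norm{\mu_i-\mu_j}=\tfrac{\alpha^2-1}{\alpha}\,r_{i,j}$; and the hypothesis gives $\norm{\tmu_i-\mu_i}\le 2\delta r_{i,j}$ and $\norm{\tmu_j-\mu_j}\le 2\delta r_{j,i}=2\delta r_{i,j}$. Hence, by two applications of the triangle inequality, it suffices to prove the ``true-means'' inequality with slack,
\[
\norm{x-\mu_j}-\norm{x-\mu_i}\;>\;4\delta\, r_{i,j},
\]
since then $\norm{x-\tmu_i}\le\norm{x-\mu_i}+2\delta r_{i,j}<\norm{x-\mu_j}-2\delta r_{i,j}\le\norm{x-\tmu_j}$.

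The key step, and the one I expect to be the main obstacle, is a lower bound on $\norm{x-\mu_j}-\norm{x-\mu_i}$ obtained by combining two complementary estimates, because neither works alone: the purely multiplicative bound $\norm{x-\mu_j}>\alpha\norm{x-\mu_i}$ from $\alpha$-center proximity degenerates to a vacuous gap when $x$ approaches $\mu_i$, while the additive bound $\norm{x-\mu_j}\ge\norm{\mu_i-\mu_j}-\norm{x-\mu_i}$ through $\mu_i$ degenerates when $x$ is far from $\mu_i$. Writing $t=\norm{x-\mu_i}$, the first gives gap $>(\alpha-1)t$ and the second gives gap $\ge\norm{\mu_i-\mu_j}-2t$; whichever of these is the larger is at least $\tfrac{\alpha-1}{\alpha+1}\norm{\mu_i-\mu_j}=\tfrac{(\alpha-1)^2}{\alpha}\,r_{i,j}$ (the minimax over $t$ is attained at $t=\norm{\mu_i-\mu_j}/(\alpha+1)$), and this is exactly the ball-separation $d_{i,j}$ of \prettyref{prop:geom}(c). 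A small amount of care with the strictness (in the case where $\norm{\mu_i-\mu_j}-2t$ is the larger term, the hypothesis $t<\norm{\mu_i-\mu_j}/(\alpha+1)$ already makes it strictly larger than $\tfrac{(\alpha-1)^2}{\alpha}r_{i,j}$) yields $\norm{x-\mu_j}-\norm{x-\mu_i}>\tfrac{(\alpha-1)^2}{\alpha}\,r_{i,j}$. Thus, as long as $\delta$ is small enough in terms of $\alpha$ --- concretely $\delta\le\tfrac{(\alpha-1)^2}{4\alpha}$, the regime in which the sampling routine of \prettyref{sec:main-sampling} will be run --- the displayed slack inequality holds, finishing the proof. (The same smallness of $\delta$ also keeps the $\tmu$'s that sit near distinct means distinct, so the output is a genuine $k$-clustering, and the geometric picture is that $x\in C_{i,j}$ lies strictly inside a ball whose every point beats every point of $C_{j,i}$, with a margin that absorbs the $2\delta r_{i,j}$ perturbations of both centers.)
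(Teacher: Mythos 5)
Your proof is correct and takes essentially the same route as the paper: reduce via two triangle inequalities (the paper's \prettyref{lem:cor-sampling}) to the true-means slack inequality $\norm{x-\mu_j}-\norm{x-\mu_i}>4\delta\,r_{i,j}$, then absorb it using the smallness of $\delta$ (the paper fixes $\delta\le\frac{(\alpha-1)^2}{8\alpha}$ in \prettyref{alg:stable-km}, comfortably within your requirement). The only difference is cosmetic: you re-derive the gap $\frac{(\alpha-1)^2}{\alpha}r_{i,j}$ by a minimax of the multiplicative and triangle-inequality bounds, whereas the paper obtains the same estimate by citing \prettyref{prop:geom}, namely $\norm{x-\mu_j}>\frac{\alpha}{\alpha+1}\norm{\mu_i-\mu_j}=(\alpha-1)r_{i,j}$.
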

\begin{proof}
We will break the proof of the proposition into the following lemmas.
\begin{lemma}
\label{lem:cor-sampling}
For any $j \in [k]$, for all $i \in \brac{k}$, and for any point $x$, we have
	\[ \Abs{\norm{x - \tmu_j} - \norm{x - \mu_j}} \leq 2\delta \rij \mper   \]
\end{lemma}

\begin{proof}
For any two vectors $u, v \in \R^d$, the triangle inequality implies that $\Abs{ \norm{u} - \norm{v}} \leq \norm{u - v}$.
Therefore,
\[ \Abs{\norm{x - \tmu_j} - \norm{x - \mu_j}} \leq \norm{\tmu_j - \mu_j}
	\leq 2 \delta \rij \mper   \]
\end{proof}

\begin{lemma}
\label{lem:muprox}
Fix any $i \in [k]$. For any point $x \in C_i$, we have
\[ \norm{x - \tmu_i} < \norm{x - \tmu_j} \qquad \textrm{for each } j \in [k] \setminus \set{i} \mper   \]
\end{lemma}

\begin{proof}
By our choice of parameters in \prettyref{alg:stable-km}, we have
\begin{equation}
\label{eq:alphadelta}
	\delta \leq \frac{\paren{\alpha - 1}^2}{8 \alpha} \mper
\end{equation}

Fix any $j \in [k] \setminus \set{i}$. Using \prettyref{def:center-prox}, we get
\begin{equation}
\label{eq:alpha2}
	\norm{x - \mu_i} < \frac{1}{\alpha} \norm{x - \mu_j} \mper
\end{equation}

\begin{align*}
	\norm{x - \tmu_i} & \leq \norm{x - \mu_i} + 2 \delta \rij & \paren{\textrm{\prettyref{lem:cor-sampling}}} \\
	& < \frac{1}{\alpha} \norm{x - \mu_j} + 2 \delta \rij
	& \paren{\textrm{Using \prettyref{eq:alpha2}}}\\
	& = \paren{\norm{x - \mu_j} - 2 \delta \rij} + \paren{\frac{1}{\alpha} -1 } \norm{x - \mu_j} \\
	& \qquad \qquad	+ 2\delta \rij + 2\delta \rij \\
	& \leq \norm{x - \tmu_j} + 4 \delta \rij -\frac{\alpha - 1}{\alpha} \paren{\alpha - 1} \rij
	& \paren{\textrm{Using \prettyref{lem:cor-sampling} and \prettyref{prop:geom}}} \\
	& \leq  \norm{x - \tmu_j} + \rij \paren{4 \delta - \frac{\paren{\alpha - 1}^2}{\alpha}} \\
	& \leq \norm{x - \tmu_j} +  \rij \cdot \paren{ -\frac{\paren{\alpha - 1}^2}{2\alpha}}
		< \norm{x - \tmu_j} \mper
	& \paren{\textrm{Using \prettyref{eq:alphadelta}}}
\end{align*}
\end{proof}

The next lemma shows that the clustering obtained by approximate centers corresponds to the optimal clustering, ignoring the outliers.
\begin{lemma}
\label{lem:cip}
For any $i \in [k]$, $\tilde C_i \cap \paren{X \setminus Z} = C_i$.
\end{lemma}

\begin{proof}
\prettyref{lem:muprox} implies that each $x \in C_i$ is closest to $\tmu_i$
out of $\set{\tmu_1, \ldots, \tmu_k}$. Therefore, $x \in \tilde C_i$, and hence
\begin{equation}
\label{eq:muprox1}
	C_i \subseteq \tilde C_i \cap \paren{X \setminus Z} \qquad \textrm{for each } i \in [k] \mper
\end{equation}
By construction, each $x \in X$ belongs to exactly one out of $\set{\tilde C_1, \ldots, \tilde C_k}$.
Therefore,
\begin{equation}
\label{eq:muprox2}
\paren{\tilde C_i \cap \paren{X \setminus Z}} \cap  \paren{\tilde C_j \cap \paren{X \setminus Z}}
	= \emptyset \qquad \textrm{for each } i,j \in [k], i \neq j \mcom
\end{equation}
and,
\begin{equation}
\label{eq:muprox3}
\cup_{i \in [k]} \paren{ \tilde C_i \cap \paren{X \setminus Z}}
	= X \cap \paren{X \setminus Z} = \cup_{i \in [k]} C_i \mper
\end{equation}
\prettyref{eq:muprox1}, \prettyref{eq:muprox2} and \prettyref{eq:muprox3} imply the lemma.
\end{proof}
Therefore the \prettyref{lem:cip} implies the statement of the proposition, as in this setting, since $X \setminus Z = X$, \prettyref{lem:cip} implies that $\tilde C_i = C_i$.
\end{proof}

We are now ready to prove \prettyref{thm:mpr} and \prettyref{thm:mpr-gamma}
\begin{proof}[Proof of \prettyref{thm:mpr}]
Using \prettyref{prop:sampling} for a desired clustering $C_1,\ldots,C_k$, we get a set of points $\set{\tmu_1,\ldots,\tmu_k}$, such that
\[ \Pr{ \norm{\tmu_{i} - \mu_i} \leq \delta \diam{C_i} \textrm{ for each } i \in [k] }  \geq 1 - \beta_1 \mper \]
Using \prettyref{prop:amps}, we get that for each $i \in [k]$ $\diam{C_i} \leq 2\rij$ for all $j \in [k]$. Therefore this implies that
\[ \Pr{ \norm{\tmu_{i} - \mu_i} \leq 2 \delta \rij \textrm{ for all } j \in [k]\mcom \textrm{ for each } i \in [k] }  \geq 1 - \beta_1 \mper \]
This can be achieved by \prettyref{alg:stable-km}

Now, using \prettyref{prop:combined-sampling} we get that we can get an exact clustering. This step is achieved by \prettyref{alg:balance-means-known}.
\end{proof}

\begin{proof}[Proof of \prettyref{thm:mpr-gamma}]
	Using \prettyref{prop:sampling-mean} for a desired clustering $C_1,\ldots,C_k$, we get a set of points $\set{\tmu_1,\ldots,\tmu_k}$, such that \[ \norm{\tmu_i - \mu_i} \leq 2 \delta \rij \textrm{ for all } j \in [k] \textrm { and for each } i \in \brac{k}     \mper \]
	with constant probability.
	This step is achieved by \prettyref{alg:centers}.

	Now, using \prettyref{prop:combined-sampling} we get that we can get an exact clustering. This step is achieved by \prettyref{alg:gamma-means-known}.

	The running time of the algorithm is $\bigo{2^{\poly\paren{k / \e }} nd}$, and from equation \prettyref{eq:eps-val} and \prettyref{eq:alphadelta} we get that
	\[ \e \leq \paren{ \frac{ 2 (\alpha-1)^2 }{ 8 \alpha k \gamma } }^2 \mper \]
	Therefore, we get a running time of $\bigo{2^{\poly\paren{k \gamma / (\alpha-1) }} nd}$.
\end{proof}

\subsection{$\alpha$-Outliers Center Proximity}
In this section we assume that we can get access to a set of points $\set{\tmu_1,\ldots,\tmu_k}$ such that, $\norm{\tmu_i - \mu_i} \leq  \delta \diam{C_i} $ for all $i \in \brac{k}$, with probability some constant probability. We will show how to obtain such points in \prettyref{sec:balance-sampling}
\RestyleAlgo{boxruled}
\begin{algorithm}
\caption{$\alpha$-center proximal $k$-means clustering with balanced clusters and outliers}
\label{alg:balance-means-known-out}
\begin{algorithmic}[1]
\REQUIRE a list $\cL$ of $k$-tuples, where there exists at least one tuple $\paren{\tmu_1,\ldots,\tmu_k}$ such that for any $i \in \brac{k}$, \\  $\norm{\tmu_i - \mu_i} \leq \delta \diam{C_i} $ for all $j \in \brac{k}$, a number $\alpha >1$, $\omega > 0$, and $z > 0$
\ENSURE An $\alpha$-center proximal clustering where the size of each cluster is at least $\omega n / k$.
\STATE Initialize $(\tilde{\mu}_{1}^{(p)}, \tilde{\mu}_{2}^{(p)}, \dotsc, \tilde{\mu}_{k}^{(p)}) \leftarrow (\bar{0}, \bar{0}, \dotsc, \bar{0})$
\STATE $\text{cost} \leftarrow \sum_{i=1}^{n} \min_{1 \leq j \leq k} \norm{x_{i} - \tilde{\mu}_{j}^{(0)}}^{2}$
\FOR{ $\text{a tuple } \tilde \mu^{(q)} \in {{\cL}}$}
\FOR{$j=1 \text{ to } k$}
\STATE $C_{j}^{(q)} = \{x_{i} \suchthat \tilde{\mu}_{j}^{(q)}~ \text{is the nearest center to}~ x_{i}\}$
\ENDFOR
\STATE Remove the $z$ farthest points with respect to $(\tilde{\mu}_{1}^{(p)}, \tilde{\mu}_{2}^{(p)}, \dotsc, \tilde{\mu}_{k}^{(p)})$
\IF{$\sum_{i=1}^{n} \min_{1 \leq j \leq k} \norm{x_{i} - \tilde{\mu}_{j}^{(q)}}^{2} < \text{cost}$}
\IF{$C_{1}^{(q)},\ldots,C_{k}^{(q)} \text{ all have size at least } \frac{\omega n}{k} \text{ and satisfy } \alpha\text{-center proximity}$}
\STATE $(\tilde{\mu}_{1}^{(p)}, \tilde{\mu}_{2}^{(p)}, \dotsc, \tilde{\mu}_{k}^{(p)}) \leftarrow (\tilde{\mu}_{1}^{(q)}, \tilde{\mu}_{2}^{(q)}, \dotsc, \tilde{\mu}_{k}^{(q)}) $
\STATE cost $= \sum_{i=1}^{n} \min_{1 \leq j \leq k} \norm{x_{i} - \tilde{\mu}_{j}^{(q)}}^{2}$
\ENDIF
\ENDIF
\ENDFOR
\FOR{$i=1$ to $n$}
\STATE $\text{Label}(i) \leftarrow \underset{1 \leq j \leq k}{\text{argmin}} \norm{x_{i} - \tilde{\mu}_{j}^{(p)}}$
\ENDFOR
\end{algorithmic}
\end{algorithm}

We first note that, by our \prettyref{def:alpha_ocp}, outliers are the farthest most points in the data-set, where distance is measured from the set of optimal centers $\mu_1,\ldots,\mu_k$.
\begin{lemma}
  The \prettyref{def:alpha_ocp} implies that for any $q \in Z$
  \begin{equation}\label{eq:outlier_diam}
    \norm{q-\mu_j} \geq \alpha \frac{\diam{C_i} + \diam{C_j} }{4} \mcom \qquad \paren{\textrm{ for all } i , j \in [k]} \mcom
  \end{equation}
\end{lemma}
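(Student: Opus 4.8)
The plan is to use only the outlier clause of \prettyref{def:alpha_ocp} --- namely that for all $\ell, m \in [k]$, all $x \in C_\ell$, and all $q \in Z$ one has $\norm{q - \mu_m} > \alpha \norm{x - \mu_\ell}$ --- together with the elementary fact that a cluster's diameter is at most twice the distance from its most extreme point to its center. The whole argument is a short manipulation; the only idea is an averaging step that produces the symmetric right-hand side.

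First I would fix an arbitrary $\ell \in [k]$ and pick $u, v \in C_\ell$ with $\norm{u - v} = \diam{C_\ell}$. The triangle inequality $\norm{u - v} \leq \norm{u - \mu_\ell} + \norm{v - \mu_\ell}$ forces at least one of $u, v$ --- call it $x_\ell$ --- to satisfy $\norm{x_\ell - \mu_\ell} \geq \diam{C_\ell}/2$. Substituting $x = x_\ell$ into the outlier inequality then yields, for every $q \in Z$ and every $m \in [k]$,
\[
\norm{q - \mu_m} \;>\; \alpha \, \norm{x_\ell - \mu_\ell} \;\geq\; \frac{\alpha}{2}\, \diam{C_\ell} \mper
\]

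To finish, I would apply the displayed bound twice with $m = j$: once with $\ell = i$, giving $\norm{q - \mu_j} > \tfrac{\alpha}{2}\diam{C_i}$, and once with $\ell = j$ (a legal choice, since the outlier clause places no restriction on the two indices, unlike the first clause of \prettyref{def:alpha_ocp}), giving $\norm{q - \mu_j} > \tfrac{\alpha}{2}\diam{C_j}$. Averaging these two inequalities gives $\norm{q - \mu_j} > \tfrac{\alpha}{4}\paren{\diam{C_i} + \diam{C_j}}$, which is even slightly stronger than the asserted inequality. I do not expect any genuine obstacle: the statement is essentially a restatement of the outlier promise, and the only points needing mild care are that the promise must be invoked for an arbitrary center index $m$ (so that $\mu_j$, and not merely the nearest center, may appear on the left) and that the diameter-to-radius conversion costs exactly the factor $2$ which, combined with the averaging, produces the $4$ in the denominator.
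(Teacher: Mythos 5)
Your proposal matches the paper's argument essentially step for step: both invoke the outlier clause of \prettyref{def:alpha_ocp} for an arbitrary center index, both use the triangle inequality on a diameter-realizing pair to produce a point of each cluster at distance at least half the diameter from its mean, and both combine the two resulting bounds $\norm{q-\mu_j} \geq \alpha\,\diam{C_i}/2$ and $\norm{q-\mu_j} \geq \alpha\,\diam{C_j}/2$ by averaging to get the stated inequality. The proof is correct and takes the same route as the paper.
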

\begin{proof}
  From the \prettyref{def:alpha_ocp} we get that for any $p_i \in C_i$, $p_j \in C_j$ and $q \in Z$, we get that $\alpha \norm{p_i - \mu_i} \leq \norm{q - \mu_j}$ and  $\alpha \norm{p_j - \mu_j} \leq \norm{q - \mu_j}$.
  There exists at least one point $p_{i_1} \in C_i$ such that $\norm{p_{i_1} - \mu_i} \geq \diam{C_i}/2$, this is because the diameter is defined as $\max_{p_{i_1},p_{i_2} \in C_i}\norm{p_{i_1}-p{i_2}}$. We will prove this by contradiction.
  Suppose $\norm{p_1-p_2} = \diam{C_i}$, and $\norm{p_{i_1} - \mu_i} < \diam{C_i}/2$, and  $\norm{p_{i_2} - \mu_i} < \diam{C_i}/2$, then this contradicts the triangle inequality, as $\norm{p_1-p_2} > \norm{p_{i_1} - \mu_i} + \norm{p_{i_2} - \mu_i}$.

  Therefore we get that $\norm{q - \mu_j} \geq \alpha \diam{C_i}/2$ and $\norm{q - \mu_j} \geq \alpha \diam{C_j}/2$. Therefore the statement of the lemma follows.
\end{proof}

\begin{lemma} \label{lem:outlier_dist}
  For any $q \in Z$ and $x \in C_i$, we have that $\norm{x - \tmu_i} < \norm{q -  \tmu_j}$, for $i,j \in [k]$.
\end{lemma}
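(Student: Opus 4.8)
The plan is to transfer everything from the true means $\mu_i,\mu_j$ to the approximate centers $\tmu_i,\tmu_j$ via the triangle inequality and the hypothesis $\norm{\tmu_\ell-\mu_\ell}\le\delta\diam{C_\ell}$, and then to exhibit a gap between $\norm{x-\mu_i}$ and $\norm{q-\mu_j}$ large enough to absorb the two resulting $\delta\diam{C_\cdot}$ error terms. Concretely, the first step is the estimate
\[ \norm{q-\tmu_j}-\norm{x-\tmu_i}\;\ge\;\big(\norm{q-\mu_j}-\norm{x-\mu_i}\big)-\delta\big(\diam{C_i}+\diam{C_j}\big), \]
which comes from two applications of the triangle inequality together with the bounds $\norm{\tmu_i-\mu_i}\le\delta\diam{C_i}$ and $\norm{\tmu_j-\mu_j}\le\delta\diam{C_j}$. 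So it suffices to show that the ``true'' gap $\norm{q-\mu_j}-\norm{x-\mu_i}$ exceeds $\delta(\diam{C_i}+\diam{C_j})$.

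For that I would use two consequences of $\alpha$-center proximity with outliers. First, directly from \prettyref{def:alpha_ocp}, $\norm{q-\mu_j}>\alpha\norm{x-\mu_i}$, which rearranges to $\norm{q-\mu_j}-\norm{x-\mu_i}>\tfrac{\alpha-1}{\alpha}\norm{q-\mu_j}$. Second, \prettyref{eq:outlier_diam} gives $\norm{q-\mu_j}\ge\tfrac{\alpha}{4}(\diam{C_i}+\diam{C_j})$. Chaining these,
\[ \norm{q-\mu_j}-\norm{x-\mu_i}\;>\;\frac{\alpha-1}{\alpha}\cdot\frac{\alpha}{4}\big(\diam{C_i}+\diam{C_j}\big)\;=\;\frac{\alpha-1}{4}\big(\diam{C_i}+\diam{C_j}\big), \]
so the desired inequality holds provided $\delta<(\alpha-1)/4$. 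This is guaranteed by the choice $\delta\le\tfrac{(\alpha-1)^2}{8\alpha}$ from \prettyref{eq:alphadelta}, since $\tfrac{(\alpha-1)^2}{8\alpha}<\tfrac{\alpha-1}{4}$ for every $\alpha>1$. Putting the pieces together gives $\norm{q-\tmu_j}-\norm{x-\tmu_i}>\big(\tfrac{\alpha-1}{4}-\delta\big)(\diam{C_i}+\diam{C_j})\ge0$, which is \prettyref{lem:outlier_dist}.

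The step I expect to need the most care is choosing which ``gap'' to work with. Using only the pointwise bound $\norm{q-\mu_j}>\alpha\norm{x-\mu_i}$ yields a gap of $(\alpha-1)\norm{x-\mu_i}$, which is useless when $x$ sits (almost) at its own mean while $\diam{C_j}$ — and hence the error term $\delta\diam{C_j}$ — is not small; conversely, using only \prettyref{eq:outlier_diam} fails when $\alpha<4$ and $C_i$ is much larger than $C_j$. Trading a factor $1/\alpha$ from the pointwise bound for the diameter bound is exactly what makes the error, now uniformly measured in $\diam{C_i}+\diam{C_j}$, dominated — and it also explains why the hypothesis is stated in terms of $\diam{C_\ell}$ rather than $\norm{\mu_i-\mu_j}$. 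The only residual subtlety is the degenerate case $\diam{C_i}+\diam{C_j}=0$, where both clusters are singletons and $x=\mu_i=\tmu_i$; there the strict inequality $\norm{q-\mu_j}>\alpha\cdot 0=0$ in \prettyref{def:alpha_ocp} still forces $\norm{q-\tmu_j}=\norm{q-\mu_j}>0=\norm{x-\tmu_i}$, so the claim holds in that case as well.
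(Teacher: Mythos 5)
Your proposal is correct and follows essentially the same route as the paper's proof: both transfer the inequality from the true means to the approximate centers via the triangle inequality (the paper's \prettyref{lem:cor-sampling}), invoke \prettyref{def:alpha_ocp} for the pointwise gap, convert it through \prettyref{eq:outlier_diam} into a lower bound of $\tfrac{\alpha-1}{4}\paren{\diam{C_i}+\diam{C_j}}$, and absorb the $\delta$-error terms using \prettyref{eq:alphadelta}. The only difference is presentational (you bound the difference $\norm{q-\tmu_j}-\norm{x-\tmu_i}$ directly, while the paper chains inequalities starting from $\norm{x-\tmu_i}$), plus your explicit remark on the degenerate singleton case, which the chain already covers.
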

\begin{proof}
  \begin{align*}
  	\norm{x - \tmu_i} & \leq \norm{x - \mu_i} + \delta \diam{C_i} & \paren{\text{\prettyref{lem:cor-sampling}}} \\
    & < \frac{1}{\alpha} \norm{q - \mu_j} + \delta \diam{C_i} & \paren{\text{\prettyref{def:alpha_ocp}}} \\
    & = \paren{\norm{q - \mu_j} - \delta \diam{C_j}} + \paren{\frac{1}{\alpha} -1 } \norm{q - \mu_j} \\
    & \qquad \qquad	+ \delta \diam{C_i} + \delta \diam{C_j} \\
    & \leq \norm{q - \tmu_j} + \delta\paren{\diam{C_i} + \diam{C_j}} - \frac{\alpha - 1}{\alpha} \norm{q-\mu_j} & \paren{\text{\prettyref{lem:cor-sampling}}} \\
    & \leq \norm{q - \tmu_j} + \paren{\delta - \frac{(\alpha-1)}{4}}\paren{\diam{C_i} + \diam{C_j}} & \paren{\textrm{Using \prettyref{eq:outlier_diam}}} \\
    & < \norm{q -  \tmu_j} \mper & \paren{\textrm{Using \prettyref{eq:alphadelta}}}
  \end{align*}
\end{proof}

\begin{proof}[Proof of \prettyref{thm:mpr-out}]
  Using \prettyref{prop:sampling-mean} for a desired clustering $C_1,\ldots,C_k$, we get a set of points $\set{\tmu_1,\ldots,\tmu_k}$, such that \[ \norm{\tmu_i - \mu_i} \leq  \delta \diam{C_i} \textrm{ for all } i \in [k]  \mper \]
Using the \prettyref{lem:cip} we get that all the points are closer to their approximate means than other approximate means. This step is achieved by \prettyref{alg:stable-km}. Using the \prettyref{lem:outlier_dist} we get that even with respect to the approximate means, the outlier points are the farthest points of the data-set. Therefore we can remove the farthest $|Z|$ points, and get the desired clustering. This step is achieved by \prettyref{alg:balance-means-known-out}.
\end{proof}

\begin{proof}[Proof of \prettyref{thm:beta-mpr}]
The proof immediately follows from the \prettyref{prop:sampling-mean}. We think of $\beta$ points as outliers for the \prettyref{alg:stable-km}. With a constant probability, we get that one of the clusterings in the list produced by the algorithm corresponds to $(n-\beta)$ points being $\alpha$-center proximal.
\end{proof}

\section{Sampling}\label{sec:main-sampling}
\subsection{Balanced Cluster Assumption}\label{sec:balance-sampling}
\RestyleAlgo{boxruled}
\begin{algorithm}
\caption{List of $\alpha$-center proximal $k$-means with balanced clusters with outliers}
\label{alg:stable-km}
\begin{algorithmic}[1]
\REQUIRE a set of points $X = \{x_{1}, x_{2}, \dotsc, x_{n}\} \subseteq \R^{d}$, a positive integer $k \in \N$ , a number $\alpha >1$, $\omega >0$, and $z \geq 0$
\ENSURE A list $\cL$ of $k$-tuples, where each $k$-tuple $k$-mean points of a clustering.
\STATE $\delta \leftarrow (\alpha - 1)^{2}/8\alpha$
\STATE Initialize $(\tilde{\mu}_{1}^{(0)}, \tilde{\mu}_{2}^{(0)}, \dotsc, \tilde{\mu}_{k}^{(0)}) \leftarrow (\bar{0}, \bar{0}, \dotsc, \bar{0})$
\STATE $m \leftarrow (12/\omega)~ (128 k^{2} \alpha^{2}/(\alpha - 1)^{4} \beta_{1})~ \log (2k/\beta_{1})$
\STATE $ \cL \leftarrow \emptyset$
\STATE Pick an i.i.d. and uniform sample $Y$ of size $m$ from $X$.
\IF{z = 0}
\FOR{enumeration $p \in [T]$ over the set $T$ of all $k$ disjoint partitions $Y = Y_{1}^{(p)} \cup Y_{2}^{(p)} \dotsc \cup Y_{k}^{(p)}$}
\FOR{$j=1$ to $k$}
\STATE $\tilde{\mu}_{j}^{(p)} \leftarrow \frac{1}{\abs{Y_{j}^{(p)}}} \sum_{x_{i} \in Y_{j}^{(p)}} x_{i}$
\ENDFOR
\STATE $\cL = \cL \cup \set{\paren{\tilde{\mu}_{1}^{(p)},\ldots,\tilde{\mu}_{k}^{(p)}}}$
\ENDFOR
\ENDIF
\IF{$z >0$}
\FOR{enumeration $p \in [T]$ over the set $T$ of all $k+1$ disjoint partitions $Y = Y_{1}^{(p)} \cup Y_{2}^{(p)} \dotsc \cup Y_{k+1}^{(p)}$}
\FOR{$j=1$ to $k$}
\STATE $\tilde{\mu}_{j}^{(p)} \leftarrow \frac{1}{\abs{Y_{j}^{(p)}}} \sum_{x_{i} \in Y_{j}^{(p)}} x_{i}$
\ENDFOR
\STATE $\cL = \cL \cup \set{\paren{\tilde{\mu}_{1}^{(p)},\ldots,\tilde{\mu}_{k}^{(p)}}}$
\ENDFOR
\ENDIF
\end{algorithmic}
\end{algorithm}


The following proposition works for general $A_1,\ldots,A_k,Z$. For our case, $A_i$ corresponds to cluster $C_i$, and $Z$ corresponds to the set of outliers.
\begin{proposition}
\label{prop:sampling}
Fix $\delta, \beta_1 \in (0,1)$.
Let $X$ be a set of points in $\R^d$,
partitioned into $k+1$ sets
$A_1, \ldots, A_k, Z$ such that for each $i$, $\Abs{A_i} \geq \wmin \Abs{X}/k$.
Let $\mu_i$ denote the mean of $A_i$, i.e., $\mu_i \defeq \paren{\sum_{v \in A_i} v}/\Abs{A_i}$.
There exists a randomized algorithm which outputs the set
$\set{ (\tmuu{1}{p}, \ldots, \tmuu{k}{p}) : p \in [T]}$ satisfying
\[ \Pr{ \exists p \in [T] \textrm{ such that }
\norm{\tmuu{i}{p} - \mu_i} \leq \delta \diam{A_i} \textrm{ for each } i \in [k] }  \geq 1 - \beta_1  \mcom \]
in $\bigo{T}$  iterations where  $T < (k+1)^{\frac{16k^2}{\delta^2 \beta_1} \log\paren{\frac{2k}{\beta_1}}}$.
\end{proposition}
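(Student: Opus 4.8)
The plan is to run the familiar \emph{sample-and-enumerate} scheme underlying KSS-type $k$-means PTASes, with the Inaba random-sampling lemma supplying the key concentration estimate. First I would fix a threshold $s \defeq \lceil 2k/(\delta^2\beta_1) \rceil$ and a sample size $m$ chosen large enough that both $\wmin m / k \geq 2s$ and $\wmin m / k \geq 8 \log(2k/\beta_1)$; the value $m = \Theta\Paren{\tfrac{k^2}{\wmin\,\delta^2\beta_1}\log\tfrac{2k}{\beta_1}}$ works, and this is (up to an absolute constant) the value used in \prettyref{alg:stable-km}. The algorithm draws a uniform i.i.d.\ sample $Y \subseteq X$ of size $m$ and enumerates over all labellings $Y \to [k+1]$, i.e.\ over all partitions $Y = Y_1^{(p)} \cup \dots \cup Y_{k+1}^{(p)}$; for every $p$ in which $Y_1^{(p)},\dots,Y_k^{(p)}$ are all nonempty it outputs $\Paren{\tmuu 1 p, \dots, \tmuu k p}$ with $\tmuu j p \defeq \tfrac{1}{\abs{Y_j^{(p)}}}\sum_{y \in Y_j^{(p)}} y$ (the $(k+1)$-st part plays the role of the guessed outliers, whose mean is never used). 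The number of labellings, hence $T$, is at most $(k+1)^{\abs Y} = (k+1)^m$, which upon substituting $m$ gives the bound on $T$ in the statement; each of the $T$ iterations costs $\bigo{md}$ time.

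Next I would control how much of each true cluster is sampled. Because $\abs{A_i} \geq \wmin\abs X/k$, each of the $m$ independent draws lands in $A_i$ with probability at least $\wmin/k$, so $\Ex{\abs{Y \cap A_i}} \geq \wmin m/k \geq 2s$; the multiplicative Chernoff lower tail then gives $\Pr{\abs{Y \cap A_i} < s} \leq \exp\Paren{-\wmin m/(8k)} \leq \beta_1/(2k)$, the last step using $\wmin m/k \geq 8\log(2k/\beta_1)$. A union bound over $i \in [k]$ shows that the event $E \defeq \set{\abs{Y \cap A_i} \geq s \ \textrm{for all}\ i \in [k]}$ has $\Pr{E} \geq 1 - \beta_1/2$.

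The core step is the cluster-wise concentration of the sample means. Conditioned on the labelling that records, for each sampled index, which of $A_1,\dots,A_k,Z$ contains that point, the points of $Y \cap A_i$ are an i.i.d.\ uniform sample of $A_i$, and the samples for distinct $i$ are mutually independent. For any finite $A \subseteq \R^d$ with mean $\mu$ and any i.i.d.\ uniform sample $S$ of $A$ with sample mean $\bar S$, the summands $\set{y - \mu : y \in S}$ are i.i.d.\ and mean-zero, so $\Ex{\norm{\bar S - \mu}^2} = \tfrac{1}{\abs S}\cdot\tfrac{1}{\abs A}\sum_{v \in A}\norm{v - \mu}^2 \leq \diam{A}^2/\abs S$, the inequality using $\norm{v - \mu} \leq \diam{A}$ (since $\mu$ lies in the convex hull of $A$). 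Markov's inequality then yields $\Pr{\norm{\bar S - \mu} > \delta\diam{A}} \leq 1/(\abs S\,\delta^2)$, a bound that only improves as $\abs S$ grows. Writing $p^\star$ for the labelling of $Y$ induced by the true partition $(A_1,\dots,A_k,Z)$ — so $Y_i^{(p^\star)} = Y \cap A_i$ and $\tmuu i {p^\star}$ is precisely the sample mean of $Y \cap A_i$ — applying the above with $S = Y \cap A_i$ gives, on the event $E$ (where $\abs{Y \cap A_i} \geq s$), $\Pr{\norm{\tmuu i {p^\star} - \mu_i} > \delta\diam{A_i} \given E} \leq 1/(s\,\delta^2) \leq \beta_1/(2k)$; a union bound over $i \in [k]$ shows that, conditioned on $E$, all $k$ sample means are within $\delta\diam{A_i}$ of $\mu_i$ with probability at least $1 - \beta_1/2$.

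Finally I would assemble these. On $E$ every $Y_i^{(p^\star)}$, $i \in [k]$, is nonempty, so $p^\star$ is among the enumerated labellings and all $\tmuu i {p^\star}$ are well defined; hence $\Pr{\exists\, i \in [k]:\ \norm{\tmuu i {p^\star} - \mu_i} > \delta\diam{A_i}} \leq \Pr{\neg E} + \Pr{\exists\, i:\ \norm{\tmuu i {p^\star} - \mu_i} > \delta\diam{A_i} \given E} \leq \tfrac{\beta_1}{2} + \tfrac{\beta_1}{2} = \beta_1$, which is exactly the claimed guarantee (the output being the enumerated set). The hard part will be the probabilistic bookkeeping in the two middle paragraphs — checking that conditioning on $E$, or equivalently on the index-to-part labelling, genuinely leaves $Y \cap A_i$ an \emph{independent} uniform sample of $A_i$ and can only tighten the Markov bound — and then calibrating the constants in $m$ and $s$ so that the two $\beta_1/2$ failure budgets close and the bound on $T$ comes out; everything else is a routine composition of a Chernoff bound, the Inaba lemma, and union bounds.
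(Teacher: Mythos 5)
Your proposal is correct and follows essentially the same route as the paper: a uniform i.i.d.\ sample of size $m = \Theta\paren{\tfrac{k^2}{\wmin\delta^2\beta_1}\log\tfrac{2k}{\beta_1}}$, a Chernoff-plus-union bound guaranteeing each $A_i$ contributes enough sample points, the Inaba/Barman second-moment-plus-Markov bound for the per-cluster sample means, and enumeration of all $(k+1)$-part labellings so that the true partition is among the $T \leq (k+1)^m$ candidates, with the two $\beta_1/2$ failure budgets combined by a union bound. The only differences are cosmetic (your explicit conditioning bookkeeping, and the $1/\wmin$ factor in the exponent of $T$, which the paper's own proof also produces even though its statement omits it).
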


We begin by recalling the following lemma, which is implicit in Theorem 2 of \cite{Barman15}, which says that with constant probability, one can get close to the mean of a set of points with bounded diameter, by randomly sampling a constant number of points.
\begin{lemma}
\label{lem:sampling3}
Fix a set of elements $A$ and a set of parameters
	$\delta, \beta_3 \in (0,1)$, and let $l_0 = 1/\paren{\delta^2 \beta_3}$.
Let $l \in \Z$ be any number such that $l \geq l_0$, and
let $y_1, \ldots y_l$ be $l$ independent and uniformly random samples from $A$. Then,
	\[ \Pr{ \norm{\frac{1}{l} \sum_{i = 1}^l y_i - \frac{1}{\Abs{A}} \sum_{a \in A} a  } \leq
	\delta \diam{A} } \geq 1 - \beta_3 \mper  \]
\end{lemma}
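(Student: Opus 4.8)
The plan is to prove \prettyref{lem:sampling3} by a standard second-moment (Markov) argument applied to the empirical mean. Write $\mu \defeq \frac{1}{\Abs{A}}\sum_{a \in A} a$ for the true mean of $A$ and $\bar y \defeq \frac1l \sum_{i=1}^{l} y_i$ for the empirical mean. Since each $y_i$ is drawn uniformly from $A$, we have $\E y_i = \mu$ and hence $\E \bar y = \mu$; the goal is to show that $\bar y$ concentrates around $\mu$ at the scale $\delta \diam{A}$, and the sample size $l_0 = 1/(\delta^2\beta_3)$ is exactly what a Markov bound needs.

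First I would bound the per-sample second moment $\E\,\snorm{y_1 - \mu} = \frac{1}{\Abs{A}}\sum_{a \in A}\snorm{a - \mu}$. For any $a \in A$, since $\mu$ is an average of points of $A$, the triangle inequality gives $\norm{a - \mu} = \Norm{\frac{1}{\Abs{A}}\sum_{b \in A}(a - b)} \leq \frac{1}{\Abs{A}}\sum_{b \in A}\norm{a - b} \leq \diam{A}$, so $\E\,\snorm{y_1 - \mu} \leq \diam{A}^2$. (One could instead invoke the identity $\sum_{a \in A}\snorm{a - \mu} = \frac{1}{2\Abs{A}}\sum_{a,b \in A}\snorm{a-b}$ to get the sharper $\tfrac12 \diam{A}^2$, but the crude bound already yields the stated threshold.)

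Next, using independence of $y_1,\dots,y_l$ together with $\E y_i = \mu$, the off-diagonal inner products are centered and vanish, so the variance of the average shrinks by a factor $1/l$:
\[
\E\,\snorm{\bar y - \mu} \;=\; \frac{1}{l^2}\sum_{i,j \in [l]} \E\,\inprod{y_i - \mu,\; y_j - \mu} \;=\; \frac{1}{l}\,\E\,\snorm{y_1 - \mu} \;\leq\; \frac{\diam{A}^2}{l}\mper
\]
Then I would apply Markov's inequality to the nonnegative random variable $\snorm{\bar y - \mu}$, noting $\Pr{\norm{\bar y - \mu} > \delta\diam{A}} = \Pr{\snorm{\bar y - \mu} > \delta^2\diam{A}^2} \leq \frac{1}{l\delta^2}$, which is at most $\beta_3$ precisely when $l \geq l_0 = 1/(\delta^2\beta_3)$. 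Taking complements gives $\Pr{\norm{\bar y - \mu} \leq \delta\diam{A}} \geq 1 - \beta_3$, as claimed.

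I do not expect a genuine obstacle here: the argument is elementary and essentially repackages the sampling bound underlying Barman's approximate Carath\'eodory theorem \cite{Barman15}. The only two spots needing a little care are the collapse of $\E\,\snorm{\bar y - \mu}$ to $\tfrac1l\,\E\,\snorm{y_1-\mu}$ (which uses only pairwise independence and $\E y_i = \mu$) and the diameter bound $\E\,\snorm{y_1-\mu} \leq \diam{A}^2$; the degenerate case $\diam{A} = 0$ is immediate since then every $y_i$ equals $\mu$.
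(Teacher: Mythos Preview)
Your proposal is correct and follows essentially the same approach as the paper: the paper first establishes the bound $\E\snorm{\bar y - \mu} \leq \diam{A}^2/l$ by expanding the inner product and using independence to kill the cross terms (their \prettyref{claim:approx}), and then applies Markov's inequality to $Z = \snorm{\bar y - \mu}$ exactly as you do. The only cosmetic difference is in the per-sample bound: you argue $\norm{a - \mu} \leq \diam{A}$ via convexity, while the paper writes the variance as $\E\snorm{y_i} - \snorm{\mu}$ and bounds that by $\diam{A}^2$; these are equivalent.
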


	\begin{Claim}\label{claim:approx}
		\[  \E { \norm{\frac{1}{l} \sum_{i = 1}^l y_i - \frac{1}{\Abs{A}} \sum_{a \in A} a  }^2} \leq \diam{A}^2/l \]
	\end{Claim}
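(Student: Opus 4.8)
\textbf{Proof proposal for \prettyref{claim:approx}.}
The plan is a standard second-moment computation. Write $\mu \defeq \frac{1}{\Abs{A}}\sum_{a \in A} a$ for the mean and $\bar y \defeq \frac1l \sum_{i=1}^l y_i$ for the sample average. Since the $y_i$ are i.i.d.\ uniform on $A$, we have $\E\brac{y_i} = \mu$ for every $i$, so $\E\brac{\bar y} = \mu$ as well. First I would expand
\[
\E\Brac{\norm{\bar y - \mu}^2} = \frac{1}{l^2}\, \E\Brac{\norm{\textstyle\sum_{i=1}^l (y_i - \mu)}^2}
= \frac{1}{l^2}\sum_{i=1}^l \sum_{j=1}^l \E\Brac{\inprod{y_i - \mu,\, y_j - \mu}}.
\]
For $i \neq j$, independence gives $\E\brac{\inprod{y_i-\mu, y_j-\mu}} = \inprod{\E\brac{y_i-\mu}, \E\brac{y_j-\mu}} = 0$, so only the $l$ diagonal terms survive and
\[
\E\Brac{\norm{\bar y - \mu}^2} = \frac{1}{l^2}\sum_{i=1}^l \E\Brac{\norm{y_i - \mu}^2} = \frac{1}{l}\,\E\Brac{\norm{y_1 - \mu}^2}.
\]

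It then remains to show $\E\brac{\norm{y_1 - \mu}^2} \leq \diam{A}^2$. For this I would introduce an independent copy $y_2$ of $y_1$ (also uniform on $A$) and use $\mu = \E\brac{y_2}$ together with convexity of $v \mapsto \norm{y_1 - v}^2$: conditioning on $y_1$ and applying Jensen's inequality to the expectation over $y_2$,
\[
\E\Brac{\norm{y_1 - \mu}^2} = \E\Brac{\norm{y_1 - \E\brac{y_2}}^2} \leq \E\Brac{\norm{y_1 - y_2}^2} \leq \diam{A}^2,
\]
where the last step is immediate from the definition of $\diam{A}$ since $y_1, y_2 \in A$. Combining the two displays yields $\E\brac{\norm{\bar y - \mu}^2} \leq \diam{A}^2/l$, as claimed. (Alternatively, one can bound $\E\brac{\norm{y_1-\mu}^2} = \frac{1}{2\Abs{A}^2}\sum_{a,b \in A}\norm{a-b}^2 \leq \tfrac12 \diam{A}^2$ by direct expansion, which is even slightly stronger.)

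There is no real obstacle here; the only point that needs a line of care is the vanishing of the off-diagonal terms, which is exactly where independence of the samples is used. Note also that \prettyref{lem:sampling3} follows from \prettyref{claim:approx} by Markov's inequality applied to the nonnegative random variable $\norm{\bar y - \mu}^2$, with the threshold $\delta^2 \diam{A}^2$ and $l \geq l_0 = 1/(\delta^2 \beta_3)$.
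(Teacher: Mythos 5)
Your proposal is correct and follows essentially the same route as the paper: expand the second moment, use independence of the samples to drop the off-diagonal terms, and bound the single-sample variance $\E\brac{\norm{y_1-\mu}^2}$ by $\diam{A}^2$. The only (welcome) difference is that you justify this last variance bound explicitly via an independent copy and Jensen, whereas the paper simply asserts $\E\norm{y_i}^2 - \norm{\mu}^2 \leq \diam{A}^2$; your remark that a direct expansion even gives $\tfrac{1}{2}\diam{A}^2$ is a correct minor strengthening.
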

	\begin{proof}
		(We are reproducing the proof of \cite{Barman15} for the sake of completeness.)
	\begin{align*}
		\E { \norm{\frac{1}{l} \sum_{i = 1}^l y_i - \frac{1}{\Abs{A}} \sum_{a \in A} a  }^2} & = \frac{1}{l^2} \E \norm{ \sum_{i=1}^l \left( \frac{1}{\Abs{A}} \sum_{a \in A} a - y_i \right)}^2 \\
		& = \frac{1}{l^2} \E \inprod{\sum_{i=1}^l \paren{\frac{1}{\Abs{A}} \sum_{a \in A} a-y_i }, \sum_{j=1}^l \left( \frac{1}{\Abs{A}} \sum_{a \in A} a-y_j \right) } \\
		& = \frac{1}{l^2} \sum_{i,j=1}^l \E \inprod{\left(  \frac{1}{\Abs{A}} \sum_{a \in A} a-y_i \right),  \left( \frac{1}{\Abs{A}} \sum_{a \in A} a-y_j \right) }  \\
		& = \frac{1}{l^2} \sum_{i=1}^l \E \inprod{\left(  \frac{1}{\Abs{A}} \sum_{a \in A} a-y_i \right),  \left( \frac{1}{\Abs{A}} \sum_{a \in A} a-y_i \right) }  &  \paren{\textrm{Since $y_i$ and $y_j$ are independent}} \\
		& = \frac{1}{l^2} \sum_{i=1}^l \paren{\norm{ \frac{1}{\Abs{A}} \sum_{a \in A} a}^2 - 2\norm{ \frac{1}{\Abs{A}} \sum_{a \in A} a}^2 + \E \norm{y_i}^2} & \paren{\E y_i =  \frac{1}{\Abs{A}} \sum_{a \in A} a} \\
		& \leq \frac{1}{l^2} l \diam{A}^2 & \paren{\E \norm{y_i}^2 - \norm{x}^2 \leq \diam{A}^2 } \\
		& = \frac{\diam{A}^2}{l} \\
	\end{align*}
	\end{proof}
	\begin{proof}[Proof of \prettyref{lem:sampling3}]
	Let the random variable $Z \defeq   \norm{\frac{1}{l} \sum_{i = 1}^l y_i - \frac{1}{\Abs{A}} \sum_{a \in A} a  }^2 $. We know from the  \prettyref{claim:approx} that  $\E(Z) \leq \frac{\diam{A}^2}{l} $. Therefore using Markov's inequality, we get
	\[ \Pr{ Z \geq \frac{\diam{A}^2}{l \beta_3}} \leq \frac{\E Z}{\frac{\diam{A}^2}{l \beta_3}} \leq  \frac{\frac{\diam{A}^2}{l}}{\frac{\diam{A}^2}{l \beta_3}} = \beta_3\]
	\[ \Pr{\Abs{\sqrt{Z}} \geq \frac{\diam{A}}{\sqrt{l} \sqrt{\beta_3}}} \leq \beta_3 \]
	Therefore choosing $\delta = \frac{1}{\sqrt{\beta_3}\sqrt{l}}$, we get the statement of \prettyref{lem:sampling3}.
\end{proof}

The above lemma helps us determine the number of points to be taken from each of the clusters to get close to the mean. The next helps us determine the number of points to sample uniformly at random from the data-set, such that we get at least some fixed number of points from each balanced clusters (with a balance parameter $\omega$), with constant probability.
\begin{lemma}
\label{lem:sampling2}
Fix $l_0 \in \Z$ and $\beta_2 \in (0,1/2]$.
Let $X = \set{x_1, \ldots, x_n}$ be a set of $n$ distinct items partitioned into $k+1$ sets
$A_1, \ldots, A_k, Z$ such that $\Abs{A_i} \geq \wmin n /k$ for each $i \in [k]$.
Let $m = (8/\wmin) k l_0 \log \paren{ k/ \beta_2}$, and let $Y = \set{y_1, \ldots, y_m}$ be a set of $m$ independently
and uniformly randomly chosen elements from $X$ (with repetition).
Then,
	\[ \Pr{ \Abs{Y \cap A_i} \geq l_0 \textrm{ for each } i \in [k]} \geq 1 - \beta_2 \mper \]
\end{lemma}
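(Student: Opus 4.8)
The plan is to prove this via a straightforward application of multiplicative Chernoff bounds to each set $A_i$ separately, followed by a union bound over the $k$ sets. First I would fix an index $i \in [k]$ and consider the indicator random variables $\xi_1, \ldots, \xi_m$, where $\xi_t = 1$ iff $y_t \in A_i$. These are i.i.d.\ Bernoulli with success probability $p_i = \Abs{A_i}/n \geq \wmin/k$, so the expected number of samples landing in $A_i$ is $\E\Abs{Y \cap A_i} = m p_i \geq m \wmin / k$. Plugging in $m = (8/\wmin) k l_0 \log(k/\beta_2)$ gives $\E\Abs{Y \cap A_i} \geq 8 l_0 \log(k/\beta_2)$, which is comfortably larger than $l_0$ (by a factor of at least $8\log(k/\beta_2) \geq 8$, since $\beta_2 \leq 1/2$ and $k \geq 1$).

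Next I would invoke the lower-tail Chernoff bound: for i.i.d.\ Bernoulli variables with sum $S$ and mean $\mu = \E S$, we have $\Pr{S \leq \mu/2} \leq \exp(-\mu/8)$. Applying this with $S = \Abs{Y \cap A_i}$ and $\mu = m p_i \geq 8 l_0 \log(k/\beta_2)$: the event $\Abs{Y \cap A_i} < l_0$ is contained in the event $\Abs{Y \cap A_i} \leq \mu/2$ (because $l_0 \leq \mu/2$ by the computation above), so
\[
\Pr{\Abs{Y \cap A_i} < l_0} \;\leq\; \Pr{\Abs{Y \cap A_i} \leq \mu/2} \;\leq\; \exp(-\mu/8) \;\leq\; \exp\paren{-l_0 \log(k/\beta_2)} \;\leq\; \beta_2/k,
\]
where the last inequality uses $l_0 \geq 1$ so that $\exp(-l_0 \log(k/\beta_2)) = (\beta_2/k)^{l_0} \leq \beta_2/k$.

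Finally I would take a union bound over all $i \in [k]$: the probability that some $A_i$ receives fewer than $l_0$ samples is at most $k \cdot (\beta_2/k) = \beta_2$, so with probability at least $1 - \beta_2$ every $A_i$ receives at least $l_0$ samples, as claimed. There is no real obstacle here; the only point requiring a little care is getting the constants to line up (the factor $8$ in the definition of $m$ is chosen precisely so that $\mu/2 \geq l_0$ and $\mu/8 \geq l_0 \log(k/\beta_2)$ simultaneously), and making sure the version of the Chernoff bound being used has the stated form. One could equally well route the argument through a slightly different tail inequality, but the multiplicative Chernoff bound at deviation $1/2$ is the cleanest and directly yields the advertised sample complexity.
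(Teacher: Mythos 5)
Your proposal is correct and follows essentially the same route as the paper: the paper likewise applies the multiplicative Chernoff lower-tail bound at deviation $1/2$ to each $Y_i = \Abs{Y \cap A_i}$ (using $p_i m/2 \geq l_0$ and $p_i m/8 \geq \log(k/\beta_2)$, which is exactly your constant bookkeeping) and then takes a union bound over the $k$ clusters. The only cosmetic difference is that you route the final estimate through $(\beta_2/k)^{l_0} \leq \beta_2/k$, while the paper bounds $e^{-p_i m/8} \leq \beta_2/k$ directly; both implicitly use $l_0 \geq 1$.
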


\begin{proof}
Let $p_i \defeq \Abs{A_i}/n$ and let $Y_i \defeq \Abs{Y \cap A_i}$.
For our choice of parameters, we get that
	\[ \frac{p_i m}{2} \geq \frac{\wmin m}{2k} \geq l_0 \qquad \textrm{ and } \qquad
	\frac{p_i m}{8}  \geq  \log \frac{k}{\beta_2}	\mper \]
Therefore, using the Chernoff bound on $Y_i$, we get that
\[  \Pr{Y_i \geq l_0} \geq \Pr{ Y_i \geq  p_i m /2 } \geq 1 - e^{- p_i m/8} \geq 1 - \frac{\beta_2}{k} \mper \]
Using a union bound over each $i \in [k]$, we get
\[ \Pr{ \Abs{Y \cap A_i} \geq l_0 \textrm{ for each } i \in [k]}
	\geq 1 - k \frac{\beta_2}{k} = 1 - \beta_2 \mper \]
\end{proof}

We are now ready to prove \prettyref{prop:sampling}.
\begin{proof}[Proof of \prettyref{prop:sampling}]
Set $\beta_2 = \beta_1/2$, $\beta_3 = \beta_1/(2k)$,
set $l_{0}$ from the guarantee in
\prettyref{lem:sampling3}, and set $m$ from the guarantee in \prettyref{lem:sampling2}.
Let $Y = \set{y_1, \ldots, y_m}$ be a set of $m$ independently
and uniformly randomly chosen elements from $X$ (with repetition)
It is easy to verify that the points
in $Y \cap A_i$ are a uniformly random sample from $A_i$.

\begin{equation}
\label{eq:sampling4}
	\Pr{ y_j = a \given y_j \in A_i } = \frac{1}{ \Abs{A_i}} \qquad \textrm{for any } i \in [k], a \in A_i \textrm{ and } j \in [m]
\end{equation}

Let $\tmuu{i}{*} \defeq \paren{\sum_{j \in Y \cap A_i} y_j}/{\Abs{Y \cap A_i}}$.
If we enumerate all the $k+1$ partitions on $Y$, then one of the partitions will be
$\set{Y \cap A_1, \ldots, Y \cap A_k, Y \cap Z}$.
	For a fixed index $i \in [k]$, using \prettyref{lem:sampling2} and \prettyref{eq:sampling4}
with $A = A_i$ and $Y \cap A_i$ as the set of random samples from $A_i$, we get that
\[ \Pr{ \Abs{Y \cap A_i} \geq l_0 \textrm{ for each } i \in [k]} \geq 1 - \beta_2 = 1 - \frac{\beta_1}{2} \mper \]
In this case, using \prettyref{lem:sampling3},
we get that
\[ \Pr{ \norm{\tmuu{i}{*} - \mu_i} \leq \delta \diam{A_i} \textrm{ for each } i \in [k]} \geq
	1 - k \beta_3 = 1 - \frac{\beta_1}{2} \mper \]
Using a union bound over these two events, we get that
\[ \Pr{ \norm{\tmuu{i}{*} - \mu_i} \leq \delta \diam{A_i} \textrm{ for each } i \in [k]} \geq
	1 - \frac{\beta_1}{2} - \frac{\beta_1}{2} = 1 - \beta_1 \mper \]

The running time of this algorithm is dominated by the time required to enumerate all the $k+1$ partitions of set of cardinality $m$, and computing the means of those partitions.
\end{proof}

\subsection{Balanced Mean Distance Assumption}
Assume that the unknown $\alpha$-center proximal $k$-means clustering of lowest cost is $\set{\Opt_1,\ldots,\Opt_k}$, with means $\set{\mu_1,\ldots,\mu_k}$ respectively. In this section we assume that the ratio of the maximum pairwise distance between the means to the minimum pairwise distance between the means is bounded by a factor $\gamma^*$. We assume that we are given an upper bound $\gamma$ on $\gamma^*$. More formally, we are given a $\gamma$, such that
\[ \gamma \geq  \frac{\max_{i,j} \norm{\mu_i-\mu_j}}{\min_{i \neq j} \norm{\mu_i-\mu_j}}  \mper \]

\begin{proposition}
\label{prop:sampling-mean}
Fix $\delta \in (0,1)$.
Let $X$ be a set of points in $\R^d$,
partitioned into $k$ sets
$C_1, \ldots, C_k$, and let $\mu_i$ denote the mean of $C_i$, i.e., $\mu_i \defeq \paren{\sum_{v \in C_i} v}/\Abs{C_i}$, such that $\gamma =  {\max_{i,j} \norm{\mu_i-\mu_j}} \bigg/ {\min_{i \neq j} \norm{\mu_i-\mu_j}}$. Let $\e = \paren{\frac{2 \delta k}{\gamma (1+k)}}^2$.
The \prettyref{alg:centers} constructs $\bigo{2^{\poly\paren{\frac{k}{\e}}}}$ $k$-tuples. With constant probability, there exists at least one $k$-tuple $(\tmu_1, \ldots, \tmu_k)$ satisfying
\[ \norm{\tmu_i - \mu_i} \leq 2 \delta \rij \textrm{ for all } j \in [k] \textrm { and for each } i \in \brac{k}     \mper \]
Moreover, the algorithm runs in time $ \bigo{2^{\poly\paren{\frac{k}{\e}}}nd} $.
\end{proposition}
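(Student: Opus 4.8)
The plan is to instantiate \prettyref{alg:centers} as a $D^2$-sampling (``superset sampling'') based list algorithm, in the spirit of \cite{KumarSS04,JaiswalKS14,BhattacharyaJK18}, and to show it outputs a list with the claimed property. I would start from a purely geometric reduction that removes $\gamma$ from the target: it suffices that the list contain one tuple $(\tmu_1,\ldots,\tmu_k)$ with $\norm{\tmu_i - \mu_i} \leq \delta\,\diam{C_i}$ for every $i \in [k]$, since \prettyref{prop:amps} and part (g) of \prettyref{prop:geom} give $\diam{C_i} \leq 2\rij$ for every $j$, so such a tuple automatically satisfies $\norm{\tmu_i - \mu_i} \leq 2\delta\rij$ for all $j$, which is the conclusion. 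Thus the statement reduces to producing, with constant probability, approximate means of \emph{every} cluster to within $\delta\,\diam{C_i}$ --- the guarantee of \prettyref{prop:sampling}, but with the balance assumption dropped in favour of the bound $\gamma$ on the ratio of pairwise distances between the true means.

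The core step is to run the standard iterative sampling scheme for $k$ rounds: maintaining a partial set of centers, in each round draw $\poly(k/\e)$ $D^2$-samples with respect to the current centers and branch over all $\poly(k/\e)$-size subsets, adding the mean of each subset as a candidate for the next center. Unrolling this tree produces $\bigo{2^{\poly(k/\e)}}$ leaves, hence $\bigo{2^{\poly(k/\e)}}$ tuples, and since each round costs $\bigo{nd}$ the total running time is $\bigo{2^{\poly(k/\e)}\,nd}$. One then argues that, along the branch that processes the clusters in a favourable (cost-decreasing) order, each successive cluster carries a constant fraction of the residual $D^2$-potential, is therefore hit by a $D^2$-sample, and has its mean recovered to the desired accuracy by the Inaba-type averaging bound of \prettyref{lem:sampling3} (\cite{Barman15}); a Chernoff and union-bound accounting over the $k$ rounds, in the style of \prettyref{lem:sampling2}, controls the failure probability and fixes the sample sizes. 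The role of $\gamma$ is exactly to make this analysis go through without balance: part (h) of \prettyref{prop:geom} (every $x \in C_i$ is at distance more than $\tfrac{\alpha}{\alpha+1}\norm{\mu_i - \mu_j}$ from any other mean $\mu_j$), combined with $\diam{C_i} \leq 2\rij$, forces the ``scale'' of every cluster to lie within a $\poly(\gamma,\alpha)$ factor of $\min_{i\neq j}\norm{\mu_i - \mu_j}$, which bounds how small a cluster's share of the $D^2$-potential can be, hence the number of rounds and the sample sizes (equivalently $1/\e$) needed to reach every cluster. This is what the choice $\e = \paren{2\delta k/(\gamma(1+k))}^2$ records.

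I expect the main obstacle to be precisely this ``cheap cluster'' phenomenon, which does not arise in the balanced setting because there every cluster owns an $\Omega(1/k)$ fraction of the points. Here a cluster can be arbitrarily small: a singleton cluster is recovered exactly once its point is $D^2$-sampled (and part (h) of \prettyref{prop:geom} ensures that point is far from the other centers, so it does carry $D^2$-mass), but for a non-singleton light cluster one needs a quantitative argument --- using \prettyref{prop:geom} and the bound $\gamma$ --- that its spread $\sigma_i$ is comparable, up to $\poly(\gamma,\alpha)$, to $\min_{i\neq j}\norm{\mu_i - \mu_j}$, so that after the heavier clusters are peeled off it resurfaces carrying a non-negligible share of the residual potential. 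Stitching these cases together over the $k$ rounds, and then recovering the exact form of $\e$, is where the real work lies; the remaining Chernoff/union-bound estimates and the final constant-juggling are routine.
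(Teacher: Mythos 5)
Your opening reduction already overshoots the target: you aim for $\norm{\tmu_i-\mu_i}\le \delta\,\diam{C_i}$ for every cluster, but without the balance assumption this per-cluster relative guarantee is not obtainable by sampling a number of points independent of $n$ (a cluster with tiny diameter and few points can only be pinned to within a fraction of its own diameter if you actually sample its points), and it is also not what the proposition needs. The statement only asks for error at the absolute scale $2\delta\,\rij$, and since $\radm{X}\le\rij$ for all $i,j$, an error bound of order $\delta\,\radm{X}$ (up to the $\gamma$ and $k$ factors absorbed into $\e$) suffices; this weaker, global-scale target is precisely what makes unbalanced instances tractable. The more serious gap is the central claim of your plan: that the bound $\gamma$ on the ratio of inter-mean distances forces every cluster to carry a non-negligible share of the residual $D^2$-potential. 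That is false, because the $D^2$-share depends on cardinalities as well as geometry, and cardinalities are unconstrained here. Concretely, take a cluster $C_i$ with $O(1)$ points; by \prettyref{prop:geom} each of its points is at distance $\Theta_{\alpha}(\radm{X})$ from the other centers, so its residual contribution is $\Theta_{\alpha}(\radm{X}^2)$, while an already-covered heavy cluster with $\Theta(n)$ points and variance $\Theta\paren{(\gamma\radm{X})^2}$ (its variance does not shrink when you replace its mean by an approximate mean) contributes $\Theta(n\gamma^2\radm{X}^2)$. The light cluster's share is then $O\paren{1/(n\gamma^2)}$, so a $D^2$-sample of size $\poly(k/\e)$ misses it with probability $1-o(1)$, and no choice of $\e$ depending only on $k,\gamma,\delta,\alpha$ repairs this. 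The ``cost-decreasing order / constant fraction of residual potential'' arguments you invoke from the PTAS literature only yield approximation of the $k$-means cost, under which negligible-cost clusters may be clustered arbitrarily; here you must recover the mean of \emph{every} cluster, so that machinery does not transfer.

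The paper's proof avoids this by following the Peeling-and-Enclosing scheme of Ding and Xu (\prettyref{alg:centers}, \prettyref{alg:tree}) rather than $D^2$-sampling. Processing clusters in decreasing order of size, it peels balls of radius $\approx j\sqrt{\e}\,\gamma\radm{X}$ around the current approximate centers; since every cluster has variance at most $(\gamma\radm{X})^2$ (by \prettyref{prop:amps} and \prettyref{eq:radmin}), Markov's inequality (\prettyref{claim:peeling}) shows that already-covered clusters leave only $O(\lambda_j n/\e)$ points outside the balls. Then either the next cluster's leftover points form a $\poly(\e/k)$ fraction of what remains outside, and \emph{uniform} sampling plus Inaba's lemma and the Simplex Lemma produce a grid point near $\mu_j$, or the leftover is so small that $\mu_j$ is already well approximated by a grid point in the simplex of the existing approximate centers (\prettyref{lem:ding15}), requiring no sample from the light cluster at all. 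This case (b) is exactly the mechanism your plan lacks for arbitrarily small clusters; the induction in \prettyref{lem:closepts} then gives error $\e\gamma\radm{X}+(1+\e)k\sqrt{\e}\,\gamma\radm{X}$, and the stated choice of $\e$ converts this into $2\delta\,\rij$.
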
.
The proof of the \prettyref{prop:sampling-mean} is similar to the proof of Lemma 3.3 of \cite{Ding15}, with minor modifications, keeping our application in mind.

Using \prettyref{prop:geom} we define $\radm{X} \defeq \frac{\alpha}{\alpha^2-1} \min_{i \neq j} \norm{\mu_i-\mu_j}$. The proposition implies that the radius of the largest $C_{i,j}$ ball (for some $i,j \in \brac{k}$) is at most $\gamma \radm{X}$, ie.,
\begin{equation}\label{eq:radmin}
   \frac{\rij}{\gamma} \leq  \radm{X} \leq \rij \qquad \text{ for } i,j \in [k] \mper
\end{equation}

\begin{proposition}
  Let $R \defeq \max_{i,j \in \brac{n}} \norm{x_i - x_j}$, and let $\radm{X} \defeq \frac{\alpha}{\alpha^2-1} \min_{i \neq j} \norm{\mu_i-\mu_j}$. Then,
  \begin{equation}
    \radm{X} \in \Brac{ \paren{\frac{\alpha}{(\alpha+1)^2}}\frac{R}{\gamma}, \paren{\frac{\alpha}{\alpha^2-1}}\frac{R}{\gamma}} \mper
  \end{equation}

\end{proposition}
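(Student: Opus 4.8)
The plan is to reduce both inequalities to the single two‑sided estimate
$\tfrac{\alpha-1}{\alpha+1}\,R \le M \le R$, where $M \defeq \max_{i,j}\norm{\mu_i-\mu_j}$. Indeed, writing $\gamma=\max_{i,j}\norm{\mu_i-\mu_j}\big/\min_{i\neq j}\norm{\mu_i-\mu_j}$ we have $\min_{i\neq j}\norm{\mu_i-\mu_j}=M/\gamma$, hence $\radm{X}=\tfrac{\alpha}{\alpha^2-1}\min_{i\neq j}\norm{\mu_i-\mu_j}=\tfrac{\alpha}{\alpha^2-1}\cdot\tfrac{M}{\gamma}$. Since $\alpha^2-1=(\alpha-1)(\alpha+1)$, the identity $\tfrac{\alpha}{\alpha^2-1}\cdot\tfrac{\alpha-1}{\alpha+1}=\tfrac{\alpha}{(\alpha+1)^2}$ then turns $M\ge\tfrac{\alpha-1}{\alpha+1}R$ into the claimed lower bound and $M\le R$ into the claimed upper bound.

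\textbf{Upper bound $M\le R$.} This needs no stability hypothesis: each mean $\mu_i=\tfrac{1}{\abs{C_i}}\sum_{x\in C_i}x$ is a convex combination of points of $X$, so $\mu_i-\mu_j=\tfrac{1}{\abs{C_i}\abs{C_j}}\sum_{x\in C_i,\,y\in C_j}(x-y)$ is an average of difference vectors of pairs of points of $X$; by the triangle inequality $\norm{\mu_i-\mu_j}\le\max_{x,y\in X}\norm{x-y}=R$ for every $i,j$, and taking the maximum gives $M\le R$.

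\textbf{Lower bound $M\ge\tfrac{\alpha-1}{\alpha+1}R$.} Fix $x,y\in X$ with $\norm{x-y}=R$ and split on which optimal clusters contain them. If $x\in C_i$ and $y\in C_j$ with $i\neq j$, then by \prettyref{prop:amps} $x$ lies in the ball $C_{i,j}$ and $y$ in $C_{j,i}$, both of radius $r_{i,j}=r_{j,i}=\tfrac{\alpha}{\alpha^2-1}\norm{\mu_i-\mu_j}$ and with centers at distance $D_{i,j}=\tfrac{\alpha^2+1}{\alpha^2-1}\norm{\mu_i-\mu_j}$ by \prettyref{prop:geom}(a),(d); the triangle inequality then yields $R<D_{i,j}+2r_{i,j}=\tfrac{(\alpha+1)^2}{\alpha^2-1}\norm{\mu_i-\mu_j}=\tfrac{\alpha+1}{\alpha-1}\norm{\mu_i-\mu_j}\le\tfrac{\alpha+1}{\alpha-1}M$. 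If instead $x,y\in C_i$ for a single $i$, then $R\le\diam{C_i}$, and choosing any $j\neq i$ (possible since $k\ge2$) \prettyref{prop:geom}(g) gives $\diam{C_i}\le\tfrac{2\alpha}{\alpha^2-1}\norm{\mu_i-\mu_j}\le\tfrac{2\alpha}{\alpha^2-1}M\le\tfrac{\alpha+1}{\alpha-1}M$, the last step because $2\alpha\le(\alpha+1)^2$. In both cases $R\le\tfrac{\alpha+1}{\alpha-1}M$, i.e. $M\ge\tfrac{\alpha-1}{\alpha+1}R$, and substituting into $\radm{X}=\tfrac{\alpha}{\alpha^2-1}\cdot\tfrac{M}{\gamma}$ completes the proof.

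\textbf{Main obstacle.} There is little depth here; the only point requiring care is the lower bound, where one must remember that the diameter of $X$ might be realized by two points of the \emph{same} cluster rather than across two clusters, and then check that the within‑cluster estimate coming from \prettyref{prop:geom}(g) is dominated by the cross‑cluster estimate coming from \prettyref{prop:geom}(a),(d), so that the single constant $\tfrac{\alpha+1}{\alpha-1}$ handles both cases. Everything else is bookkeeping with the quantities of \prettyref{def:hatmu} together with the elementary fact that cluster means lie in the convex hull of $X$.
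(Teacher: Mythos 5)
Your proof is correct and follows essentially the same route as the paper: both arguments reduce the claim to the two-sided estimate $\frac{\alpha-1}{\alpha+1}R \le \max_{i,j}\norm{\mu_i-\mu_j} \le R$ using \prettyref{prop:geom} and then divide by $\gamma$ (the paper's bound $R \le \max_{i,j}\paren{4r_{i,j}+d_{i,j}}$ is exactly your $D_{i,j}+2r_{i,j}$). Your explicit case split for when $R$ is realized by two points of the same cluster is a detail the paper glosses over, but it is a refinement of the same argument rather than a different approach.
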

\begin{proof} It is easy to see that
\[ R \geq \max_{i,j} \norm{\mu_i - \mu_j} \mcom \qquad \text{for } i,j \in \brac{k} \mcom  \]
and from \prettyref{prop:geom} that (points lie in two balls of radius $\rij$ which are at a distance $d_{i,j}$ apart)
\[ R \leq \max_{i,j} \paren{ 4 \rij + d_{i,j} } =  \paren{\frac{\alpha+1}{\alpha-1}}\max_{i,j} \norm{\mu_i - \mu_j} \mcom  \qquad \text{for } i,j \in \brac{k} \mper \]
Using the above two equations and the fact that $\gamma =  {\max_{i,j} \norm{\mu_i-\mu_j}} \bigg/ {\min_{i \neq j} \norm{\mu_i-\mu_j}}$, we get that
\[ \min_{i \neq j} \norm{\mu_i-\mu_j} \in \Brac{\paren{\frac{\alpha-1}{\alpha+1}}\frac{R}{\gamma} \mcom \frac{R}{\gamma}} \mcom \qquad \text{for } i,j \in \brac{k} \mcom \]
which implies that (using \prettyref{prop:geom})
\begin{equation*}
  \radm{X} \in \Brac{ \paren{\frac{\alpha}{(\alpha+1)^2}}\frac{R}{\gamma}, \paren{\frac{\alpha}{\alpha^2-1}}\frac{R}{\gamma}} \mper
\end{equation*}
\end{proof}

\RestyleAlgo{boxruled}
\begin{algorithm} \label{alg:centers}
\caption{Algorithm Peeling-and-Enclosing}
\textbf{Input:} $X = \set{x_1,\ldots,x_n}$ in $\R^d$, $k \geq 2$, $\alpha$, $\gamma$. \\
\textbf{Output:} A list $\cL$ containing $k$-tuples, where a $k$-tuple contains $k$ mean points.\\
\begin{enumerate}
\item Set $\e = \paren{\frac{(\alpha-1)^2 k }{4 \alpha \gamma (1 + k)}}^2$.
\item For $i = 0 \text{ to } \log_{(1+\e)}\paren{\frac{\alpha+1}{\alpha-1}}$ do
  \begin{enumerate}
  \item $\zeta = (1+\e)^i \paren{\frac{\alpha}{(\alpha+1)^2}}\frac{R}{\gamma}$.
  \item Run \textbf{Algorithm Peeling-and-Enclosing-Tree}.
  \item Let $\cT_i$ be the output tree.
  \end{enumerate}
\item For each root-to-leaf path of every $\cT_i$, build a $k$-tuple candidate using the $k$ points associated with the path.
\item Append the $k$-tuple to the list $\cL$.
\end{enumerate}
\end{algorithm}

The \prettyref{alg:tree} is almost same as the Algorithm Peeling-and-Enclosing-Tree of \cite{Ding15}, with a minor variation in step 2(b).
\begin{algorithm} \label{alg:tree}
\caption{Algorithm Peeling-and-Enclosing-Tree}
\textbf{Input:} $\zeta$ and an instance of $k$-means $X$.\\
\textbf{Output:} A tree $\cT$.
\begin{enumerate}
\item Initialize $\cT$ as a single root node $v$ associated with no point.
\item Recursively grow each node $v$ in the following way
  \begin{enumerate}
  \item If the height of $v$ is already $k$, then it is a leaf.
  \item Otherwise, let $j$ be the height of $v$. Build the radius candidate set $\cR =  \set{\frac{1+l\frac{\e}{2}}{2(1+\e)}j \sqrt{2}\sqrt{\e}\zeta\gamma \middle| 0\leq l \leq 4+\frac{2}{\e}}$\\ For each $r \in \cR$, do
    \begin{enumerate}
    \item Let $\set{p_{v_1},\ldots,p_{v_j}}$ be the $j$ points associated with nodes on the root-to-v path.
    \item For each $p_{v_l}, 1 \leq l \leq j$, construct a ball $\cB_{j+1,l}$ centered at $p_{v_l}$ and with radius $r$.
    \item Take a random sample from $\paren{X \setminus \cup_{l=1}^j \cB_{j+1,l}}$ of size $s = \frac{8k^3}{\e^9} \ln \frac{k^2}{\e^6}$. Compute the mean points of all subset of the sample, and denote them by $\Pi = \set{\pi_1,\ldots,\pi_{2^s-1}}$.
    \item For each $\pi_i \in \Pi$, construct a simplex using $\set{p_{v_1},\ldots,p_{v_j},\pi_i}$ as its vertices. Also construct another simplex using $\set{p_{v_1},\ldots,p_{v_j}}$ as its vertices. For each simplex, build a grid with size $\bigo{\paren{32j/\e}^j}$ inside itself and each of its $2^j$ possible degenerated sub-simplices.
    \item In total, there are $2^{s+j}\paren{32j/\e}^j$ grid points inside the $2^s$ simplices. For each grid point, add one\\ child to $v$, and associate it with the grid point.
    \end{enumerate}
  \end{enumerate}
\item Output $\cT$.
\end{enumerate}
\end{algorithm}

We note a set of preliminary lemmas and definition which we will need for the proof of \prettyref{prop:sampling-mean}.
\begin{definition}[Simplex]
  A $k$-simplex is a $k$-dimensional polytope which is the convex hull of its $k + 1$ vertices. More formally, suppose the $k + 1$ points
  $u_{0},\dots ,u_{k}\in \mathbb {R} ^{k}$ are affinely independent, Then, the simplex determined by them is the set of points
  \[ \cV = \left\{\theta_0 u_0 + \dots +\theta_k u_k ~\bigg|~ \sum_{i=0}^{k} \theta_i=1 \mbox{ and } \theta_i \ge 0 \mbox{ for all } i \in [k]\right\} \mper \]
\end{definition}
\begin{lemma}[Lemma 1, \cite{Inaba94}] \label{lem:inaba1}
  Let $S$ be a set of $n$ points in $\R^d$, $T$ be a randomly selected subset of size $t$ from $S$, and $\mu(S), \mu(T)$ be the mean points of $S$ and $T$ respectively. With probability $1-\eta$, $\norm{\mu(S)-\mu(T)}^2 \leq \frac{1}{\eta t} \sigma^2$,
  where $\sigma^2 = \frac{1}{n} \sum_{s \in S} \norm{s-\mu(s)}^2$ and $0 \leq \eta \leq 1$.
\end{lemma}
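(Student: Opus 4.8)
The plan is to prove the bound by a second-moment computation followed by Markov's inequality, in the same spirit as \prettyref{claim:approx} but tracking the variance $\sigma^2$ rather than the diameter. First I would translate $S$ so that $\mu(S) = \bar{0}$; this changes neither $\sigma^2$ nor $\norm{\mu(S) - \mu(T)}$, so it is without loss of generality. Writing $T = \set{y_1, \ldots, y_t}$ for the random subset, we have $\mu(T) = \frac{1}{t} \sum_{i=1}^t y_i$, and the goal reduces to bounding $\E \norm{\mu(T)}^2$.

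Next I would expand $\E \norm{\mu(T)}^2 = \frac{1}{t^2} \sum_{i,j=1}^t \E \inprod{y_i, y_j}$ and split into diagonal terms ($i = j$) and off-diagonal terms ($i \neq j$). Since each $y_i$ is marginally uniform over $S$ and $\mu(S) = \bar{0}$, we get $\E \norm{y_i}^2 = \sigma^2$, so the diagonal contributes $t \sigma^2$. For the off-diagonal terms there are two cases: if the sample is drawn with replacement then $y_i$ and $y_j$ are independent with mean zero, so those terms vanish; if drawn without replacement then $\E \inprod{y_i, y_j} = \frac{1}{n(n-1)} \paren{ \norm{\sum_{s \in S} s}^2 - \sum_{s \in S}\norm{s}^2 } = - \frac{\sigma^2}{n-1} \leq 0$. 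Either way $\E \norm{\mu(T)}^2 \leq \sigma^2 / t$ (and in the without-replacement case one even gets the sharper $\frac{n-t}{n-1} \cdot \frac{\sigma^2}{t}$).

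Finally I would apply Markov's inequality to the nonnegative random variable $\norm{\mu(S) - \mu(T)}^2$: since its expectation is at most $\sigma^2/t$, the event that it exceeds $\frac{1}{\eta} \cdot \frac{\sigma^2}{t}$ has probability at most $\eta$, which is exactly the claimed bound. I do not expect any real obstacle here; the only step requiring a moment of care is the negative-correlation identity for sampling without replacement, which is entirely routine and is in fact unnecessary if one reads the lemma in the with-replacement model used elsewhere in the paper. The argument is elementary and self-contained given Markov's inequality.
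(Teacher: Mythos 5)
Your proof is correct, and there is nothing to check it against in the paper itself: \prettyref{lem:inaba1} is quoted from Inaba et al.\ \cite{Inaba94} without proof. Your argument is the standard one, and it is the exact variance analogue of what the paper does prove for its diameter version (\prettyref{claim:approx} followed by Markov's inequality in the proof of \prettyref{lem:sampling3}): a second-moment expansion of $\norm{\mu(T)-\mu(S)}^2$, with the diagonal terms contributing $t\sigma^2$ and the cross terms vanishing (with replacement) or being negative (without replacement), giving $\E\norm{\mu(T)-\mu(S)}^2 \leq \sigma^2/t$, after which Markov yields the stated $\frac{1}{\eta t}\sigma^2$ bound with probability $1-\eta$. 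Your handling of the without-replacement case via the identity $\E\inprod{y_i,y_j} = -\sigma^2/(n-1)$ for $i \neq j$ is the only point where your argument goes beyond the paper's with-replacement computation, and it is needed if one reads ``subset of size $t$'' literally; it even gives the sharper factor $\frac{n-t}{n-1}$, so there is no gap.
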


\begin{lemma}[Lemma 4, \cite{Ding14}] \label{lem:ding14}
  Let $\Gamma$ be a set of elements, and $S$ be a subset of $\Gamma$ with $\frac{|S|}{|\Gamma|}=\rho$ for some $\rho \in (0,1)$.
  If we randomly select $\frac{t \ln \frac{t}{\eta}}{\ln(1+\rho)} = \bigo{\frac{t}{\rho} \ln \frac{t}{\eta}}$ elements from $\Gamma$, then with probability at least $1-\eta$, the sample contains $t$ or more elements from $S$ for $0 < \eta < 1$ and $t \in \mathbb Z^+$.
\end{lemma}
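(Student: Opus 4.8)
The plan is to reduce this to an elementary coupon-collector estimate via a block decomposition and a union bound. First I would fix the sampling model: selecting $m \defeq \frac{t \ln(t/\eta)}{\ln(1+\rho)}$ elements of $\Gamma$ independently and uniformly at random means that each sample lands in $S$ independently with probability exactly $\rho$. (If the selection is without replacement, the only effect is to make the ``misses $S$'' events negatively correlated, which can only help, so it suffices to analyze the i.i.d.\ case.) Thus it is enough to show that a sequence of $m$ independent $\mathrm{Bernoulli}(\rho)$ trials contains at least $t$ successes with probability at least $1-\eta$.

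Second, I would split the $m$ trials into $t$ consecutive blocks, each of size $b \defeq \ceil{m/t} = \ceil{\ln(t/\eta)/\ln(1+\rho)}$; rounding up only enlarges the sample, which is harmless. The key observation is the monotone one: if \emph{every} one of the $t$ blocks contains at least one success, then there are at least $t$ successes overall. Consequently the failure event $\set{\text{fewer than } t \text{ successes}}$ is contained in $\bigcup_{\text{blocks}} \set{\text{this block has no success}}$.

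Third, I would estimate the probability that a fixed block of size $b$ has no success, which is $(1-\rho)^b$. The inequality $1-\rho \leq \frac{1}{1+\rho}$ (equivalently $(1-\rho)(1+\rho) = 1-\rho^2 \leq 1$) gives
\[ (1-\rho)^b \leq (1+\rho)^{-b} \leq (1+\rho)^{-\ln(t/\eta)/\ln(1+\rho)} = e^{-\ln(t/\eta)} = \eta/t \mper \]
A union bound over the $t$ blocks then yields $\Pr{\text{fewer than } t \text{ successes}} \leq t \cdot (\eta/t) = \eta$, which is exactly the claim. The asymptotic form $m = \bigo{(t/\rho)\ln(t/\eta)}$ follows because $\ln(1+\rho) \geq \rho - \rho^2/2 \geq \rho/2$ for $\rho \in (0,1)$, so $1/\ln(1+\rho) \leq 2/\rho$.

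I do not expect a genuine obstacle here: the only points requiring care are that $\ln(t/\eta) > 0$ (which holds since $t \geq 1$ and $\eta < 1$), so that the block size $b$ is a positive integer, and that replacing $m/t$ by $\ceil{m/t}$ pushes the statement in the easy direction. The entire argument is a few lines once the block decomposition is in place.
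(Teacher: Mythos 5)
The paper does not prove this statement at all: it is imported verbatim as Lemma 4 of the cited work of Ding and Xu, so there is no in-paper proof to compare against. Your block-decomposition argument is correct and is essentially the standard proof of this sampling lemma: the bound $(1-\rho)^{b}\leq(1+\rho)^{-b}\leq \eta/t$ for blocks of size $b\geq \ln(t/\eta)/\ln(1+\rho)$, followed by a union bound over the $t$ blocks, is exactly the intended calculation, and your handling of the two side issues (without-replacement sampling can only help, which one can justify by exchangeability of the draws together with the factor-wise bound $\frac{(1-\rho)N-i}{N-i}\leq 1-\rho$; and the ceiling in the block size, which costs at most $t$ extra samples and is absorbed by the $\bigO{\cdot}$ form of the statement) is adequate.
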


\begin{lemma}[Lemma 2.3 (Simplex Lemma II), \cite{Ding15}] \label{lem:simplex}
  Let $Q$ be a set of points in $\R^d$ with a partition of $Q = \cup_{l=1}^{j} Q_l$ and $Q_{l_1} \cap Q_{l_2} = \emptyset$ for any $ l_1 \neq l_2$.
  Let $o$ be the mean point of $Q$, and $o_l$ be the mean point of $Q_l$ for $1 \leq l \leq j$.
  Let $\sigma^2 = \frac{1}{|Q|} \sum_{q \in Q} \norm{q-o}^2$. Let $\{o_1',\ldots,o_j'\}$ be $j$ points in $\R^d$ such that $\norm{o_l-o_l'} \leq L$ for $1 \leq l \leq j, L>0$, and $\simplex$ be the simplex determined by $\{o_1',\ldots,o_j' \}$.
  Then for any $0 < \e \leq 1$, it is possible to construct a grid of size $\bigo{(8j/\e)^j}$ inside $\simplex$ such that at least one grid point $\tau$ satisfies the inequality $\norm{\tau-o} \leq \sqrt{\e}\sigma + (1+\e)L$.
\end{lemma}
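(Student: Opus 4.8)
We would prove this by following \cite{Ding15} (this is their Lemma~2.3); the argument is an induction on $j$, and we outline the structure together with the places where care with constants is needed. Two elementary facts set it up. Writing $w_l \defeq \abs{Q_l}/\abs{Q}$, we have $w_l \ge 0$, $\sum_l w_l = 1$, and $o = \sum_l w_l o_l$ (the mean of a disjoint union is the weighted mean of the group means), so $o$ lies in the simplex of the true means, and $o'' \defeq \sum_l w_l o_l'$ lies in $\simplex$ with $\norm{o - o''} \le \sum_l w_l \norm{o_l - o_l'} \le L$. Second, the bias--variance decomposition $\sigma^2 = \sum_l w_l \sigma_l^2 + \sum_l w_l \norm{o_l - o}^2$ (where $\sigma_l^2$ is the variance of $Q_l$ about its own centroid $o_l$, and the cross terms vanish) gives $\sum_l w_l \norm{o_l - o}^2 \le \sigma^2$, hence $\norm{o_l - o} \le \sigma/\sqrt{w_l}$ for every $l$.

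Now induct on $j$. The base case $j=1$ is immediate: take $\tau = o_1'$, so $\norm{\tau - o} = \norm{o_1' - o_1} \le L \le \sqrt{\e}\,\sigma + (1+\e)L$. For $j \ge 2$, relabel so that $w_1 \ge \dots \ge w_j$ (hence $w_j \le 1/j$) and fix a threshold $\lambda$ (a small polynomial in $\e/j$; see below). \emph{Case A: $w_j \ge \lambda$}, i.e. all weights are at least $\lambda$. Put the barycentric grid on $\simplex$ with $N$ subdivisions, i.e. the points $\tau = \sum_l a_l o_l'$ with each $a_l \in \set{0,1/N,\dots,1}$ and $\sum_l a_l = 1$; rounding $(w_1,\dots,w_j)$ to the grid yields such a $\tau$ with $\sum_l \abs{a_l - w_l} \le j/N$. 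The key move is to re-center before taking norms: since $\sum_l(a_l - w_l) = 0$, we have $\tau - o'' = \sum_l (a_l - w_l)(o_l' - o)$, so $\norm{\tau - o''} \le \tfrac{j}{N}\max_l \norm{o_l' - o} \le \tfrac{j}{N}\paren{L + \sigma/\sqrt{\lambda}}$; choosing $N = \Theta(j/\sqrt{\e\lambda})$ makes this at most $\e L + \sqrt{\e}\,\sigma$, which with $\norm{o'' - o}\le L$ gives $\norm{\tau - o} \le \sqrt{\e}\,\sigma + (1+\e)L$. The grid has $\binom{N+j-1}{j-1}$ points.

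\emph{Case B: $w_j < \lambda$}. We peel off the lightest group. Let $o'$ be the mean of $Q' \defeq \bigcup_{l<j} Q_l$; a direct computation gives $o - o' = \tfrac{w_j}{1 - w_j}(o_j - o)$, hence $\norm{o - o'} \le \tfrac{\sqrt{w_j}}{1-w_j}\,\sigma \le \bigo{\sqrt{\lambda}}\,\sigma$, and the variance $\sigma'^2$ of $Q'$ about $o'$ satisfies $\sigma'^2 \le \sigma^2/(1-w_j)$ (since $o'$ minimizes the sum of squared distances over $Q'$, which is at most that sum taken about $o$). Applying the induction hypothesis (with the same $\e$) to $Q'$ and the $j-1$ estimates $o_1',\dots,o_{j-1}'$ produces a grid point $\tau'$ in the facet $\simplex'$ of $\simplex$ spanned by $o_1',\dots,o_{j-1}'$ with $\norm{\tau' - o'} \le \sqrt{\e}\,\sigma' + (1+\e)L$. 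Since our construction grids $\simplex$ together with each of its $2^j$ faces (the ``degenerated sub-simplices'' in \prettyref{alg:tree}), $\tau'$ is one of the grid points, and $\norm{\tau' - o} \le \sqrt{\e}\,\sigma' + (1+\e)L + \bigo{\sqrt{\lambda}}\,\sigma$.

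The main obstacle is the constant bookkeeping that closes Case~B. The peeling may recurse through as many as $j$ light groups, and each step both displaces the centroid by $\bigo{\sqrt{\lambda}}\,\sigma$ and inflates the working variance by a factor $(1-w_j)^{-1/2}$; for the accumulated displacements to telescope to within the single budget $\sqrt{\e}\,\sigma$ (and the inflation to stay a bounded constant), the threshold $\lambda$ must be taken roughly $\Theta(\e/j^2)$ rather than $\Theta(\e)$. This in turn forces $N = \poly(j/\e)$ (about $j^2/\e$) subdivisions in Case~A, so the last chore is to verify that $2^j\binom{N+j-1}{j-1}$, summed over $\simplex$ and its faces across the recursion, still matches the claimed $\bigo{(8j/\e)^j}$ — which it does once the constants are pinned down. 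The re-centering identity $\tau - o'' = \sum_l (a_l - w_l)(o_l' - o)$ of Case~A is the other point that has to be got right: it is exactly what keeps the grid error independent of the a~priori unbounded norms $\norm{o_l'}$.
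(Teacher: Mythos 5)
First, a point of reference: the paper never proves this lemma --- it is imported verbatim from Ding and Xu \cite{Ding15} (their Lemma~2.3) and used as a black box, which is why \prettyref{alg:tree} grids each simplex together with its $2^j$ ``degenerated sub-simplices.'' So the comparison is with the cited source rather than with anything in this paper, and your outline does follow that source's route: the weighted-mean identity $o=\sum_l w_l o_l$, the bias--variance inequality $\norm{o_l-o}\le \sigma/\sqrt{w_l}$, a barycentric grid with the re-centering trick $\tau-o''=\sum_l(a_l-w_l)(o_l'-o)$ when every part is heavy, and peeling light parts onto faces of $\simplex$ otherwise. That two-case structure is exactly what the $2^j$ sub-simplices in the algorithm are there to support, and the re-centering identity is indeed the step that keeps the grid error independent of the absolute positions of the $o_l'$.

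The one place your write-up is not yet a proof is the closing of Case~B, and it is slightly more than ``constant bookkeeping.'' If you invoke the induction hypothesis on $Q'$ with the \emph{same} $\e$, you get $\norm{\tau'-o'}\le \sqrt{\e}\,\sigma'+(1+\e)L$ with $\sigma'$ possibly as large as $\sigma/\sqrt{1-w_j}>\sigma$, and you then add the displacement $\bigo{\sqrt{\lambda}}\sigma$; the total strictly exceeds the target $\sqrt{\e}\,\sigma+(1+\e)L$, so the induction as literally stated cannot be applied to itself. To make it telescope you must induct on a strengthened statement --- e.g.\ carry an explicit error budget indexed by the number of peeled parts, or run the recursion with a geometrically shrinking accuracy parameter --- and only then does your prescription $\lambda=\Theta(\e/j^2)$ (with the per-step displacement $\bigo{\sqrt{\e}/j}\,\sigma$ and variance inflation $(1-\lambda)^{-j}=1+\bigo{\e/j}$) sum to within the single $\sqrt{\e}\,\sigma$ allowance, after which the grid count $2^j\binom{N+j-1}{j-1}$ with $N=\poly(j/\e)$ is absorbed into the $\bigo{(8j/\e)^j}$ (or the $(32j/\e)^j$ actually used in \prettyref{alg:tree}). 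You name the fix but do not restate the hypothesis, so as written the inductive step does not go through; the approach itself is the right one and is fixable along exactly the lines you indicate.
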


\begin{lemma}[Lemma 2.2, \cite{Ding15}] \label{lem:ding15}
Let $Q$ be a set of points in $\R^d$, and $Q_1$ be its subset containing $\rho \abs{Q}$ points for some $0 < \rho \leq 1$. Let $o$ and $o_1$ be the mean points of $\Q$ and $Q_1$, respectively.
Then $\norm{o-o_1} \leq \sqrt{\frac{1-\rho}{\rho}}\sigma$, where $\sigma^2 = \frac{1}{\abs{Q}} \sum_{q \in Q} \norm{q-o}^2$.
\end{lemma}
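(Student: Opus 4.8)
\textbf{Proof proposal for \prettyref{lem:ding15}.}
The plan is to prove the bound $\norm{o - o_1} \leq \sqrt{(1-\rho)/\rho}\,\sigma$ by a direct computation: express $o$ as a convex combination of $o_1$ (the mean of the subset $Q_1$) and $o_2$ (the mean of the complementary set $Q_2 \defeq Q \setminus Q_1$), isolate $\norm{o - o_1}$ in terms of $\norm{o_1 - o_2}$, and then bound $\norm{o_1 - o_2}$ using the variance $\sigma^2$ of $Q$ around $o$ together with a convexity/Jensen-type estimate. Write $\abs{Q} = N$, $\abs{Q_1} = \rho N$, $\abs{Q_2} = (1-\rho)N$.

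First I would record the centroid identity $o = \rho\, o_1 + (1-\rho)\, o_2$, which follows by splitting the sum $\sum_{q \in Q} q = \sum_{q \in Q_1} q + \sum_{q \in Q_2} q$ and dividing by $N$. This gives $o - o_1 = (1-\rho)(o_2 - o_1)$, hence $\norm{o - o_1} = (1-\rho)\norm{o_1 - o_2}$, and symmetrically $o - o_2 = \rho(o_1 - o_2)$. Next I would bound $\norm{o_1 - o_2}$. The key inequality is that the variance around the global mean dominates the between-group spread: expanding
\[
\sum_{q \in Q} \norm{q - o}^2 = \sum_{q \in Q_1} \norm{q - o}^2 + \sum_{q \in Q_2} \norm{q - o}^2 \ge \rho N \norm{o_1 - o}^2 + (1-\rho) N \norm{o_2 - o}^2,
\]
where each of the two inner inequalities is the standard fact that $\sum_{q \in Q_l}\norm{q-o}^2 = \sum_{q\in Q_l}\norm{q - o_l}^2 + \abs{Q_l}\norm{o_l - o}^2 \ge \abs{Q_l}\norm{o_l - o}^2$ (the mean minimizes sum of squared deviations). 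Substituting $\norm{o_1 - o} = (1-\rho)\norm{o_1 - o_2}$ and $\norm{o_2 - o} = \rho\norm{o_1 - o_2}$ into the right-hand side yields
\[
N \sigma^2 \ge \Paren{\rho(1-\rho)^2 + (1-\rho)\rho^2} N \norm{o_1 - o_2}^2 = \rho(1-\rho) N \norm{o_1 - o_2}^2,
\]
so $\norm{o_1 - o_2}^2 \le \sigma^2 / (\rho(1-\rho))$.

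Finally I would combine the two steps: $\norm{o - o_1}^2 = (1-\rho)^2 \norm{o_1 - o_2}^2 \le (1-\rho)^2 \cdot \sigma^2/(\rho(1-\rho)) = \frac{1-\rho}{\rho}\sigma^2$, and take square roots to conclude $\norm{o - o_1} \le \sqrt{(1-\rho)/\rho}\,\sigma$. I do not expect any real obstacle here; the only point requiring a moment of care is getting the decomposition $N\sigma^2 \ge \rho N \norm{o_1 - o}^2 + (1-\rho) N\norm{o_2-o}^2$ right, i.e. invoking the ``parallel axis'' identity on each part and discarding the nonnegative within-group variance terms. An even shorter alternative would be to apply Jensen/Cauchy–Schwarz directly: $\norm{o - o_1}^2 = \norm{\frac{1}{\rho N}\sum_{q \in Q_1}(q - o)}^2 \le \frac{1}{\rho N}\sum_{q \in Q_1}\norm{q-o}^2 \le \frac{1}{\rho N}\cdot N\sigma^2 \cdot$ — but this only gives $\sigma^2/\rho$, which is weaker by a factor of $1/(1-\rho)$; so the two-sided argument above is the one that achieves the stated constant, and that is the route I would take.
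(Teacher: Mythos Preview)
Your argument is correct. The centroid identity $o = \rho\, o_1 + (1-\rho)\, o_2$, the parallel-axis decomposition $\sum_{q \in Q_l}\norm{q - o}^2 = \sum_{q \in Q_l}\norm{q - o_l}^2 + \abs{Q_l}\norm{o_l - o}^2$, and the algebra that follows all check out and yield exactly the stated bound $\norm{o - o_1} \le \sqrt{(1-\rho)/\rho}\,\sigma$. One trivial edge case worth a sentence: when $\rho = 1$ the complement $Q_2$ is empty and $o_2$ is undefined, but then $o_1 = o$ and the inequality is $0 \le 0$, so this can be handled separately.

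As for comparison with the paper: the paper does not supply its own proof of this statement. It is quoted verbatim as Lemma~2.2 of \cite{Ding15} among a list of preliminary lemmas imported without proof, and is then used as a black box in the analysis of Case~(b) of \prettyref{lem:closepts}. So there is nothing in the paper to compare your route against; your write-up would stand on its own as a self-contained proof of the cited fact.
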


\paragraph{Notations:} Let $\mathcal{OPT}=\set{\Opt_1,\ldots,\Opt_k}$ be the $k$ unknown optimal clusters for the lowest cost $\alpha$-center proximal $k$-means objective, with means $\mu_j$. W.l.o.g. we assume that $|\Opt_1|\geq \ldots \geq |\Opt_k|$.
We define $\sigma_{j}^2 \defeq \frac{1}{\abs{\Opt_j}}\sum_{p \in \Opt_j} \norm{p-\mu_{j}}^2$. Let $\lambda_j \defeq |\Opt_j|/n$.

The following lemma is similar to the Lemma 3.3 of \cite{Ding15} with minor modifications.
\begin{lemma}\label{lem:closepts}
  Among all the points generated by the \prettyref{alg:tree}, with constant probability, there exists at least one tree, $\mathcal T_i$, which has a root-to-leaf path with each of its nodes $v_j$ at level $j, (1 \leq j \leq k)$ associating with a point $p_{v_j}$ and satisfying the inequality
  \begin{equation} \label{eq:epsgamma}
    \norm{p_{v_j}-\mu_j} \leq \e \gamma \radm{X} + (1+\e) j \sqrt{\e} \gamma \radm{X} \mper
  \end{equation}
\end{lemma}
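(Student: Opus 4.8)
The plan is to induct on the level $j$ of the tree, showing that the peeling-and-enclosing process, for the correct choice of $\zeta$ (equivalently, the correct iteration $i$ of \prettyref{alg:centers}), produces at each level a point $p_{v_j}$ close to the mean $\mu_j$ in the sense of \prettyref{eq:epsgamma}. This mirrors Lemma~3.3 of \cite{Ding15}, with the only change being that we must track the radius scale in terms of $\radm{X}$ and $\gamma$ rather than the optimal $k$-means cost. First I would fix the iteration $i$ for which $\zeta = (1+\e)^i \paren{\frac{\alpha}{(\alpha+1)^2}}\frac{R}{\gamma}$ lies within a $(1+\e)$-factor of $\radm{X}$; such an $i$ exists by the preceding proposition, which sandwiches $\radm{X}$ inside $\Brac{\paren{\frac{\alpha}{(\alpha+1)^2}}\frac{R}{\gamma}, \paren{\frac{\alpha}{\alpha^2-1}}\frac{R}{\gamma}}$, and the loop in \prettyref{alg:centers} ranges over enough powers of $(1+\e)$ to hit it. From here on, work only with the tree $\cT_i$ for this $i$, so that $\zeta \approx \radm{X}$ up to a controlled factor.

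For the induction, assume nodes $v_1, \ldots, v_j$ on a root-to-leaf path already satisfy \prettyref{eq:epsgamma} for the first $j$ means (with clusters ordered by decreasing size, as in the Notations). The key case split, exactly as in \cite{Ding15}, is whether cluster $\Opt_{j+1}$ has a large fraction of its mass outside the union of balls $\cB_{j+1,l}$ centered at $p_{v_1}, \ldots, p_{v_j}$ of the current radius candidate $r$, or not. In the ``large residual'' case, \prettyref{lem:ding14} guarantees that the random sample of size $s$ hits enough points of $\Opt_{j+1} \setminus \cup_l \cB_{j+1,l}$, and then \prettyref{lem:inaba1} lets one of the subset-means $\pi_i$ approximate the mean of that residual part to within $\sqrt{\e}\,\sigma_{j+1}$-type error; feeding this $\pi_i$ into the simplex grid via \prettyref{lem:simplex} (with the $j$ approximate means plus $\pi_i$ as vertices) yields a grid point within $\sqrt{\e}\sigma_{j+1} + (1+\e)L$ of $\mu_{j+1}$, where $L$ is the inductive error bound. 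In the ``small residual'' case, $\Opt_{j+1}$ is almost entirely contained in the union of the previous balls, so \prettyref{lem:ding15} and the simplex built on just $\set{p_{v_1},\ldots,p_{v_j}}$ already suffice to produce a nearby grid point. The radius candidate set $\cR$ is discretized finely enough (step count $4 + 2/\e$) that some $r \in \cR$ matches the true geometry up to $(1+\e)$, and the quantities $\sigma_{j+1}$ and the ball radii are all bounded by multiples of $\radm{X}$ times $\gamma$ using \prettyref{prop:geom} and \prettyref{eq:radmin}; carrying these bounds through the recursion gives the additive error $\e\gamma\radm{X} + (1+\e)j\sqrt{\e}\gamma\radm{X}$ claimed at level $j$.

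The main obstacle I anticipate is the bookkeeping that converts \cite{Ding15}'s cost-based error parameters into our diameter/mean-separation parameters: there the radius scale is tied to $\sqrt{\OPT/n}$, whereas here it must be expressed via $\radm{X} = \frac{\alpha}{\alpha^2-1}\min_{i\neq j}\norm{\mu_i - \mu_j}$ and the bound $\rij \leq \gamma\,\radm{X}$ from \prettyref{eq:radmin}. In particular one must check that $\sigma_j^2 = \frac{1}{|\Opt_j|}\sum_{p\in\Opt_j}\norm{p - \mu_j}^2 \leq \diam{\Opt_j}^2 \leq 4\rij^2 \leq 4\gamma^2\radm{X}^2$ (using \prettyref{prop:amps}), so that every occurrence of $\sigma_j$ in the \cite{Ding15} argument can be replaced by $O(\gamma\radm{X})$ without changing the grid sizes or sample sizes up to the chosen $\e$. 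The probability accounting — union-bounding the $O(\log_{1+\e}\frac{\alpha+1}{\alpha-1})$ iterations, the $k$ levels, and the sampling failures within each level — is routine once $\e$ is set as in step~1 of \prettyref{alg:centers}, and gives the stated constant success probability. The rest is a direct transcription of the \cite{Ding15} induction.
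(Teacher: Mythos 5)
Your proposal follows essentially the same route as the paper's proof: fix the iteration with $\zeta \in [\radm{X},(1+\e)\radm{X}]$, induct over levels, split on whether the residual mass of the next cluster outside the peeling spheres is large or small, and invoke \prettyref{lem:ding14}, \prettyref{lem:inaba1}, \prettyref{lem:simplex} and \prettyref{lem:ding15} with all scales re-expressed as multiples of $\gamma\radm{X}$ via \prettyref{prop:amps} and \prettyref{eq:radmin}, exactly as the paper does in adapting Ding--Xu's Lemma 3.3. The only step you gloss over is the paper's peeling claim (bounding how many points of the already-processed clusters survive outside the spheres, which is what lower-bounds the sampling fraction $\rho$ for \prettyref{lem:ding14}), but this is the same bookkeeping you defer to the transcription, so the approach matches.
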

\paragraph{Algorithm and Proof Overview:} We will give a high level idea of the algorithm and the proof. At each searching step, the algorithm performs a `sphere peeling' and `simplex enclosing' step, to generate $k$ approximate mean points for the clusters. Initially the algorithm uses a random sampling technique to find an approximate mean $p_{v_1}$ for $\Opt_1$. This can be done as $\frac{\abs{\Opt_1}}{n} \geq 1/k$, and hence we can sample.For some $j\geq 1$, suppose that at the $(j+1)^{th}$ iteration, the algorithm already has approximate mean points mean points $p_{v_1},\ldots,p_{v_j}$ for $\Opt_1,\ldots,\Opt_j$.
It is not clear how to distinguish points which belong to $\Opt_1,\ldots,\Opt_j$ from those which belong to $\Opt_{j+1}$.
Also, the number of points in the cluster $\Opt_{j+1}$ could be small, it is tough to obtain a significant fraction of such points using random sampling. Therefore, the idea used is to seperate the points in $\Opt_{j+1}$ using $j$ peeling spheres, $\cB_{j+1,1},\ldots,\cB_{j+1,j}$, centered at the $j$ approximate mean points respectively and with radius approximately being $\radm{X}$. Note that $\cB_{j+1,1},\ldots,\cB_{j+1,j}$ can have some points from $\Opt_{j+1}$.
Let $P_{j+1}$ be the set of unknown points in $\Opt_{j+1} \setminus \paren{ \cup_{l=1}^j \cB_{j+1,l}}$.
The algorithm considers two cases, a) $\abs{P_{j+1}}$ is large and b) $\abs{P_{j+1}}$ is small. For the case a) when $\abs{P_{j+1}}$ is large, we can sample points from $P_{j+1}$ using random sampling, and get an approximate mean $\pi$ of $P_{j+1}$, and then construct a simplex determined by $\pi,p_{v_1},\ldots,p_{v_j}$ to contain the $(j+1)^{th}$ mean point, using \prettyref{lem:simplex}.
This is because, $\Opt_{j+1} \cap \cB_{j+1,l}$, $l \in [j]$ can be seen as a partition of $\Opt_{j+1}$ whose approximate mean is $p_{v_l}$, thus the simplex lemma II applies. For case b) where $\abs{P_{j+1}}$ is small, it directly constructs the simplex determined by $p_{v_1},\ldots,p_{v_j}$, and searches for the approximate mean point of $\Opt_{j+1}$ in the grid.
This follows because it can be shown that $\Opt_{j+1} \cap \cB_{j+1,l}$, $l \in [j]$ can be seen as a partition of $\Opt_{j+1}$ whose approximate mean is $p_{v_l}$, and from the \prettyref{lem:ding15}. The \prettyref{lem:ding15} roughly says that even if we remove a small number of points from a cluster, its new mean remains close to the original mean.

\begin{proof}[Proof of \prettyref{lem:closepts}:]
  Let $\cT_i$ be the tree generated by the \prettyref{alg:tree} when
  \[ \zeta \in \left [\radm{X}, (1+\e)\radm{X} \right] \mper \]
  We will prove this lemma by induction.

  \textbf{Base Case:} For $j=1$, we have $\lambda_1 \geq \frac{1}{k}$. Therefore through random sampling (\prettyref{lem:ding14}), we can find a point $p_{v_1}$, which is close to $\mu_1$ (\prettyref{lem:inaba1}).
  We get that $\norm{p_{v_1} - \mu_1} \leq \e\sigma_1 \leq \e\radm{X} + (1+\e) \gamma \sqrt{\e} \radm{X}$. Hence the base case holds.

  \textbf{Induction Step:} We assume that there is a path in $\cT_i$ from root to the $(j-1)$-level, such that for each $1 \leq l \leq (j-1)$, the level-$l$ node $v_l$ on the path associated with a point $p_{v_l}$ satisfying the inequality
  \[ \norm{p_{v_l} - \mu_l} \leq \e \gamma \radm{X} + (1+\e) l \sqrt{\e} \gamma \radm{X} \mper\]
  Now we need to show this for the $j$-level, i.e., we need to show that there exists at least one child $v_j$ of $v_{j-1}$, such that the associated point $p_{v_j}$ satisfies the inequality
  \[ \norm{p_{v_j} - \mu_j} \leq \e \gamma \radm{X} + (1+\e) j \sqrt{\e} \gamma \radm{X} \mper \]
  First we make the following claim. The claim is a slight modification of Claim 2 of \cite{Ding15}. We will prove it in \prettyref{sec:claim-pf}.
  \begin{Claim}\label{claim:radius}
    In the set of radius candidates in the algorithm, there exists one value $r_j \in \mathcal R$, such that
    \[ r_j \in \Brac{j\sqrt{\e} \gamma \radm{X}, \paren{1+\frac{\e}{2}}j\sqrt{\e} \gamma \radm{X} } \mper \]
  \end{Claim}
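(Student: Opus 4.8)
The plan is to prove \prettyref{claim:radius} by a routine discretization argument: for a fixed level $j$ and a fixed value of $\zeta$, the radius candidate set $\cR$ is an arithmetic progression in its index $l$, and I will show that its smallest term lies at or below $j\sqrt\e\gamma\radm{X}$, its largest term lies at or above $(1+\e/2)\,j\sqrt\e\gamma\radm{X}$, and consecutive terms differ by a multiplicative factor of at most $1+\e/2$. A discrete intermediate-value step then forces some term of $\cR$ to land inside the target interval $\Brac{j\sqrt\e\gamma\radm{X},\,(1+\e/2)\,j\sqrt\e\gamma\radm{X}}$.

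First I would pin down the value of $\zeta$. The tree relevant to the proof of \prettyref{lem:closepts} is the one produced with $\zeta\in[\radm{X},(1+\e)\radm{X}]$; such an iteration exists because \prettyref{alg:centers} sweeps $\zeta=(1+\e)^i\,\tfrac{\alpha}{(\alpha+1)^2}\tfrac R\gamma$ geometrically with ratio $1+\e$, from $\tfrac{\alpha}{(\alpha+1)^2}\tfrac R\gamma$ up to $\tfrac{\alpha}{\alpha^2-1}\tfrac R\gamma$, and the preceding proposition shows that $\radm{X}$ lies in exactly that interval. From here on I may use $\radm{X}\le\zeta\le(1+\e)\radm{X}$ and express every bound on the candidates in terms of $\radm{X}$.

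Next I would write $r_l=\tfrac{1+l\e/2}{2(1+\e)}\,j\sqrt2\,\sqrt\e\,\zeta\gamma$ for integer $l$ with $0\le l\le 4+2/\e$ and verify three facts. (i) At $l=0$, $r_0=\tfrac{1}{2(1+\e)}j\sqrt2\sqrt\e\zeta\gamma\le\tfrac{\sqrt2}{2}\,j\sqrt\e\gamma\radm{X}< j\sqrt\e\gamma\radm{X}$, using $\zeta\le(1+\e)\radm{X}$. (ii) For the largest admissible index, which exceeds $3+2/\e$, one has $1+l\e/2>2$, hence $r_l>\tfrac{\sqrt2}{1+\e}\,j\sqrt\e\gamma\radm{X}$, and this is at least $(1+\e/2)\,j\sqrt\e\gamma\radm{X}$ because $\e$ is small: from the definition in \prettyref{alg:centers}, $\e=\bigl((\alpha-1)^2k/(4\alpha\gamma(1+k))\bigr)^2<1/(16\gamma^2)\le 1/16$ (here $\alpha<2$ gives $(\alpha-1)^2<1$ and $\gamma\ge 1$), so $\sqrt2\ge(1+\e)(1+\e/2)$. (iii) The consecutive ratio is $r_{l+1}/r_l=1+\tfrac{\e/2}{1+l\e/2}\le 1+\e/2$. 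Combining (i)--(iii): let $l^\star$ be the smallest index with $r_{l^\star}\ge j\sqrt\e\gamma\radm{X}$; by (i) and (ii) it is well defined and $l^\star\ge 1$, and by (iii) $r_{l^\star}\le(1+\e/2)\,r_{l^\star-1}<(1+\e/2)\,j\sqrt\e\gamma\radm{X}$, so the chosen radius $r_{l^\star}$ lies in $\Brac{j\sqrt\e\gamma\radm{X},\,(1+\e/2)\,j\sqrt\e\gamma\radm{X}}$, as claimed.

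The main obstacle is bookkeeping rather than ideas: one must reconcile the constants (the $\sqrt2$, the $2(1+\e)$ denominator, the stray $\gamma$ factors), confirm that the index range $[0,4+2/\e]$ is wide enough that the progression genuinely straddles the target interval while every index used stays admissible, and track exactly where the smallness of $\e$ (forced by $\alpha<2$ together with the definition of $\e$ in \prettyref{alg:centers}) is invoked. Since this is precisely the computation of Claim 2 of \cite{Ding15}, up to the minor change in step 2(b) of \prettyref{alg:tree}, I expect the adaptation to go through with only these cosmetic modifications.
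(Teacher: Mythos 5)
Your proof is correct and follows essentially the same discretization/pigeonhole argument as the paper's own proof (the paper normalizes by $j\sqrt{\e}\gamma\radm{X}$ and shows a grid point of $\cN$ lands in $[1,1+\e/2]$, while you locate the first term of the arithmetic progression crossing the lower threshold and use the consecutive-ratio bound $1+\e/2$). The only cosmetic caveat is that your smallness bound on $\e$ invokes $\alpha<2$, which \prettyref{thm:mpr-gamma} does not assume; the paper's standing assumption $\e\le 1/(4k^2)$ (used in the proof of \prettyref{claim:peeling}) already yields the inequality $\sqrt{2}\ge(1+\e)(1+\e/2)$ you need, so the argument goes through unchanged.
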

  Now we construct $(j-1)$ peeling spheres $\set{\cB_{j,1},\ldots,\cB_{j,j-1}}$. For each $1 \leq l \leq j-1$, $\cB_{j,l}$ is centered at $p_{v_l}$ with radius $r_j$. Next we make the following claim. The proof claim is similar to the proof of Claim 3 of \cite{Ding15}, adapted to our setting. We will prove it in \prettyref{sec:claim-pf}.
  \begin{Claim}\label{claim:peeling}
    For each $1 \leq l \leq j-1$, $\abs{\Opt_l \setminus \paren{\cup_{w=1}^{j-1} \cB_{j,w}}} \leq \frac{4\lambda_j n}{\e}$.
  \end{Claim}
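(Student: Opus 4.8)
The plan is to prove the stronger inequality $\abs{\Opt_l \setminus \cB_{j,l}} \le \tfrac{4 \lambda_j n}{\e}$ for each $l \in \{1,\dots,j-1\}$. Since $\cB_{j,l}$ is one of the balls in the union $\cup_{w=1}^{j-1}\cB_{j,w}$, this gives $\Opt_l \setminus \bigl(\cup_{w=1}^{j-1}\cB_{j,w}\bigr) \subseteq \Opt_l \setminus \cB_{j,l}$ and hence the claim. Fix such an $l$ and let $p_{v_l}$ be the point associated with the level-$l$ node on the current root-to-$v$ path, so that $\cB_{j,l}$ is the ball of radius $r_j$ about $p_{v_l}$. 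A point $x \in \Opt_l$ lies outside $\cB_{j,l}$ only if $\norm{x - p_{v_l}} > r_j$, and by the triangle inequality this forces $\norm{x - \mu_l} > r_j - \norm{p_{v_l} - \mu_l}$; so it suffices to count the points of $\Opt_l$ that are this far from their own mean $\mu_l$.

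Next I would lower-bound the threshold $t \defeq r_j - \norm{p_{v_l}-\mu_l}$. The induction hypothesis of \prettyref{lem:closepts} gives $\norm{p_{v_l}-\mu_l} \le \e\gamma\radm{X} + (1+\e)\,l\,\sqrt{\e}\,\gamma\,\radm{X}$, and \prettyref{claim:radius} gives $r_j \ge j\,\sqrt{\e}\,\gamma\,\radm{X}$. Since $l \le j-1$, the level-index ``budget'' $j\sqrt{\e}$ in the radius exceeds the accumulated level-$l$ error $(1+\e)\,l\,\sqrt{\e}$ by the full extra term $\sqrt{\e}$; as $\e$ is taken small enough (as a function of $k$, $\alpha$, $\gamma$) that the lower-order $\e$ and $\e^{3/2}$ corrections are dominated, this yields $t \ge c\,\sqrt{\e}\,\gamma\,\radm{X}$ for a fixed constant $c$ (its precise value only affects the constant in the final bound).

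With this, Markov's inequality finishes the counting: $\Opt_l \setminus \cB_{j,l} \subseteq \{ x \in \Opt_l : \norm{x-\mu_l}^2 > c^2 \e \gamma^2 \radm{X}^2 \}$, and since $\sum_{x \in \Opt_l}\norm{x-\mu_l}^2 = \abs{\Opt_l}\,\sigma_l^2 = \cost(\Opt_l)$, we get $\abs{\Opt_l \setminus \cB_{j,l}} \le \cost(\Opt_l)/(c^2 \e \gamma^2 \radm{X}^2)$. It remains to show $\cost(\Opt_l) = O(\lambda_j n\,\gamma^2\,\radm{X}^2)$. Here I would use \prettyref{prop:amps} and \prettyref{prop:geom}(g), which confine $\Opt_l$ to a ball of radius $O(\gamma\,\radm{X})$ and so give $\cost(\Opt_l) = O(\abs{\Opt_l}\,\gamma^2\,\radm{X}^2)$ outright; upgrading the factor $\abs{\Opt_l}$ to $\abs{\Opt_j} = \lambda_j n$ is where the optimality of $\mathcal{OPT}$ must be invoked, exactly as in Claim~3 of \cite{Ding15} (the peeling radius is calibrated to $\Opt_j$, the smallest of $\Opt_1,\dots,\Opt_j$).

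I expect this final step to be the main obstacle: the direct Markov estimate only delivers $O(\abs{\Opt_l}/\e)$, and replacing $\abs{\Opt_l}$ by the smaller $\abs{\Opt_j}$ is the one point where a purely geometric argument does not suffice and one must exploit that $\mathcal{OPT}$ is an optimal $k$-means clustering — while also checking that the errors compounded through the induction in \prettyref{lem:closepts} stay inside the $j\sqrt{\e}$-budget of the radius candidate set. The remaining pieces (the reduction to a single ball, the triangle inequality, and the Markov step) are routine.
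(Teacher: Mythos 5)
Your proposal reproduces the paper's argument up to its last line: the reduction to the single ball $\cB_{j,l}$, the triangle-inequality threshold $r_j-\norm{p_{v_l}-\mu_l}$, the lower bound on that threshold via \prettyref{claim:radius} and the inductive guarantee $\norm{p_{v_l}-\mu_l}\le \e\gamma\radm{X}+(1+\e)l\sqrt{\e}\gamma\radm{X}$ (the paper gets $(1-(j-1)\e-\sqrt{\e})\sqrt{\e}\gamma\radm{X}$), and the Markov step with the geometric variance bound $\sigma_l^2\le\gamma^2\radm{X}^2$ coming from \prettyref{prop:amps} and \prettyref{eq:radmin}. All of that matches. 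The genuine gap is exactly the step you flag and then leave open: converting the Markov estimate $\abs{\Opt_l\setminus\cB_{j,l}}\lesssim \abs{\Opt_l}/\e$ into the claimed $4\lambda_j n/\e$, i.e.\ replacing $\abs{\Opt_l}=\lambda_l n$ by $\abs{\Opt_j}=\lambda_j n$, where $l\le j-1$ and the clusters are ordered $\abs{\Opt_1}\ge\cdots\ge\abs{\Opt_k}$, so $\lambda_l\ge\lambda_j$ and the substitution is in the unfavorable direction. Your proposed repair --- ``invoke optimality of $\mathcal{OPT}$, exactly as in Claim~3 of \cite{Ding15}'' --- does not go through as stated for this paper's construction: in Ding--Xu the $\lambda_j$ factor comes from the radius candidates being scaled like $j\sqrt{\e/\lambda_j}$ times a cost-based scale (so that $r_j^2$ carries a $1/\lambda_j$ and the numerator is the \emph{total} cost), not from optimality of the clustering; here the radius set $\cR$ built in \prettyref{alg:tree} is calibrated to $\radm{X}$ with no $1/\sqrt{\lambda_j}$ factor, so that mechanism is simply absent, and since this section makes no balance assumption, $\lambda_j$ can be much smaller than $\lambda_l$. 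A blind proof must actually supply this step; gesturing at it leaves the claim unproved, and it is the only non-routine content of the claim (it is what makes the denominator in Case~(a) of \prettyref{lem:closepts} scale with $\lambda_j$ rather than with $1$).

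For what it is worth, your suspicion about where the difficulty sits is well placed: the paper's own proof closes this step with the bare chain $\frac{\abs{\Opt_l}}{(1-(j-1)\e-\sqrt{\e})^2\e}\le\frac{\lambda_j n}{(1-j\sqrt{\e})^2\e}$, whose numerator substitution $\abs{\Opt_l}\le\lambda_j n$ conflicts with the stated ordering $\abs{\Opt_1}\ge\cdots\ge\abs{\Opt_k}$ for $l\le j-1$ (only the denominator substitution is justified, since $(j-1)\e+\sqrt{\e}\le j\sqrt{\e}$). So the step you could not fill is precisely the step the paper glosses over when adapting Ding--Xu without the $\sqrt{\e/\lambda_j}$ radius scaling; as written, neither your sketch nor the paper's one-line substitution establishes the $\lambda_j$ dependence, and fixing it would require either re-introducing a $\lambda_j$-dependent radius (with the total-cost Markov bound as in \cite{Ding15}) or weakening the claim and re-examining how it is used in Case~(a). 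As a blind proposal, yours is therefore incomplete at the decisive point, even though every routine step is right and coincides with the paper.
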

  \prettyref{claim:peeling} shows that $\abs{\Opt_l \setminus \paren{\cup_{w=1}^{j-1} \cB_{j,w}}}$ is bounded for $1 \leq l \leq j-1$, which helps us to find the approximate mean of $\Opt_j$.
  $\Opt_j$ is divided into $j$ subsets, $(\Opt_j \cap \cB_{j,1}), \ldots, (\Opt_j \cap \cB_{j,j-1})$, and $\Opt_j \setminus \paren{\cup_{w=1}^{j-1} \cB_{j,w}}$.
  Let $P_l$ denote $\Opt_j \cap \cB_{j,l}$ for $1 \leq l \leq j-1$, and $P_j$ denote $\Opt_j \setminus \paren{\cup_{w=1}^{j-1} \cB_{j,w}}$, and $\tau_l$ denote mean point of $P_l$ for $1 \leq l \leq j$.
  We can assume that $\set{P_l \middle|1 \leq l \leq j }$ are pairwise disjoint. If not, then arbitrarily assign points to either of the peeling spheres which intersect in $\Opt_j$.

  We now have two cases: (a) $\abs{P_j} \geq \e^3\frac{\lambda_j}{j}n$, and (b) $\abs{P_j} < \e^3\frac{\lambda_j}{j}n$. We show that \prettyref{alg:tree} can obtain an approximate mean for $\Opt_j$ by using the \prettyref{lem:simplex}, for both the cases.

  \textbf{Case (a):} By \prettyref{claim:peeling}, and using that fact that $\lambda_l \leq \lambda_j$ for $l > j$, we know that
  \begin{align*}
    \frac{\abs{P_j}}{\sum_{1 \leq l \leq k}\abs{\Opt_l \setminus \paren{\cup_{w=1}^{j-1} \cB_{j,w}}}} & \geq \frac{\frac{\e^3}{j}\lambda_j}{\frac{4(j-1)\lambda_j}{\e}+ \lambda_j+(k-j)\lambda_j} \mcom \\
    & \geq \frac{\e^4}{8kj} \mcom \\
    & \geq \frac{\e^4}{8k^2} \mper
  \end{align*}
  This means that $P_j$ is large enough, compared to the points outside the peeling spheres. Hence we can use random sampling technique to obtain an approximate mean point $\pi$ for $P_j$ in the following way.
  First we set $t= \frac{k}{\e^5}$, $\eta = \frac{\e}{k}$, and take sample of size $\frac{8k^3}{\e^9} \ln \frac{k^2}{\e^6}$. By \prettyref{lem:ding14} we know that with probability $1-\frac{\e}{k}$, the sample contains $\frac{k}{\e^5}$ points from $P_j$.
  Let $\pi$ be the mean of the $\frac{k}{\e^5}$ points sampled from $P_j$, and let $a^2$ be the variance of $P_j$. By \prettyref{lem:inaba1} we know that with probability $1-\frac{\e}{k}$, $\norm{\pi - \tau_j}^2 \leq \e^4a^2$.
  Also, since $\frac{\abs{P_j}}{\abs{\Opt_j}} = \frac{\abs{P_j}}{\lambda_j n} \geq \frac{\e^3}{j}$, we have $a^2 \leq \frac{\abs{\Opt_j}}{\abs{P_j}}\sigma_j^2 \leq \frac{1}{\e^3}\sigma_j^2$.
  This upper bound on $a^2$ follows because $P_j \subset C_j$. We are summing the distance square over all the elements in $C_j$, and then dividing by $\abs{P_j}$.
  Thus, $\norm{\pi-\tau_j}^2 \leq \e j \sigma_j^2 \leq \e j \gamma^2 \radm{X}^2$.

  After obtaining the point $\pi$, we can use the \prettyref{lem:simplex} to find a point $p_{v_j}$ satisfying the condition of $\norm{p_{v_j}-\mu_j} \leq \e \gamma \radm{X} + (1+\e)j \sqrt{\e} \gamma \radm{X}$. This is true because of the following.
  First we construct the simplex $\simplex_{(a)}$ with vertices $\set{p_{v_1},\ldots,p_{v_{j-1}},\pi}$, which is $o_1',...o_{j-1}',o_j'$ in the lemma.
  Note that $\Opt_j$ which is $Q$ in the lemma is partitioned by the peeling spheres into $j$ disjoint subsets $P_1,\ldots,P_j$, which is $Q_1,\ldots,Q_j$ in the lemma. Each $P_l$ $(1 \leq l \leq j-1)$ locates inside $\cB_{j,l}$, which implies that $\tau_l$ (mean of $P_l$), which is $o_l$ in the lemma, is also inside ${\cB_{j,l}}$. Further by \prettyref{claim:radius} we have for $1 \leq l \leq j-1$:
  \begin{equation} \label{eq:mean1}
   j\sqrt{\e} \gamma \radm{X} \leq r_j \leq \paren{1+\frac{\e}{2}}j\sqrt{\e} \gamma \radm{X} \mper
  \end{equation}
  We also have from above that (recall that $\tau_l$ is the mean of $P_l$)
  \begin{equation} \label{eq:mean2}
    \norm{\pi-\tau_j} \leq \sqrt{\e j} \gamma \radm{X} \mper
  \end{equation}
  By \prettyref{eq:mean1} and \prettyref{eq:mean2} we know that if we set the value of $L$ and $\e$ (in \prettyref{lem:simplex}) to be $L = \paren{1+\frac{\e}{2}}j\sqrt{\e} \gamma \radm{X}$ and $\e$ to be $\e_0 = \e^2/4$,
  by \prettyref{lem:simplex} we can construct a grid inside the simplex $\simplex_{(a)}$ with size $\bigo{(8j/\e_0)^j}$ to ensure existence of one grid point $\tau$ satisfying that inequality
  \[ \norm{\tau -\mu_j} \leq \sqrt{\e_0}\sigma_j + (1+\e_0)L \leq \e \gamma \radm{X} + (1+\e)j \sqrt{\e} \gamma \radm{X} \mper \]
  Hence we can use $\tau$ as $p_{v_j}$, and the induction step holds.

  \textbf{Case (b):} We can use the \prettyref{lem:simplex} to find an approximate mean point. This is true because of the following. We construct a simplex $\simplex_{(b)}$ with vertices $\set{p_{v_1},\ldots,p_{v_{j-1}}}$. Since $\abs{P_j}$ is small, the mean points of $\Opt_j \setminus P_j$ and $\Opt_j$ are very close to each other (\prettyref{lem:ding15}).
  Thus we can ignore $P_j$ and consider only $\Opt_j \setminus P_j$. Here the value of $\rho$ in \prettyref{lem:ding15} is $(1 - \e^3/j)$. Thus, the $(j-1)$ dimensional simplex will approximate the mean of $P_j$ well (by \prettyref{lem:simplex}). Here the value of $L$ and $\e$ is same as in the case (a). Therefore, the induction step holds for this case as well.

  Since \prettyref{alg:tree} executes every step in the above discussion, the induction step, as well as the lemma, is true.

\end{proof}

\begin{proof}[Proof of \prettyref{prop:sampling-mean}]
  From the \prettyref{lem:closepts} we know that with constant probability our algorithm can find a point $\tmu_j$, $1 \leq j \leq k$ such that
  \[\norm{\tmu_j-\mu_j} \leq \e \gamma \radm{X} + (1+\e) j \sqrt{\e} \gamma \radm{X} \mper \]
  Therefore we now calculate the value of $\e$, the success probability and the running time.
  \paragraph{$\e$ Value:} From \prettyref{eq:epsgamma} (which implies that $\norm{\tmu_j - \mu_j} \leq \e \gamma \radm{X} + (1+\e) k \sqrt{\e} \gamma \radm{X}$ ) and \prettyref{eq:radmin} (which gives a bound on $\radm{X}$).
  From proof of \prettyref{claim:peeling} we know that $\e \leq \frac{1}{4k^2}$.
  We want to show that $\norm{\tmu_j - \mu_j} \leq 2\delta\rij $. Therefore, we get that for $i , j \in [k]$
  \begin{align}
    (\e+(1+\e)k\sqrt{\e})\gamma \rij & \leq 2\delta \rij \mcom \nonumber \\
    \sqrt{\e}(\sqrt{\e} + (1+\e)k) & \leq \frac{2\delta}{\gamma} \mcom \nonumber\\
    \e & \leq \paren{\frac{2\delta k }{\gamma (1 + k)}}^2 & \paren{\text{Since }\e \leq \frac{1}{4k^2}}   \label{eq:eps-val} \mper
  \end{align}
  Therefore setting $\e \leq \paren{\frac{2\delta k }{\gamma (1 + k)}}^2$ we get statement of the proposition.
\paragraph{Success Probability:} From the above analysis, we know that only in the \textbf{case (a)} in the analysis of \prettyref{lem:closepts} needs sampling. We took a sample of size $s = \frac{8k^3}{\e^9} \ln\frac{k^2}{\e^6}$. With probability $1-\frac{\e}{k}$, it contains $\frac{k}{\e^5}$ points from $P_j$.
Meanwhile, with probability, $1-\frac{\e}{k}$, $\norm{\pi-\tau_j}^2 \leq \e^4a^2$. Hence the success probability in the $j^{th}$ iteration is $\paren{1-\frac{\e}{k}}^2$. Therefore the success probability in $k$ iterations is $\paren{1-\frac{\e}{k}}^{2k} \geq 1-2\e$.

\paragraph{Runtime Analysis:} Each node in the returned tree by \prettyref{alg:tree} has $\abs{\cR} 2^{s+j} \paren{\frac{32j}{\e^2}}^j$ children, where $\abs{\cR} = \bigo{1/\e}$, and $s = \frac{8k^3}{\e^9} \ln \frac{k^2}{\e^6}$. Since the tree has height $k$, the number of candidate points for the means are  $\bigo{2^{\poly\paren{\frac{k}{\e}}}}$.
Since each node takes $\bigo{\abs{\cR} 2^{s+j} \paren{\frac{32j}{\e^2}}^j nd}$
time, the time complexity of the \prettyref{alg:tree} is $\bigo{2^{\poly\paren{\frac{k}{\e}}}nd}$.
The \prettyref{alg:tree} is called by the \prettyref{alg:centers} $\bigo{\log_{(1+\e)}\paren{\frac{\alpha+1}{\alpha-1}}}$ times.  Therefore, the total running time of the \prettyref{alg:centers} is $ \bigo{2^{\poly\paren{\frac{k}{\e}}}nd} $.
\end{proof}


\subsubsection{Proof of Claims} \label{sec:claim-pf}

\begin{proof}[Proof of \prettyref{claim:radius}]
  We know that
  \[ 2^{-1/2} \sqrt{\e} \gamma \radm{X} \leq \sqrt{\e} \gamma \radm{X} \leq  2^{1/2} \sqrt{\e} \gamma \radm{X} \mper \]
  Together with the fact that $\zeta \in \left [\radm{X}, (1+\e)\radm{X} \right]$, we get that
  \[ \frac{\sqrt{2}}{2} \sqrt{\e} \frac{\zeta \gamma}{(1+\e)} \leq \sqrt{\e} \gamma \radm{X} \leq \sqrt{2}\sqrt{\e} ~ \zeta \gamma \mper \]
  Let $\Hat r_j = \sqrt{2}\sqrt{\e}~\zeta \gamma$, we get that
  \[ \sqrt{\e} \gamma \radm{X} \leq \Hat r_j \leq 2(1+\e) \sqrt{\e} \gamma \radm{X} \mper\]
  Let $z = \frac{j ~ \Hat r_j}{j\sqrt{\e} \gamma \radm{X} }$. Then we have $1 \leq z \leq 2(1+\e)$. We build a grid in the interval $\Brac{\frac{z}{2(1+\e)},z}$ with grid length $\frac{\e}{4(1+\e)}z$, and obtain a number set $\cN = \Set{\frac{1+l \frac{\e}{2}}{2(1+\e)}z \middle| 0 \leq l \leq 4+\frac{2}{\e}}$.
  We prove that there must exist one number in $\cN$ and is between $1$ and $1+\e/2$. First, we know that $\frac{z}{2(1+\e)} \leq 1 \leq z$. If $z \leq 1 + \e/2$, we find the desired number in $\cN$.
  Otherwise, the whole interval $\Brac{1, 1+\e/2}$ is inside $\Brac{\frac{z}{2(1+\e)},z}$. Since each grid has length $\frac{\e}{4(1+\e)}z \leq \frac{\e}{4(1+\e)}2(1+\e) = \e/2$, there must exist one grid point located inside $\Brac{1,1+\e/2}$. Thus the desired number exists in $\cN$.

  $\cR = \Set{\frac{1+l \frac{\e}{2}}{2(1+\e)}~j~\Hat r_j \middle| 0 \leq l \leq 4+\frac{2}{\e}}$, and from the above analysis we know that there exists one value $r_j \in \cR$ such that
  \[ j\sqrt{\e} \gamma \radm{X} \leq r_j \leq \paren{1+\frac{\e}{2}}j\sqrt{\e} \gamma \radm{X} \mper \]
  Note that
\end{proof}

\begin{proof}[Proof of \prettyref{claim:peeling}]
  For each $1 \leq l \leq j-1$, we have that $\abs{\Opt_l \setminus \paren{\cup_{w=1}^{j-1} \cB_{j,w}}} \leq \abs{\Opt_l \setminus \cB_{j,l}}$. By Markov's inequality we have
  \[ \abs{\Opt_l \setminus \cB_{j,l}} \leq \frac{\sigma_l^2}{\paren{r_j - \norm{p_{v_l} - \mu_l}}^2} \abs{\Opt_l} \mper \]
  The above inequality is true because suppose we take a random variable $Y$ to be $\norm{x - \mu}$. We know that $\E{Y} = \sigma_l^2$. Therefore using Markov's inequality for some $r_0>0$ we get that
  \[ \Pr{Y \geq r_0} \leq \frac{\sigma_l^2}{r_0}  \mper \]
  Multiplying the above equation by $\abs{C_l}$, we get that the expected number of points outside the radius $r_0$ from center is less than  $\sigma_l^2 \mod{C_l} / r_0$.
  Note that $\sigma_{l}^2 \leq \gamma^2 \radm{X}^2$.
  Together with $r_j \geq j\sqrt{\e} \gamma \radm{X}$ and $\norm{p_{v_l} - \mu_l} \leq \e \gamma \radm{X} + (1+\e) l \sqrt{\e} \gamma \radm{X}$, we get
  \begin{align*}
    r_j - \norm{p_{v_l} - m_l} & \geq j\sqrt{\e} \gamma \radm{X} - \paren{\e \gamma \radm{X} + (j-1) (1+\e) \sqrt{\e} \gamma \radm{X}} \mcom \\
    & = (1-(j-1)\e - \sqrt{\e}) \sqrt{\e} \gamma \radm{X} \mper
  \end{align*}
  Thus we have
  \begin{align*}
    \abs{\Opt_l \setminus \cB_{j,l}} & \leq \frac{\sigma_l^2}{(1-(j-1)\e - \sqrt{\e})^2 \e \gamma^2 \radm{X}^2 } \abs{\Opt_l} \mcom\\
    & \leq \frac{\abs{\Opt_l}}{(1-(j-1)\e - \sqrt{\e})^2 \e} \mcom & \paren{\sigma_{l}^2 \leq \gamma^2 \radm{X}^2} \mcom\\
    & \leq \frac{\lambda_j ~ n}{(1-j\sqrt{\e})^2 \e} \mper
  \end{align*}
  Note that we can assume $\e$ is small enough such that $\e \leq \frac{1}{4k^2}$, which implies $\frac{\lambda_j n}{(1-j\sqrt{\e})^2 \e} \leq \frac{4 \lambda_j n}{\e}$, since $k \geq j$. Otherwise, we can just replace $\e$ by $\frac{\e}{4k^2}$ as part of input at the beginning of the algorithm. Thus we have that
  \begin{equation*}
    \abs{\Opt_l \setminus \cB_{j,l}} \leq \frac{4 \lambda_j n}{\e} \mper
  \end{equation*}
\end{proof}

\section{Lower Bound}
\subsection{Hardness Result}
Our hardness result immediately follows from Awasthi et al. \cite{AwasthiCKS15}. We will show a reduction from Vertex-Cover problem to the $\alpha$-center proximal $k$-means clustering with balanced clusters. The Vertex-Cover problem can be stated as follows: Given an undirected graph $G=(V,E)$, choose a subset $S$ of vertices with minimum $\abs{S}$, such that $S$ is incident on every edge of the graph. Awasthi et al. \cite{AwasthiCKS15} showed the following lemma:
\begin{lemma}[Corollary 5.3, \cite{AwasthiCKS15}]\label{lem:vertex-cover-hard}
  Given any unweighted triangle-free graph $G$ with bounded degrees, it is NP-hard to approximate Vertex-Cover within any factor smaller than $1.36$.
\end{lemma}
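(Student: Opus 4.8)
The statement is quoted as Corollary~5.3 of \cite{AwasthiCKS15}, so in the paper I would simply cite it; but let me outline the argument one would give. The starting point is the Dinur--Safra theorem: Vertex-Cover is NP-hard to approximate within any constant factor below $10\sqrt 5 - 21 \approx 1.3607$ on general (unweighted) graphs. Since $10\sqrt 5 - 21 > 1.36$, it suffices to show that the two extra structural features demanded by \prettyref{lem:vertex-cover-hard} --- bounded maximum degree and triangle-freeness --- can be imposed by reductions that cost only an arbitrarily small additive slack in the ratio.

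I would first rephrase Dinur--Safra as a gap problem: for every $\e>0$ there are $0<a<b$ with $b/a = 10\sqrt5-21-\e$ such that, given a graph $G$, it is NP-hard to decide whether $\tau(G)\le a\abs{V(G)}$ or $\tau(G)\ge b\abs{V(G)}$; moreover the instances produced may be taken triangle-free (they are, in essence, label-extended graphs of a $2$-query projection test, which are bipartite between the two provers' sides --- this is a point one would verify against the actual construction). Then I would apply the standard gap-preserving degree reduction: replace each vertex $v$ of $G$, of degree $d_v$, by a copy $H_v$ of a $\Delta$-regular \emph{bipartite} expander on $\Theta(d_v)$ vertices (with $\Delta=\Delta(\e)$ a constant), designate $d_v$ of its vertices as ``ports'', and attach the edges formerly incident to $v$ to distinct ports. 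Call the resulting graph $G'$; it has maximum degree $\Delta+1$.

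The correctness of this reduction --- the one genuinely delicate point --- is the usual expander-replacement analysis for Vertex-Cover (following Papadimitriou--Yannakakis and later refinements): a vertex cover of $G$ lifts to one of $G'$ of a prescribed per-gadget size, and, conversely, the spectral expansion of each $H_v$ forces any vertex cover of $G'$ to be $O(\e)$-close --- per gadget, measured in fractional size --- to either the ``all ports in'' or the ``no port in'' configuration, any intermediate behaviour costing $\Omega(\Delta)$ extra vertices per mis-covered port; hence a vertex cover of $G'$ projects back to a vertex cover of $G$ losing only an additive $\e\abs{V(G')}$. Because $\tau(G)=\Theta(\abs{V(G)})$ on these instances, this additive slack becomes a multiplicative $(1\pm O(\e))$ factor, so $G'$ inherits the gap with ratio $10\sqrt5-21-O(\e)$.

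Finally I would check that $G'$ is triangle-free by a short case analysis on how many edges of a putative triangle are inter-gadget: zero is impossible because each $H_v$ is bipartite; exactly one is impossible because the triangle's other two edges would force a common vertex to lie in two distinct gadgets; two is impossible because it would require two inter-gadget edges between the same pair of gadgets, that is, a multi-edge of $G$; and three would be a triangle of $G$, which we ruled out. Thus $G'$ is triangle-free, of bounded degree, and $(10\sqrt5-21-O(\e))$-inapproximable for Vertex-Cover; letting $\e\to0$ and using $10\sqrt5-21>1.36$ gives the lemma. The main obstacle, as flagged, is making the degree reduction preserve the \emph{multiplicative} ratio of a \emph{minimization} objective whose optimum is linear in $\abs{V}$ --- this forces one to tune the gadget's cover number and ``switching cost'' against the expansion parameter --- together with the minor point of ensuring the source instances are triangle-free.
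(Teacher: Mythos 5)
The paper never proves \prettyref{lem:vertex-cover-hard}: it imports it verbatim as Corollary~5.3 of \cite{AwasthiCKS15}, so the citation you open with is the only step the paper itself takes, and any sketch of an actual proof has to stand on its own. Yours does not, and it breaks precisely at the two points you flag as ``to be verified.'' The main gap is the degree reduction. Dinur--Safra instances have unbounded (indeed polynomially large) degrees, so after expander replacement $\abs{V(G')}=\Theta\paren{\sum_v d_v}$ is far larger than $\tau(G)\leq\abs{V(G)}$. Each gadget $H_v$ is a $\Delta$-regular bipartite graph, so by K\"onig's theorem any vertex cover of $G'$ must already contain about half of every gadget, i.e.\ $\Omega(\abs{V(G')})$ vertices in \emph{both} the yes and the no case, while the two cases differ only by $O(\tau(G))=o(\abs{V(G')})$. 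The ratio of the reduced instance therefore collapses to $1+o(1)$, not $10\sqrt{5}-21-O(\e)$. Your accounting (``additive $\e\abs{V(G')}$ slack against $\tau(G)=\Theta(\abs{V(G)})$'') conflates $\abs{V(G)}$ with $\abs{V(G')}$ and ignores the $\Theta(\abs{V(G')})$ cover cost forced inside the gadgets; this is exactly why constant-factor hardness for vertex cover on bounded-degree graphs is a separate line of work (Clementi--Trevisan, Trevisan's PCP-specific sparsification, Chleb\'{\i}k--Chleb\'{\i}kov\'a) rather than a black-box corollary of the general-graph bound via gadget replacement.

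The second gap is triangle-freeness of the source. Your case analysis for $G'$ needs $G$ itself to be triangle-free, and the parenthetical claim that Dinur--Safra instances are ``bipartite label-extended graphs of a two-query projection test'' is not correct: the construction is long-code based, and even in an FGLSS-type graph of a two-query test the conflict edges between different answers to the \emph{same} query form cliques, hence triangles. Since triangle-freeness (together with bounded degree) is the entire content of the corollary beyond plain Dinur--Safra, assuming it of the source begs the question. So as written the sketch establishes neither structural property; if you want more than the citation, you need the derivation actually given in \cite{AwasthiCKS15} (or an argument specifically engineered to produce bounded degree and triangle-freeness without diluting the gap), not generic expander replacement applied to Dinur--Safra instances.
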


\begin{theorem}\label{thm:reduction}
  There exists constants $\alpha > 1$, $\omega >0$, $\e >0$, such that there is an efficient reduction from instances of Vertex-Cover on triangle-free graphs of bounded degree to those of $\alpha$-center proximal instances of Euclidean $k$-means clustering, where the size of each cluster is at least $\omega n / k$, that satisfies the following properties:
  \begin{enumerate}
    \item[(i)] if the Vertex-Cover instance has value $k$, the optimal $\alpha$-center proximal $k$-means clustering where the size of each cluster is at least $\omega n / k$, has cost at most $m - k$.
    \item[(ii)] if the Vertex-Cover instance has value at least $k(1 + \e)$, then the optimal $\alpha$-center proximal $k$-means clustering, where the size of each cluster is at least $\omega n / k$, has a cost at least $m - (1 - \Omega(\e))k$.
  \end{enumerate}
\end{theorem}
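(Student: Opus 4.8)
The plan is to run the reduction of Awasthi, Charikar, Krishnaswamy and Sinop~\cite{AwasthiCKS15} essentially verbatim and then check the two extra properties that are not part of their statement: that the completeness instance's clustering is $\alpha$-center-proximal, and that it is $\wmin$-balanced. Concretely, given a triangle-free graph $G=(V,E)$ of bounded maximum degree $\Delta$ (a constant to be fixed at the end) furnished by \prettyref{lem:vertex-cover-hard}, with $m \defeq \abs{E}$, I would take an orthonormal basis $\set{e_v}_{v\in V}$ of $\R^{\abs{V}}$ and create one data point $x_{uv} \defeq e_u + e_v$ for each edge $\set{u,v}\in E$; thus $n \defeq m$, and the number of clusters is $k$, the target Vertex-Cover value. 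The two cost bounds (i) and (ii) are exactly the completeness and soundness computations of \cite{AwasthiCKS15}: a cover of size $k$ yields a $k$-clustering of cost $m-k$, while if every cover has size at least $k(1+\e)$ then \emph{every} $k$-clustering (with centers at the cluster means, which is without loss of generality) has cost at least $m - (1-\Omega(\e))k$ --- the latter is where triangle-freeness is used, to charge a cluster's cost to the number of basis vectors in its support. Since $\alpha$-center-proximal and $\wmin$-balanced clusterings form a sub-family of all $k$-clusterings, (ii) is inherited with no extra work, so the only thing left to prove is that the completeness clustering lies in that sub-family.

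For completeness, fix a minimum vertex cover $S$ with $\abs{S}=k$ and assign every edge to one of its endpoints in $S$. A short argument shows that in a \emph{minimum} cover every $u\in S$ has an incident edge whose other endpoint lies outside $S$ (else $S\setminus\set{u}$ would still be a cover), so every $u\in S$ is assigned at least one edge; let $C_u$ collect the corresponding data points, with $d_u \defeq \abs{C_u}\in\brac{1,\Delta}$ and mean $\mu_u = e_u + \tfrac1{d_u}\sum_{v\in N_u}e_v$ (the sum over the neighbours $v$ assigned to $u$). An orthonormality computation gives $\norm{x_{uv}-\mu_u}^2 = 1 - 1/d_u$ for all $x_{uv}\in C_u$, so the cost is $\sum_{u\in S}(d_u-1) = m-k$, which is (i). Balance comes for free: since $S$ is a cover, $m \le \sum_{u\in S}d_u \le k\Delta$, hence $\abs{C_u}\ge 1 \ge m/(k\Delta) = (1/\Delta)(n/k)$, so we may take $\wmin \defeq 1/\Delta$.

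It then remains to verify $\alpha$-center proximity of $\set{C_u}$ with respect to the means $\mu_u$ (\prettyref{def:center-prox}). Fix $x_{uv}\in C_u$; if $d_u=1$ then $x_{uv}=\mu_u$ while $x_{uv}\neq\mu_w$ for every $w\neq u$, so the condition is vacuous, and one may assume $d_u\ge 2$, whence $\norm{x_{uv}-\mu_u}^2 = 1-1/d_u \in \brac{1/2,1}$. For a competing center $\mu_w$, $w\neq u$, I would argue by cases on how $w$ relates to $u$ and $v$ (triangle-freeness forbids $w$ from being adjacent to both): when $w=v$ and $v\in S$ one computes $\norm{x_{uv}-\mu_w}^2 = 1 + 1/d_v$, so $\norm{x_{uv}-\mu_w}^2/\norm{x_{uv}-\mu_u}^2 = 1 + \tfrac{d_u+d_v}{d_v(d_u-1)} \ge 1 + \tfrac2{\Delta-1}$; in every remaining case the support of $\mu_w$ is either disjoint from $\set{u,v}$ or else picks up a full $-e_w$ coordinate, and one gets $\norm{x_{uv}-\mu_w}^2 \ge 2 \ge 2\,\norm{x_{uv}-\mu_u}^2$. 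Taking $\alpha \defeq \sqrt{1+1/\Delta}>1$ --- which satisfies $1+1/\Delta < \min\set{1+2/(\Delta-1),\,2}$ --- gives $\norm{x_{uv}-\mu_w} > \alpha\,\norm{x_{uv}-\mu_u}$ in all cases. Finally, choosing $\Delta \ge \max\set{\text{the degree bound of \prettyref{lem:vertex-cover-hard}},\ \ceil{1/(\alpha'^2-1)}}$ forces $1<\alpha\le\alpha'$, and $\e>0$ is the constant from the $1.36$-inapproximability in \prettyref{lem:vertex-cover-hard}; this establishes \prettyref{thm:reduction}, and combined with \prettyref{lem:vertex-cover-hard} it yields \prettyref{thm:hard}, since the cost ratio $\tfrac{m-(1-\Omega(\e))k}{m-k}$ is $1+\Omega(\e)$ once $m/k\le\Delta$.

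\textbf{Main obstacle.} The crux is the $\alpha$-center-proximity check in the completeness case: one must lower-bound $\norm{x_{uv}-\mu_w}^2/\norm{x_{uv}-\mu_u}^2$ uniformly over all competing centers $w$, and the single delicate configuration is $w=v$ with $v$ also in the cover and the edge $\set{u,v}$ assigned to $u$, where the ratio degrades to $1$ as the degrees grow; it is exactly the bounded-degree hypothesis that keeps this ratio a constant above $1$, which is why $\Delta$ (and hence $\alpha$) must be pinned down before running the reduction. A secondary point that needs care is the combinatorial claim that a minimum cover admits an edge-assignment with no empty cluster, which is what makes balance hold with $\wmin=1/\Delta$; everything else is the bookkeeping already carried out in \cite{AwasthiCKS15}.
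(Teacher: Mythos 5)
Your proposal is correct and follows essentially the same route as the paper: the AwasthiCKS15 edge-vector construction, completeness by clustering edges according to a size-$k$ cover (cost $m-k$) with the center-proximity ratio controlled by the degree bound $\Delta$ (paper's $\alpha=\sqrt{(\Delta+1)/(\Delta-1)}$ versus your $\sqrt{1+1/\Delta}$) and balance $\omega=1/\Delta$, and soundness inherited from their Theorem 4.7 because $\alpha$-center-proximal balanced clusterings are a subfamily of all $k$-clusterings. Your extra care about empty clusters (the private-edge argument for a minimum cover) and the degenerate $d_u=1$ case only tightens details the paper glosses over.
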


\begin{proof}
  The construction of the $k$-means instance is same as that of \cite{AwasthiCKS15}. Let $G=(V,E)$ denote the graph in the Vertex Cover instance $\cI$, with parameter $k$ denoting the number of vertices we can select. We assume that the graph $G$ is triangle free and with a maximum degree $\Delta = \Omega(1)$.
  Let $n$ be the number of vertices in the graph and $m$ be the number of edges.
  We construct the $k$-means instance $\cI_{km}$ as follows:
  for each vertex $i \in [n]$, we have a unit vector $x_i = (0,\ldots,0,1,0,\ldots,0)$ which has $1$ in the $i^{th}$ coordinate and $0$ elsewhere. For each edge $e \equiv (i,j)$, we have a vector $x_e = x_i+x_j$. Our data points are $\set{x_e : e \in E}$.

  \paragraph{Completeness:} suppose $\cI$ is such that there exists a vertex cover $S^* = \set{v_1,\ldots,v_k}$ of $k$ vertices. We will show that we can recover an $\alpha$-center proximal $k$-means clustering (where the size of each cluster is at least $\omega n / k$), of low cost.

  Let $E_{v_l}$ denote the set of edges covered by $v_l$ for $1 \leq l \leq k$. If an edge is covered by two vertices, we will assume that only one of them covers it (arbitrarily). As a result each $E_{v_l}$ is pairwise disjoint and their union is $E$.

  We now do the clustering as follows. Consider a cluster $\cF_v \defeq \set{x_e : e \in E_v}$, which consists of data-points associated with edges covered by a single vertex $v$. Let $m_{\cF_v}$ denote the number of edges $v$ cover in the vertex cover, and let $\mu_{\cF_v}$ denote the mean of $\cF_v$.
  The mean $\mu_{\cF_v}$ has a $1$ in one of the coordinates (corresponding to $x_v$), $\paren{1/\abs{m_{\cF_v}}}$ in $m_{\cF_v}$ coordinates (corresponding to the edges), and $0$ in the remaining.

  \begin{Claim}\label{claim:completeness}
  There exists an $\alpha$-center proximal $k$-means clustering (where the size of each cluster is at least $\omega n / k$) of $\cI_{km}$ with cost at most $m-k$, where $m$ is the number of edges in the graph $G=(V,E)$ associated with the vertex cover instance $\cI$, and $k$ is the size of the optimal vertex cover.
  \end{Claim}
  \begin{proof}
  The cost of the cluster $\cF_v$ is $\sum_{x \in \cF_v} \norm{x - \mu_{\cF_v}}^2$. We note that for any $x \neq x' \in \cF_v$, $\norm{x - \mu_{\cF_v}}^2 = \norm{x' - \mu_{\cF_v}}^2$. Therefore we get that the cost of the cluster $\cF_v$ is:
  \[ m_{\cF_v} \paren{ (m_{\cF_v} - 1) \paren{\frac{1}{m_{\cF_v}}}^2 + \paren{1-\frac{1}{m_{\cF_v}} }^2 } = m_{\cF_v} -1 \mper\]
  Summing this over the $k$ cluster gives us the cost $m-k$.

  Next, we bound the value of alpha for which this cluster is alpha stable. We note that the points closest to some other cluster center is the edge which was covered by two vertices. Let $x_e$ be covered by $v_{1}$ and $v_{2}$. W.l.o.g. we assume that$x_e \in \cF_{v_{1}}$.
  Let the number of edges in the cluster $\cF_{v_{1}}$ be $m_{v_1}$ and $\cF_{v_{2}}$ be $m_{v_2}$, and let their respective means be  $\mu_{v_1}$ and $\mu_{v_2}$.
  The distance of $x_e$ to mean of $\cF_{v_{1}}$ is
  \[ \norm{x_e-\mu_{v_1} }^2 = (m_{v_1} - 1) \paren{\frac{1}{m_{v_1}}}^2 + \paren{1-\frac{1}{m_{v_1}} }^2  = \frac{m_{v_1} -1}{m_{v_1}} \mcom \]
  and the distance of $x_e$ to mean of $\cF_{v_{2}}$ is
  \[ \norm{x_e-\mu_{v_2} }^2 = (m_{v_2}) \paren{\frac{1}{m_{v_2}}}^2 + 1 = \frac{1+m_{v_2}}{m_{v_2}} \mper \]
  Note that the maximum degree of the graph is $\Delta$. Therefore the value of $\alpha$ is
  \[ \alpha = \sqrt{ \frac{m_{v_1}(m_{v_2} + 1)}{m_{v_2}(m_{v_1} - 1)} } \geq \sqrt{ \frac{\Delta + 1}{\Delta -1} } \mper \]
  In the case where clusters do not share an edge, say $x_e \in \cF_{v_1}$ we get that that (as per previous calculation)
  \[ \norm{x_e-\mu_{v_1}}^2 = \frac{m_{v_1} -1}{m_{v_1}} \mcom \]
  and the distance of $x_e$ to $\mu_{v_2}$ is
  \[ \norm{x_e-\mu_{v_1}}^2 = 3 + (m_{v_2}) \paren{\frac{1}{m_{v_2}}}^2 \]
  Therefore, we get that the value of $\alpha$ in this case is
  \[ \alpha = \sqrt{ \frac{ 3+\frac{1}{m_{v_2}} }{ 1 - \frac{1}{m_{v_1}} } } \geq \sqrt{ \frac{\Delta + 3}{\Delta-1} } > \sqrt{ \frac{\Delta+1}{\Delta-1} } \mper \]
  Next we bound the value of $\omega$. The size of a cluster is bounded by the degree of the graph, ie., $\Delta$, and the minimum size of a vertex cover for a graph with bounded degree $\Delta$ is $\abs{S^*} \geq n / \Delta$.   Therefore we get that the value of $\omega$ is $\frac{\omega n}{k} \geq 1$, $\omega \geq \frac{k}{n} \geq \frac{1}{\Delta}$.
  \end{proof}

  \paragraph{Soundness:} Next we show that if there is an $\alpha$-center proximal $k$-means clustering, where the size of each cluster is at least $\omega n / k$, which has a low $k$-means cost, then there is a very good vertex cover for the corresponding graph. The proof for the soundness follows directly from Theorem 4.7 of \cite{AwasthiCKS15}.

  \begin{lemma}[Theorem 4.7, \cite{AwasthiCKS15}]\label{lem:soundness}
  If the $k$-means instance $\cI_{km}$ has a clustering $\Gamma = \set{\cF_1,\ldots,\cF_k}$ $\sum_{\cF \in \Gamma}Cost(\cF) \leq m-(1-\xi)k$, then there exists a $(1 + \bigo{\delta})k$-vertex cover of $G$ in the instance $\cI$.
  \end{lemma}

  The proof for our case follows from the above lemma because the statement holds for any $k$-means clustering of low cost, and hence it also holds for $\alpha$-center proximal $k$-means clustering, where the size of each cluster is at least $\omega n / k$.

  Combining \prettyref{claim:completeness} and \prettyref{lem:soundness} we get the proof of \prettyref{thm:reduction}.
\end{proof}

\begin{proof}[Proof of \prettyref{thm:hard}]
From \prettyref{thm:reduction} we get that for some constant $\alpha>1$, $\omega >0$, $\e > 0$ if the vertex cover has value $k \geq m/ \Delta$, then the $\alpha$-center proximal $k$-means clustering, where the size of each cluster is at least $\omega n / k$, has cost at most $m \paren{1-\frac{1}{\Delta}}$, and if the vertex cover is at least $k(1+\e)$,
then optimal $\alpha$-center proximal $k$ means cost is at least $m\paren{1-\frac{1-\Omega(\e)}{\Delta}}$.
The vertex cover hardness \prettyref{lem:vertex-cover-hard} says that it is NP-hard to distinguish if the resulting $\alpha$-center proximal $k$-means clustering, where the size of each cluster is at least $\omega n / k$, has cost at most $m\paren{1-\frac{1}{\Delta}}$ or cost more than $m\paren{1-\frac{1-\Omega(\e)}{\Delta}}$.
Since $\Delta$ is a constant, this implies that it is NP-hard to approximate $\alpha$-center proximal $k$-means problem to within some factor $(1+\Omega(\e))$, thereby proving the \prettyref{thm:hard}
\end{proof}

\subsection{On the Size of Possible Clustering}

\begin{proof}[Proof of \prettyref{prop:size-alpha-opt}]
  Our construction is similar to the instance constructed by \cite{BhattacharyaJK18} in their Theorem 2. We first construct the set of points $X$ for which we fix an integer $m$ such that $\alpha = \sqrt{ 1 + \frac{2}{m-1} } \leq \alpha'$. From this we get that $ m = \frac{2}{\alpha^2-1} + 1$.
  The points in $X$ belongs to $\R^d$, where $d = km$. The set $X$ will have $n = d$ points : the standard basis vectors of $\R^d$, denoted by $e_1, \ldots,e_d$.
  Now, we define the set of $\alpha$-center proximal clusterings $\bbC$.
  The set $\bbC$ will consist of clusterings $C = \set{C_1,\ldots,C_k}$, for which each of the clusters has exactly $m$ points. We will now show that such a clustering is $\alpha$-center proximal. Consider any two clusters, say $C_1$ and $C_2$. The mean $\mu_1$ of $C_1$ has the value $1/m$ in its $m$ coordinates and $0$ in other coordinates.
  The  $\mu_2$ of $C_2$ has the value $1/m$ in its $m$ coordinates and $0$ in other coordinates, and for a non-zero coordinate of $\mu_1$, $\mu_2$ has $0$ in the respective coordinate, since the clusters are disjoint by definition. Therefore, consider a point $x_i \in C_1$, we get that
  \[ \norm{x_1 - \mu_1}^2  = \paren{1-\frac{1}{m}}^2 + (m-1)\paren{\frac{1}{m}}^2 = 1-\frac{1}{m} \mper\]
  and
  \[ \norm{x_1 - \mu_2}^2 = 1+m\paren{\frac{1}{m}}^2 = 1+\frac{1}{m} \mper \]
  Therefore the value of $\alpha$ is
  \[ \alpha = \sqrt{\frac{1+\frac{1}{m}}{1-\frac{1}{m}}} = \sqrt{\frac{m+1}{m-1}} \mper \]
  The number of such possible clusterings are
  \[ \abs{\bbC} = \frac{(km)!}{(m!)^k} \approx k^{(km)}\mper \]
  Therefore we get that the total possible number of such clusterings are
  \[ k^{k \paren{ \frac{2}{\alpha^2-1} } + 1} = 2^{ \tilde{\Omega}\paren{ \frac{k}{\alpha^2 - 1} }} \mper \]
\end{proof}

\subsection*{Acknowledgements}
AL is grateful to Microsoft Research for supporting this collaboration. AL was supported in part by SERB Award ECR/2017/003296. We thank Ravishankar Krishnaswamy for helpful pointers.

\bibliographystyle{amsalpha}
\bibliography{ref}

\newcommand{\etalchar}[1]{$^{#1}$}
\providecommand{\bysame}{\leavevmode\hbox to3em{\hrulefill}\thinspace}
\providecommand{\MR}{\relax\ifhmode\unskip\space\fi MR }
\providecommand{\MRhref}[2]{%
  \href{http://www.ams.org/mathscinet-getitem?mr=#1}{#2}
}
\providecommand{\href}[2]{#2}
\begin{thebibliography}{ANFSW17}

\bibitem[ABD09]{AckermanB-D2009}
Margareta Ackerman and Shai Ben-David, \emph{Clusterability: A theoretical
  study}, Proceedings of the Twelth International Conference on Artificial
  Intelligence and Statistics, Proceedings of Machine Learning Research,
  vol.~5, PMLR, 16--18 Apr 2009, pp.~1--8.

\bibitem[ABS10]{AckermannBS10}
Marcel~R. Ackermann, Johannes Bl\"{o}mer, and Christian Sohler,
  \emph{Clustering for metric and nonmetric distance measures}, ACM Trans.
  Algorithms \textbf{6} (2010), no.~4, 59:1--59:26.

\bibitem[ABS12]{AwasthiBS12}
Pranjal Awasthi, Avrim Blum, and Or~Sheffet, \emph{Center-based clustering
  under perturbation stability}, Inf. Process. Lett. \textbf{112} (2012),
  no.~1-2, 49--54.

\bibitem[ACKS15]{AwasthiCKS15}
Pranjal Awasthi, Moses Charikar, Ravishankar Krishnaswamy, and Ali~Kemal Sinop,
  \emph{{The Hardness of Approximation of Euclidean k-Means}}, 31st
  International Symposium on Computational Geometry (SoCG 2015), Leibniz
  International Proceedings in Informatics (LIPIcs), vol.~34, Schloss
  Dagstuhl--Leibniz-Zentrum fuer Informatik, 2015, pp.~754--767.

\bibitem[ADHP09]{AloiseDHP2009}
Daniel Aloise, Amit Deshpande, Pierre Hansen, and Preyas Popat,
  \emph{Np-hardness of euclidean sum-of-squares clustering}, Machine Learning
  \textbf{75} (2009), no.~2, 245--248.

\bibitem[AMM17]{Angelidakis17}
Haris Angelidakis, Konstantin Makarychev, and Yury Makarychev, \emph{Algorithms
  for stable and perturbation-resilient problems}, Proceedings of the 49th
  Annual ACM SIGACT Symposium on Theory of Computing, STOC 2017, ACM, 2017,
  pp.~438--451.

\bibitem[ANFSW17]{Ahmadian17}
S.~Ahmadian, A.~Norouzi-Fard, O.~Svensson, and J.~Ward, \emph{Better guarantees
  for k-means and euclidean k-median by primal-dual algorithms}, 2017 IEEE 58th
  Annual Symposium on Foundations of Computer Science (FOCS), Oct 2017,
  pp.~61--72.

\bibitem[AS12]{AwasthiS12}
Pranjal Awasthi and Or~Sheffet, \emph{Improved spectral-norm bounds for
  clustering}, Approximation, Randomization, and Combinatorial Optimization.
  Algorithms and Techniques, Springer Berlin Heidelberg, 2012, pp.~37--49.

\bibitem[AV07]{ArthurV04}
David Arthur and Sergei Vassilvitskii, \emph{K-means++: The advantages of
  careful seeding}, Proceedings of the Eighteenth Annual ACM-SIAM Symposium on
  Discrete Algorithms, SODA '07, Society for Industrial and Applied
  Mathematics, 2007, pp.~1027--1035.

\bibitem[Bar15]{Barman15}
Siddharth Barman, \emph{Approximating nash equilibria and dense bipartite
  subgraphs via an approximate version of caratheodory's theorem}, Proceedings
  of the Forty-seventh Annual ACM Symposium on Theory of Computing, STOC '15,
  ACM, 2015, pp.~361--369.

\bibitem[BBG13]{BalcanBG13}
Maria-Florina Balcan, Avrim Blum, and Anupam Gupta, \emph{Clustering under
  approximation stability}, J. ACM \textbf{60} (2013), no.~2, 8:1--8:34.

\bibitem[Ben15]{Ben-David15}
Shai Ben{-}David, \emph{Computational feasibility of clustering under
  clusterability assumptions}, CoRR \textbf{abs/1501.00437} (2015).

\bibitem[BHW16]{BalcanHW16}
Maria-Florina Balcan, Nika Haghtalab, and Colin White, \emph{{k-Center
  Clustering Under Perturbation Resilience}}, 43rd International Colloquium on
  Automata, Languages, and Programming (ICALP 2016), Leibniz International
  Proceedings in Informatics (LIPIcs), vol.~55, Schloss
  Dagstuhl--Leibniz-Zentrum fuer Informatik, 2016, pp.~68:1--68:14.

\bibitem[BJK18]{BhattacharyaJK18}
Anup Bhattacharya, Ragesh Jaiswal, and Amit Kumar, \emph{Faster algorithms for
  the constrained k-means problem}, Theory of Computing Systems \textbf{62}
  (2018), no.~1, 93--115.

\bibitem[BL12]{BiluL12}
Yonatan Bilu and Nathan Linial, \emph{Are stable instances easy?},
  Combinatorics, Probability and Computing \textbf{21} (2012), no.~5,
  643–660.

\bibitem[BL16]{BalcanL16}
Maria{-}Florina Balcan and Yingyu Liang, \emph{Clustering under perturbation
  resilience}, {SIAM} J. Comput. \textbf{45} (2016), no.~1, 102--155.

\bibitem[BR14]{Ben-DavidR14}
Shalev Ben{-}David and Lev Reyzin, \emph{Data stability in clustering: {A}
  closer look}, Theor. Comput. Sci. \textbf{558} (2014), 51--61.

\bibitem[CAKM16]{Addad16}
V.~Cohen-Addad, P.~N. Klein, and C.~Mathieu, \emph{Local search yields
  approximation schemes for k-means and k-median in euclidean and minor-free
  metrics}, 2016 IEEE 57th Annual Symposium on Foundations of Computer Science
  (FOCS), Oct 2016, pp.~353--364.

\bibitem[CG18]{Chekuri18}
Chandra Chekuri and Shalmoli Gupta, \emph{{Perturbation Resilient Clustering
  for k-Center and Related Problems via LP Relaxations}}, Approximation,
  Randomization, and Combinatorial Optimization. Algorithms and Techniques
  (APPROX/RANDOM 2018) (Dagstuhl, Germany) (Eric Blais, Klaus Jansen, Jos{\'e}
  D.~P. Rolim, and David Steurer, eds.), Leibniz International Proceedings in
  Informatics (LIPIcs), vol. 116, Schloss Dagstuhl--Leibniz-Zentrum fuer
  Informatik, 2018, pp.~9:1--9:16.

\bibitem[Che08]{Chen08}
Ke~Chen, \emph{A constant factor approximation algorithm for k-median
  clustering with outliers}, Proceedings of the Nineteenth Annual ACM-SIAM
  Symposium on Discrete Algorithms (Philadelphia, PA, USA), SODA '08, Society
  for Industrial and Applied Mathematics, 2008, pp.~826--835.

\bibitem[Che09]{Chen09}
Ke~Chen, \emph{On coresets for k-median and k-means clustering in metric and
  euclidean spaces and their applications}, {SIAM} J. Comput. \textbf{39}
  (2009), no.~3, 923--947.

\bibitem[CKMN01]{Charikar01}
Moses Charikar, Samir Khuller, David~M. Mount, and Giri Narasimhan,
  \emph{Algorithms for facility location problems with outliers}, Proceedings
  of the Twelfth Annual ACM-SIAM Symposium on Discrete Algorithms
  (Philadelphia, PA, USA), SODA '01, Society for Industrial and Applied
  Mathematics, 2001, pp.~642--651.

\bibitem[DF09]{DasguptaF2009}
S.~Dasgupta and Y.~Freund, \emph{Random projection trees for vector
  quantization}, IEEE Transactions on Information Theory \textbf{55} (2009),
  no.~7, 3229--3242.

\bibitem[DLS18]{me}
Amit {Deshpande}, Anand {Louis}, and Apoorv~Vikram {Singh}, \emph{(1 +
  $\e$)-approximation for min-max $k$-means clustering}, Manuscript (2018).

\bibitem[DX14]{Ding14}
Hu~Ding and Jinhui Xu, \emph{Sub-linear time hybrid approximations for least
  trimmed squares estimator and related problems}, Proceedings of the Thirtieth
  Annual Symposium on Computational Geometry (New York, NY, USA), SOCG'14, ACM,
  2014, pp.~110:110--110:119.

\bibitem[DX15]{Ding15}
\bysame, \emph{A unified framework for clustering constrained data without
  locality property}, Proceedings of the Twenty-sixth Annual ACM-SIAM Symposium
  on Discrete Algorithms (Philadelphia, PA, USA), SODA '15, Society for
  Industrial and Applied Mathematics, 2015, pp.~1471--1490.

\bibitem[FKRS18]{FriggstadKRS18}
Zachary Friggstad, Kamyar Khodamoradi, Mohsen Rezapour, and Mohammad~R.
  Salavatipour, \emph{Approximation schemes for clustering with outliers},
  Proceedings of the Twenty-Ninth Annual ACM-SIAM Symposium on Discrete
  Algorithms (Philadelphia, PA, USA), SODA '18, Society for Industrial and
  Applied Mathematics, 2018, pp.~398--414.

\bibitem[FKS18]{Friggstad18}
Zachary Friggstad, Kamyar Khodamoradi, and Mohammad~R. Salavatipour,
  \emph{Exact algorithms and lower bounds for stable instances of euclidean
  $k$-means}, Proceedings of the Thirtieth Annual ACM-SIAM Symposium on
  Discrete Algorithms, 2018, pp.~2958--2972.

\bibitem[FMS07]{FeldmanMS07}
Dan Feldman, Morteza Monemizadeh, and Christian Sohler, \emph{A ptas for
  k-means clustering based on weak coresets}, Proceedings of the Twenty-third
  Annual Symposium on Computational Geometry, SCG '07, ACM, 2007, pp.~11--18.

\bibitem[FRS16]{Friggstad16}
Z.~Friggstad, M.~Rezapour, and M.~R. Salavatipour, \emph{Local search yields a
  ptas for k-means in doubling metrics}, 2016 IEEE 57th Annual Symposium on
  Foundations of Computer Science (FOCS), Oct 2016, pp.~365--374.

\bibitem[GKL{\etalchar{+}}17]{Gupta17}
Shalmoli Gupta, Ravi Kumar, Kefu Lu, Benjamin Moseley, and Sergei
  Vassilvitskii, \emph{Local search methods for k-means with outliers}, Proc.
  VLDB Endow. \textbf{10} (2017), no.~7, 757--768.

\bibitem[HPK05]{Peled05}
Sariel Har-Peled and Akash Kushal, \emph{Smaller coresets for k-median and
  k-means clustering}, Proceedings of the Twenty-first Annual Symposium on
  Computational Geometry, SCG '05, ACM, 2005, pp.~126--134.

\bibitem[HPM04]{Peled04}
Sariel Har-Peled and Soham Mazumdar, \emph{On coresets for k-means and k-median
  clustering}, Proceedings of the Thirty-sixth Annual ACM Symposium on Theory
  of Computing, STOC '04, ACM, 2004, pp.~291--300.

\bibitem[IKI94]{Inaba94}
Mary Inaba, Naoki Katoh, and Hiroshi Imai, \emph{Applications of weighted
  voronoi diagrams and randomization to variance-based k-clustering: (extended
  abstract)}, Proceedings of the Tenth Annual Symposium on Computational
  Geometry, SCG '94, ACM, 1994, pp.~332--339.

\bibitem[JKS14]{JaiswalKS14}
Ragesh Jaiswal, Amit Kumar, and Sandeep Sen, \emph{A simple $d^2$-sampling
  based ptas for k-means and other clustering problems}, Algorithmica
  \textbf{70} (2014), no.~1, 22--46.

\bibitem[JV01]{JainVazirani}
Kamal Jain and Vijay~V. Vazirani, \emph{Approximation algorithms for metric
  facility location and k-median problems using the primal-dual schema and
  lagrangian relaxation}, J. ACM \textbf{48} (2001), no.~2, 274--296.

\bibitem[KK10]{KumarK10}
Amit Kumar and Ravindran Kannan, \emph{Clustering with spectral norm and the
  k-means algorithm}, 51th Annual {IEEE} Symposium on Foundations of Computer
  Science (FOCS) 2010, October 23-26, 2010, 2010, pp.~299--308.

\bibitem[KLS18]{Krishnaswamy18}
Ravishankar Krishnaswamy, Shi Li, and Sai Sandeep, \emph{Constant approximation
  for k-median and k-means with outliers via iterative rounding}, Proceedings
  of the 50th Annual ACM SIGACT Symposium on Theory of Computing (New York, NY,
  USA), STOC 2018, ACM, 2018, pp.~646--659.

\bibitem[KMN{\etalchar{+}}04]{Kanungo04}
Tapas Kanungo, David~M. Mount, Nathan~S. Netanyahu, Christine~D. Piatko, Ruth
  Silverman, and Angela~Y. Wu, \emph{A local search approximation algorithm for
  k-means clustering}, Computational Geometry \textbf{28} (2004), no.~2, 89 --
  112, Special Issue on the 18th Annual Symposium on Computational Geometry -
  SoCG2002.

\bibitem[KSS04]{KumarSS04}
Amit Kumar, Yogish Sabharwal, and Sandeep Sen, \emph{A simple linear time
  $(1+\e)$-approximation algorithm for k-means clustering in any dimensions},
  45th Symposium on Foundations of Computer Science {(FOCS} 2004), 17-19
  October 2004, Rome, Italy, Proceedings, 2004, pp.~454--462.

\bibitem[MMV14]{MakarychevMV14}
Konstantin Makarychev, Yury Makarychev, and Aravindan Vijayaraghavan,
  \emph{Bilu–linial stable instances of max cut and minimum multiway cut},
  Proceedings of the Twenty-Fifth Annual ACM-SIAM Symposium on Discrete
  Algorithms, 2014, pp.~890--906.

\bibitem[MNV12]{MahajanNV2012}
Meena Mahajan, Prajakta Nimbhorkar, and Kasturi Varadarajan, \emph{The planar
  k-means problem is np-hard}, Theoretical Computer Science \textbf{442}
  (2012), 13 -- 21, Special Issue on the Workshop on Algorithms and Computation
  (WALCOM 2009).

\bibitem[TV10]{Telgarsky10}
Matus Telgarsky and Andrea Vattani, \emph{Hartigan's method: k-means clustering
  without voronoi}, Proceedings of the Thirteenth International Conference on
  Artificial Intelligence and Statistics, Proceedings of Machine Learning
  Research, vol.~9, PMLR, 13--15 May 2010, pp.~820--827.

\bibitem[Vat11]{Vattani2011}
Andrea Vattani, \emph{K-means requires exponentially many iterations even in
  the plane}, Discrete Comput. Geom. \textbf{45} (2011), no.~4, 596--616.

\bibitem[VDW17]{Dutta17}
Aravindan Vijayaraghavan, Abhratanu Dutta, and Alex Wang, \emph{Clustering
  stable instances of euclidean k-means.}, Advances in Neural Information
  Processing Systems 30, Curran Associates, Inc., 2017, pp.~6503--6512.

\bibitem[VKKR03]{Vega03}
W.~Fernandez de~la Vega, Marek Karpinski, Claire Kenyon, and Yuval Rabani,
  \emph{Approximation schemes for clustering problems}, Proceedings of the
  Thirty-fifth Annual ACM Symposium on Theory of Computing, STOC '03, ACM,
  2003, pp.~50--58.

\end{thebibliography}
\end{document}